\documentclass[11pt,reqno]{amsart}

\usepackage[dvipsnames]{xcolor}
\usepackage{comment}

\newtheorem{theorem}{Theorem}[section]
\newtheorem{proposition}[theorem]{Proposition}
\newtheorem{lemma}[theorem]{Lemma}

\theoremstyle{definition}

\newtheorem{remark}[theorem]{Remark}

\numberwithin{equation}{section}

\definecolor{ReferenceGreen}{HTML}{1F6B4F}
\definecolor{CitationRed}{HTML}{A23B3B}
\definecolor{URLBlue}{HTML}{315E85}

\usepackage[
  colorlinks=true,
  linkcolor=ReferenceGreen,
  citecolor=CitationRed,
  urlcolor=URLBlue,
  pdfborder={0 0 0}
]{hyperref}

\usepackage[
  nameinlink,
  capitalize,
  noabbrev
]{cleveref}

\usepackage[
  left=27mm,
  right=27mm,
  top=30mm,
  bottom=32mm
]{geometry}

\usepackage{amssymb,mathtools,mathrsfs}
\usepackage{enumitem}
\usepackage{booktabs,tabularx,array}
\usepackage{microtype}

\IfFileExists{newtxtext.sty}{%
  \usepackage{newtxtext,newtxmath}
}{%
  \usepackage{mathptmx}
}

\usepackage{parskip}
\newcommand{\bea}{\begin{eqnarray}}
\newcommand{\eea}{\end{eqnarray}}
\newcommand{\<}{\langle}
\renewcommand{\>}{\rangle}

\newcommand\eg{{\text{\eg~}}}

\def\eps{{\varepsilon}}

\def\spec{{\rm spec}}

\def\cF{{\mathcal F}}

\def\R{{\mathbb R}}

\def\de{{\rm d}}

\def\<{\langle}
\def\>{\rangle}

\def\cN{{\cal N}}

\def\cD{{\cal D}}

\def\b0{{\boldsymbol{0}}}

\def\spec{{\mathrm {spec}}}

\def\de{{\mathrm {d}}}

\def\cD{{{\mathcal D}}}

\def\cI{{\mathcal I}}

\renewcommand{\b}{\mathbf{b}}

\def\lt{\left}
\def\rt{\right}

\def\eps{\varepsilon}

\def\bbC{{\mathbb{C}}}

\def\bbR{{\mathbb{R}}}

\def\cF{{\mathcal{F}}}

\def\cN{{\mathcal{N}}}

\def\cQ{{\mathcal{Q}}}

\def\sC{{\mathscr{C}}}

\def\one{{\mathbbm{1}}}

\newcommand{\norm}[1]{{\lt\|#1\rt\|}}

\def\F{{\boldsymbol{F}}}

\newcommand{\ip}[2]{\left\langle #1,#2\right\rangle}

\renewcommand{\F}{\mathcal F}

\newcommand{\N}{\mathcal N}
\newcommand{\Pf}{P_{\mathrm f}}

\renewcommand{\one}{\mathbf 1}
\renewcommand{\spec}{\operatorname{spec}}
\newcommand{\essspec}{\operatorname{spec}_{\mathrm{ess}}}
\newcommand{\Ran}{\operatorname{Ran}}
\newcommand{\Ker}{\operatorname{Ker}}
\newcommand{\dd}{\,\mathrm d}
\renewcommand{\cD}{\mathcal D}
\renewcommand{\cQ}{\mathcal Q}
\renewcommand{\eps}{\varepsilon}

\newcommand{\Ialpha}{\mathcal I_\alpha}

\title{The Fr\"ohlich Polaron at large total momentum in $d=3$}
\author{Tobias Schmidt}
\address{Center of Mathematical Sciences and Applications, Harvard University, USA.}
\email{tobias.schmidt@cmsa.fas.harvard.edu}
\date{\today}
\subjclass[2020]{81Q10, 81V10, 47A10}
\keywords{Fr\"ohlich polaron, energy-momentum relation, ground-state absence, ultraviolet singularity, large momentum}

\begin{document}

\begin{abstract}
We consider the three-dimensional Fr\"ohlich polaron at fixed coupling strength $\alpha>0$ and total momentum $P \in \bbR^3$. We show that the bottom of the spectrum reaches the essential spectrum at a finite value of $|P|$, implying that the fiber Hamiltonian has no ground state at sufficiently large total momentum. This extends the known weak-coupling result to arbitrary coupling. Our approach describes states near the spectral threshold by a fully interacting polaron with small residual momentum and one additional phonon carrying almost all of the total momentum.
\end{abstract}

\maketitle

\section{Introduction and main results}
The Fr\"ohlich polaron models an electron interacting with the optical phonons of a polar crystal. The electron and the surrounding phonon cloud form a quasi-particle, whose energy depends on their conserved total momentum. In this paper, we study whether this quantum system has a ground state when the prescribed total momentum is large. We prove that, in three dimensions and at every fixed coupling strength, ground states do not exist for momenta sufficiently large. This problem goes back at least to works by Feynman \cite{Fe72} and later by Spohn \cite{Spohn1988}, where it was shown that a drastically different behavior emerges in $d=1,2$.

Let $\alpha>0$ denote the coupling strength between the phonons and the electron. At fixed total momentum $P\in\R^3$, the Fr\"ohlich Hamiltonian is given by
\begin{equation}\label{eq:intro-H}
 H(P,\alpha)=\frac12(P-\Pf)^2+\N+\Phi(g_\alpha),
 \qquad
 g_\alpha(k)=\frac{\sqrt\alpha}{\sqrt2\,\pi|k|}.
\end{equation}
In the above, $P_{\mathrm f}$ is the momentum of the phonon field, $\cN$ the phonon number operator, and $\Phi(g_\alpha)=a(g_\alpha)+a^*(g_\alpha)$ is the Segal field; see also Section \ref{subsec:fock-space-preliminaries}. In our convention the electron mass and the energy of a single phonon are equal to $1$. Since $g_\alpha$ is not square integrable at infinity, \eqref{eq:intro-H} is understood in the sense of quadratic forms. We recall the precise construction in Appendix~\ref{app:basic_froehlich} and write
\[
 E(P,\alpha)=\inf\spec H(P,\alpha).
\]
A ground state at momentum $P$ is defined to be an eigenvector with norm $1$ of $H(P,\alpha)$, with eigenvalue $E(P,\alpha)$. Suppressing $\alpha$ from the notation, we denote the ground state by $\psi_P$. Whenever it exists, it is uniquely determined up to a phase (i.e. the eigenspace of the lowest eigenvalue is $1$-dimensional) \cite{DybalskiSpohn}. The constant phonon dispersion-relation in the Fröhlich model has a simple consequence: a phonon can carry arbitrarily large momentum at an energy cost of $1$. The system can therefore place its momentum in an additional phonon while the remaining polaron stays close to rest. Separating this phonon from the electron leads to the energy $E(0,\alpha)+1$. The corresponding spectral statement is
\[
 \spec_{\mathrm{ess}}H(P,\alpha)=[E(0,\alpha)+1,\infty),
\]
valid for every $P\in\R^3$; see \cite[Lemma~1.1]{LampartMitrouskasMysliwy}. In particular, $E(P,\alpha)\leq E(0,\alpha)+1$. For small $|P|$, the energy is known to lie strictly below this threshold and is a simple, isolated eigenvalue \cite{DybalskiSpohn}. The question is whether the interaction can keep the energy below the threshold at arbitrarily large total momentum. More precisely, identifying the radial function $E(\,\cdot\,,\alpha)$ with a function of $p=|P|$, we consider
\[
 \cI_\alpha
 =\{p\geq0:E(p,\alpha)<E(0,\alpha)+1\}.
\]
Monotonicity of the energy implies that $\cI_\alpha$ is an interval containing a neighborhood of zero \cite{Polzer2023}. Its boundedness was established at sufficiently small coupling (meaning $\alpha \ll 1$) in \cite{Dahlbaek}; the question at arbitrary coupling is also discussed in \cite{Polzer2026}. Our main result is the following.
\begin{theorem}\label{thm:main}
For every $\alpha>0$, there exists $P_\alpha<\infty$ such that, for every $P\in\R^3$ with $|P|\ge P_\alpha$,
\begin{equation}\label{eq:main_goal}
 E(P,\alpha)=E(0,\alpha)+1.
\end{equation}
Moreover,
\begin{equation}\nonumber
 \Ker\bigl(H(P,\alpha)-E(0,\alpha)-1\bigr)=\{0\}.
\end{equation}
In particular, $H(P,\alpha)$ has no ground state at sufficiently large total momentum, and the interval $\Ialpha$ is bounded.
\end{theorem}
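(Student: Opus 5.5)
We will argue by contradiction and by a lower bound on the quadratic form of $H(P,\alpha)$ restricted to the spectral subspace below (and at) the threshold $\Sigma:=E(0,\alpha)+1$. Since $E(P,\alpha)\le\Sigma$ always holds, it suffices to show that for $|P|$ large there is no normalized $\psi$ with $H(P,\alpha)\psi=E\psi$ and $E\le\Sigma$. We use the pull-through formula. For such an eigenvector, formally
\[
 a_k\psi=-\bigl(H(P-k,\alpha)+1-E\bigr)^{-1}g_\alpha(k)\psi ,
\]
interpreted in the two-momentum fibered sense, i.e. with $H(P-k)$ acting on the remaining field. Since $H(P-k)+1-E\ge E(P-k)+1-\Sigma=E(P-k)-E(0)\ge0$, the only danger is where $E(P-k)$ is close to $E(0)$, i.e. $k$ near $P$. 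This is exactly the configuration described in the abstract: a polaron with small residual momentum $P-k$ and one phonon carrying almost all of $P$.

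\textbf{Step 1 (lower bound on the energy gap in the residual momentum).} We show $E(q,\alpha)-E(0,\alpha)\ge c\min(|q|^2,1)$ for some $c=c_\alpha>0$. Near $q=0$ this uses the simple isolated eigenvalue and the positive effective-mass inverse from \cite{DybalskiSpohn}, together with the monotonicity of \cite{Polzer2023}. Away from zero it follows from monotonicity and $E(q)>E(0)$, which in turn comes from rotation invariance plus strict monotonicity. We take these as known inputs.

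\textbf{Step 2 (number bound and contradiction).} From the pull-through formula,
\[
 \langle\psi,\N\psi\rangle=\int\|a_k\psi\|^2\dd k\ \ge\ \int_{|P-k|<1}\frac{g_\alpha(k)^2\,\|\Pi_k\psi\|^2}{(E(P-k)-E(0)+\ldots)^2}\dd k .
\]
To get a lower bound we need a lower bound on the overlap $\Pi_k\psi$ with the ground-state sector of $H(P-k)$. Rather than this, we adopt the better strategy of a \emph{localization} argument. We split the field via an IMS-type decomposition into phonons with momentum in a ball $|k|\le |P|/2$ and those outside, and we split $\psi$ into components with zero or with at least one ``hard'' phonon. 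The energy of the zero-hard-phonon component is at least $E(\cdot)$ of a system whose phonons are all soft, so the electron carries momentum at least $|P|/2$. This gives kinetic energy of order $|P|^2$ minus the interaction, which is controlled by the Lieb--Yamazaki bound $\Phi(g_\alpha)\ge-\eps(P-\Pf)^2-C_\eps\alpha^2-\ldots$ and therefore exceeds $\Sigma$ for large $|P|$. Components with at least one hard phonon have energy at least $1+E(P-k)\ge\Sigma+c\min(|P-k|^2,1)$ in the hard-phonon-localized picture. The error terms from the localization are the cross terms of $\Phi(g_\alpha)$ between the two regions and the commutator errors; since $g_\alpha(k)\sim|k|^{-1}$ is small for $|k|\sim|P|$, these decay in $|P|$. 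An eigenvalue $E\le\Sigma$ would then force the hard-phonon component to concentrate where $|P-k|\to0$, with a deficit of the form $\int g^2/|P-k|^2$, which is finite in $d=3$. This quantitative deficit is where $d=3$ enters, through $\int_{|q|<1}|q|^{-2}\dd q<\infty$ versus $d=1,2$.

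\textbf{Main obstacle.} The main obstacle is the ultraviolet problem. The cross terms $a(g_\alpha\one_{\rm hard})$ are not bounded in norm, since $g_\alpha\notin L^2$ at infinity. One must control them by a Gross/Lieb--Yamazaki transformation with cutoff adapted to $|P|$, i.e. write $a(g\one_{|k|>\Lambda})=[\ldots,P-\Pf]$ and bound it by $\Lambda^{-1/2}$ times kinetic energy. This should produce an error of order $|P|^{-1/2}$, which has to beat the $O(1)$ gap $c$ but also the small gap $c|P-k|^2$. We would handle this by using the explicit resolvent bound near the threshold from Step 1 and the finite integral above, obtaining
\[
 \langle\psi,(H(P)-\Sigma)\psi\rangle\ge(c'-C|P|^{-1/2})\|\psi_{\rm bad}\|^2>0 ,
\]
which rules out both $E<\Sigma$ and an embedded threshold eigenvalue.
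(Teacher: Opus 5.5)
There is a genuine gap, and it is exactly where you locate the ``main obstacle''. Consider the sector with one hard phonon at momentum $k$ near $P$ and the rest of the system in the ground state of the fiber at $P-k$. There the excess energy over $\Sigma=E_*$ is $E(P-k)-E(0)\approx|P-k|^2/(2m_{\mathrm{eff}})$, which tends to zero. These states form a Weyl sequence for $\Sigma\in\essspec H(P)$, so there is no uniform gap in that sector for any $|P|$.

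Every error your scheme produces is \emph{additive}. Replacing $H(P)$ by the model with cutoff at $\Lambda=|P|/2$, and bounding the cross terms of $\Phi(g\one_{\mathrm{hard}})$ by the Lieb--Yamazaki/Gross estimate, both cost $C|P|^{-1/2}\|(K(P)+1)^{1/2}\psi\|^2$. On near-threshold states this is of order $|P|^{-1/2}\|\psi\|^2$. Such a term can never be absorbed by $\int c|P-k|^2\|\psi(k)\|^2\,\dd k$ for $\psi$ concentrated at $k=P$, however large $|P|$ is.

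Your appeal to ``the explicit resolvent bound near the threshold and the finite integral'' supplies no mechanism for this. Your closing inequality with a uniform $c'>0$ on $\psi_{\mathrm{bad}}$ tacitly assumes that the near-threshold part contributes nonnegatively and decouples from the rest, which is the entire difficulty. Even granting it, at $E=\Sigma$ it only yields $\psi_{\mathrm{bad}}=0$ and leaves a purely near-threshold eigenvector unexcluded. To exclude it you must keep that part's strictly positive excess energy in the lower bound.

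Separately, your zero-hard-phonon step is false as stated. With the split at $|P|/2$, two or more soft phonons can carry the momentum. On each sector you only learn that $|P-\Pf|\ge|P|/2$ or $\N\ge2$, and in the latter case Lieb--Yamazaki gives nothing close to $\Sigma+\eta$.

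The missing idea is to make every coupling out of the near-threshold sector \emph{relative} to that sector's own excess energy. The paper builds this sector from the fully interacting (not cutoff) ground states, $J_Pf=\int_{B_r}f(q)a^*(P-q)\psi_q\,\dd q$. It then proves the exact identity $(H(P)-E_*)J_P=J_PM_h+S_P$, with $S_Pf=\int_{B_r}f(q)g(P-q)\psi_q\,\dd q$. So the only leakage is $S_Pf$, and $\|S_Pf\|\le\beta_P\langle f,M_hf\rangle^{1/2}$ with $\beta_P^2=\int_{B_r}|g(P-q)|^2h(q)^{-1}\,\dd q=O(|P|^{-2})$. This is the precise form of your remark that $|q|^{-2}$ is integrable in $d=3$.

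Two further ingredients are needed:
\begin{itemize}
\item The bound $\|J_P^*J_P-1\|=O(\beta_P^2)$, which handles the exchange term coming from $\psi_q$ carrying phonons at momenta $P-q'$. This lets you pass to an isometry and to the orthogonal projection onto $\sC_P$.
\item A gap $\eta>0$ on $\sC_P^\perp$, uniform in large $|P|$. Cutoff comparisons are used only here, where additive errors are harmless. The soft/hard split is at a \emph{fixed} $\Lambda$, so zero-hard-phonon states need about $|P|/\Lambda$ phonons and their energy diverges.
\end{itemize}
Writing $\xi=U_Pf+y$ and applying Young's inequality then gives $A_P[\xi]\ge\frac12\bigl(\langle f,M_hf\rangle+\eta\|y\|^2\bigr)$. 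This excludes both $E(P)<\Sigma$ and an eigenvector at $\Sigma$.
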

The fact that no ground state can exist at the spectral edge for momenta outside the closure of $\cI_\alpha$ is already known (see e.g. \cite{Polzer2023}). We included the statement again because we provide an alternative proof that we find interesting in its own right, and because it does not require much effort. The quantifiers in Theorem~\ref{thm:main} should be noted: $\alpha$ is fixed first, and the required threshold $P_\alpha$ may depend strongly on $\alpha$. We unfortunately do not obtain a useful estimate on $P_\alpha$. Quantitative bounds would be interesting for obtaining a better understanding of this transition; see also \cite{HinrichsLampart2024} for a  discussion related to similar concepts in the Bose polaron and the so-called Cherenkov transition. We also note that our proof does not use the independent gap-closing result
\[
 E(P,\alpha)\longrightarrow E(0,\alpha)+1
 \qquad(|P|\to\infty),
\]
which can be deduced by combining the cutoff gap-closing result from \cite{Moller} together with uniform cutoff convergences of the $E_\Lambda (P) \to E(P)$ obtained in \cite{LampartMitrouskasMysliwy}. Instead, we use the Rayleigh-Ritz principle and spectral estimates to show that all states have an energy bounded below by $E(0,\alpha)+1$ for sufficiently large $|P|$. This, together with the well-known inequality $E(P,\alpha) \le E(0,\alpha)+1$ then shows that there exists a finite point of saturation, meaning $E(P,\alpha) = E(0,\alpha)+1$ is achieved eventually.

\begin{remark}
    Our proof is essentially dimension-independent, and only requires integrability of $|k|^{-2}$ around $0$. We are confident that an extension of our approach to $d >3$ is possible in case one considers the coupling function (up to a constant chosen so that the Fourier transform of its square evaluates to $\alpha/|x|$)
    $$g(k) = \sqrt{\alpha }|k|^{-(d-1)/2};$$
    see also \cite{PeetersWuDevreese1986,Moller}, where this generalization of the three-dimensional Fr\"ohlich model was introduced.
    We are also convinced that the critical threshold $P_\alpha$ is non-increasing in the dimension, which can be seen using path integral methods. Since the inputs \ref{I:hvz}-\ref{I:gross} are not all available in general dimensions, we postpone an extension to higher dimensions to future work. 
\end{remark}

\subsection{Related work}
The large-momentum problem has a long history, see in particular the classical works \cite{Fe72,Spohn1988,Moller}. In \cite{Moller}, the author established spectral and large-momentum results for a class of translation-invariant particle-field models with square-integrable coupling functions. In particular, his results imply convergence of the energy to the continuum threshold for suitable ultraviolet regularizations of the Fr\"ohlich model
\cite[Theorem~2.4]{Moller}. Compared to that, the work \cite{Spohn1988} studied the Fr\"ohlich model in $d=1,2$. Spohn established that a spectral gap always persists, meaning that a ground state exists at arbitrarily large momentum. He noted that different behavior can be expected at $d=3$ at large values of $|P|$, which is exactly what we verify.  For the physical Fr\"ohlich interaction in three dimensions, the closest predecessor to
Theorem~\ref{thm:main} is the weak-coupling analysis in \cite{Dahlbaek}. The author proved that there are no discrete eigenvalues below the essential spectrum at sufficiently large total momentum when the coupling is sufficiently small; see \cite[Theorem~1.3(2)(d)]{Dahlbaek}. His approach is perturbative in nature and reduces the spectral problem to generalized Friedrichs models by means of Feshbach operators and the Birman-Schwinger principle. Theorem \ref{thm:main} removes the small-coupling assumption for the Fr\"ohlich model. The corresponding low-energy states in our argument contain a fully interacting polaron with small residual momentum instead of the Fock vacuum and one additional phonon carrying almost all of the total momentum.

Related one-boson constructions to the one we present in this work appear in spectral and scattering theory. For example, the work \cite{AngelescuMinlosZagrebnov2005} studies particle-field Hamiltonians in dimensions $d\geq3$ at weak coupling, under smoothness and rapid-decay assumptions on the interaction. Below a prescribed energy strictly below the two-boson threshold, they constructed invariant subspaces describing a polaron and one additional boson, on which the Hamiltonian is
unitarily equivalent to a multiplication operator. 
In \cite{MollerRasmussen2013}, the continuous spectrum below the two-boson threshold for a class of translation-invariant massive particle-field models is studied. \cite{DybalskiMoller2015} subsequently proved asymptotic completeness in this energy region, without restrictions on the coupling strength or total momentum. Their class includes suitable ultraviolet regularizations of the Fr\"ohlich model. The scattering construction combines a dressed particle and an additional boson through an identification map. Restricted to the ground-state branch at fixed total momentum and to small residual momenta, this map has the form of $J_P$ used below. The associated channel energy is the sum of the polaron energy and the boson energy, namely $E(q,\alpha)+1$ in the Fr\"ohlich case.
Our proof uses a similar construction for a stationary estimate at large total momentum with the full Fr\"ohlich interaction. For every fixed coupling, we control the orthogonal projection onto the low-energy states and the coupling to the orthogonal complement relative to the energy $E(q)-E(0)$, while also establishing a uniform positive gap on that complement. These estimates give the exact lower bound $E(0)+1$ at sufficiently large momentum. 

We rely on the substantial amount of work on the energy-momentum relation of the polaron, as several structural properties are essential for our proof.  \cite{LampartMitrouskasMysliwy} proved that the energy has a strict global minimum at momentum zero. In the work \cite{DybalskiSpohn} the analytic dependence of the isolated ground state on momentum and the relation between the effective mass and the diffusion constant of the polaron path measure are discussed. Polzer \cite{Polzer2023} proved monotonicity of the energy, concavity as a function of $|P|^2$, and finiteness of the effective mass at every coupling. He also proved that there is no ground state when $|P|$ lies outside the closure of $\mathcal I_\alpha$. Thus, once the boundedness of this interval is established, absence of ground states at larger momenta already follows from his result. Our argument proves boundedness and directly excludes a threshold eigenvector for all sufficiently large $|P|$. More recently, \cite{DesioSeiringer} also developed a Dyson expansion for relatively form-bounded perturbations and extended the concavity result to a larger class of polaron-type models. In \cite{Polzer2026}, the author additionally obtained lower bounds on the energy-momentum relation in terms of the vacuum overlap and the spectral gap. A probabilistic criterion equivalent to the boundedness of $\cI_\alpha$ was derived, as well. These results provide further information about the energy-momentum relation but do not establish finite-momentum saturation for the Fr\"ohlich model at arbitrary coupling.

Another direction of research concerns the strong-coupling limit. Brooks and Seiringer \cite{BrSe24} proved an asymptotically sharp lower bound on the energy-momentum relation. Together with the corresponding upper bound, this identifies its approximately parabolic shape below the continuum threshold and the Landau-Pekar effective-mass scale. 
Determining the dependence of the critical momentum on $\alpha$ would connect these two questions more closely. We have unfortunately nothing to say on the problem, however.

Finally, one can ask related questions about a critical momentum also for an impurity in a Bose-Einstein condensate. The work
\cite{HinrichsLampart2024} proved a lower bound on the momentum up to which a ground state exists in the Bogoliubov-Fr\"ohlich model. This provides related motivation for studying the loss of a stable quasiparticle at large momentum. The dispersion relation in that model is different, however, and its massless excitations lead to a different threshold problem.

\subsection{Proof strategy}
In this section we want to briefly detail our approach to verify Theorem \ref{thm:main}. Here, we draw inspiration from two previous works on the Fr\"ohlich model. First, we were motivated by the work \cite{BrSe24}, in which the authors verify the asymptotically correct upper bound scaling of the effective mass for the Fr\"ohlich model by first restricting all the trial states they need to consider to spatial mixtures of Pekar minimizers. Second, the work \cite{Dahlbaek} used analytic perturbation theory to reduce the Fr\"ohlich model in the small coupling limit $\alpha \to 0$ to a generalized Friedrichs model by arguing that all low-energy states look like a single hard phonon\footnote{By a hard phonon we simply mean a phonon with (comparably) large momentum. We use the term 'hard' in analogy to the physics literature}. The main difficulty is that at coupling strength of $O(1)$ such states are not expected to capture the correct low energy states, because the electron should carry a polarization cloud around with itself. It can, however, be expected that almost all of the momentum should be carried by a single hard phonon, with the remainder of the system sitting in a low-momentum polaron. One is therefore led to consider states of the form
\begin{equation}
    J_Pf =\label{equ:low_energy_states}
    \int \de q f(q) a^*(P-q) \psi_q,
\end{equation}
where $f: \bbR^3 \to \bbC$ should be thought of as a mixture function narrowly concentrated at $0$. Formally taking $f= \delta_0$ recovers the single hard phonon carrying all momentum of the system. Let now $\sC_P$ be the space of all of those vectors with $f \in L^2 (B_r)$. Heuristically speaking 
$$H(P,\alpha) J_P f = J_P fh + O(1/|P|),$$
where $h(q) = E(q,\alpha) +1$ by a direct calculation. Ignoring the error term for a second, this means that $H(P,\alpha)$ leaves $\sC_P$ invariant.
Next, assume that the projection onto $\sC_P$ is simply given by $\Pi_P = J_P J_P ^*$. If one could now also show that $H(P,\alpha)$ is coercive on $\sC_P ^\perp$ with constant at least $E(0,\alpha)+1+\epsilon$, one obtains for any normalized Fock vector $\phi$
\begin{equation}
    \label{equ:heuristic_energy_calculation}
    \begin{split}
        \nonumber
        \langle \phi, H(P,\alpha) \phi \rangle &= \langle \Pi_P \phi, H(P,\alpha) \Pi_P \phi    \rangle + \langle (1-\Pi_P) \phi , H(P,\alpha) (1-\Pi_P) \phi \rangle \\
        &= (1-\delta)\int \de q |f_{\phi}(q)|^2 h(q) + \langle (1-\Pi_P) \phi , H(P,\alpha) (1-\Pi_P) \phi \rangle \\
        &\ge (1-\delta)\int \de q |f_{\phi}(q)|^2 h(q) + \delta (E(0,\alpha)+1+\epsilon), 
    \end{split}
\end{equation}
where $\delta = \Vert (1-\Pi_P)\phi \Vert^2 \in [0,1]$ and $f_\phi$ is the normalized mixing function associated to $\phi$. Because $E(q,\alpha) \ge E(0,\alpha)$ as described earlier, we find
$$\langle \phi,H(P,\alpha) \phi \rangle \ge (1-\delta)(E(0,\alpha)+1) + \delta ( E(0,\alpha) + 1 + \epsilon) \ge E(0,\alpha)+1,$$
verifying that the spectral gap closes at finite $|P|$. The fact that no ground state can exist at $E(0) +1$ can also be inferred from \eqref{equ:low_energy_states}: because $E(q,\alpha) > E(0,\alpha)$ on $B_r (0)\backslash \{0\}$, only $f= \delta_0$ can attain the optimal spectral value.

The trouble is that $J_PJ_P^*\neq1$ on $\sC_P$, meaning that computing the energy of the (presumably) low-energy states is not as easy as the simplified calculation above suggests. We solve this problem in Section \ref{sec:projection_construction} and \ref{sec:channel_energy_estimates}, where it is shown that $J_P J_P ^*$ is almost the projection and energies can be calculated fairly explicitly. We obtain multiplicative errors that become small in the large total momentum regime, which is sufficient for our purposes. Second, a major problem is that $H(P,\alpha)$ couples $\sC_P$ to its orthogonal complement, so even small errors may lower the energy below the threshold $E(0,\alpha)+1$. We control these errors relative to the energies above the threshold, with coefficients tending to zero as $|P|\to\infty$, allowing us to absorb them into the diagonal terms. The obtained bounds rely in an essential manner on the local integrability of $|q|^{-2}$, which holds precisely for $d\ge3$; for further details we refer to Section \ref{sec:channel_energy_estimates}.
Finally, it is also a non-trivial undertaking to lower-bound $H(P,\alpha)$ on $\sC_P ^\perp$ (see Section \ref{sec:orthogonal_to_channel}). Our strategy will be to analyze subspaces of $\sC_P ^\perp$ and verify the increased energy on each of them separately. To this end, we first introduce a large ultraviolet cutoff and separate phonons with momenta below the threshold and above it. The latter do not interact with the electron in the correspondingly chosen cutoff model. $\sC^\perp _{\Lambda,P}$, meaning $\sC^\perp _P$ taken with cutoff ground states $\psi_{\Lambda,q}$, then splits into four sectors, each with a different reason for having an energy gap. First, if no phonon lies above the cutoff, a large total momentum requires either a large electron kinetic energy or many phonons below the cutoff. Second, if at least two phonons lie above the cutoff, their number already costs an additional unit of energy relative to the one-phonon threshold, disqualifying them as proper low-energy states. It remains to consider states with exactly one phonon above the cutoff. When the remaining momentum not carried by the outer phonon lies outside a small neighborhood of zero, the energy gap follows from the strict minimum of the polaron energy at zero. When the polaron momentum lies inside this neighborhood, orthogonality to $\sC_{\Lambda,P}$ forces the remaining state to be orthogonal to the local polaron ground state, and the excitation gap at small momentum supplies the required lower bound. Finally, we compare the cutoff Hamiltonian and channel projection with their counterparts in the full model. Choosing the cutoff large enough and then the total momentum sufficiently large preserves a positive gap on $\sC_P^\perp$. Here, it is crucial that the ground state energies converge uniformly in total momentum as the cutoff is removed; see also \cite{LampartMitrouskasMysliwy} and \eqref{eq:uniform-energy-convergence}.

The remainder of this manuscript is organized as follows. In Section \ref{sec:prelims} we introduce the Fock space and all basic objects that we will work with throughout. We also provide a list of items that we require from (recently) established literature and prove a pull-through formula that will be useful throughout. Section \ref{sec:one_phonon_channel} constructs the states in \eqref{equ:low_energy_states} and verifies basic properties of $\sC_P$. In the following Section \ref{sec:projection_construction} we deal with the problem that $J_P J_P ^*$ is only almost the projection onto $\sC_P$. We then continue in Section \ref{sec:channel_energy_estimates} with estimating $H(P,\alpha)$ on vectors of the form $J_P f$ and potential cross-terms. After that, Section \ref{sec:orthogonal_to_channel} verifies the coercive estimate needed for $H(P,\alpha)$. Finally, we combine all obtained estimates to prove Theorem \ref{thm:main}. Appendix \ref{app:basic_froehlich} collects known properties of the Fr\"ohlich model in our notation and in the precise form we need them.

\section{Preliminaries}
\label{sec:prelims}

\subsection{Fock space and field operators}
\label{subsec:fock-space-preliminaries}

The one-phonon Hilbert space is $\mathfrak h=L^2(\mathbb R^3,\mathrm dk)$, and the bosonic Fock space over $\mathfrak h$ is
\[
  \mathcal F=\Gamma_{\mathrm s}(\mathfrak h)
  =\bigoplus_{n=0}^{\infty}\mathcal F^{(n)},
  \qquad
  \mathcal F^{(n)}=L^2_{\mathrm{sym}}((\mathbb R^3)^n),
  \qquad
  \mathcal F^{(0)}=\mathbb C.
\]
Here $L^2_{\mathrm{sym}}((\mathbb R^3)^n)$ consists of square-integrable functions invariant under permutations of their arguments. Thus, a vector $\xi\in\mathcal F$ is a sequence $(\xi^{(n)})_{n\geq0}$ with
\[
  \|\xi\|_{\mathcal F}^2
  =\sum_{n=0}^{\infty}\|\xi^{(n)}\|_{\mathcal F^{(n)}}^2<\infty.
\]
We denote the vacuum by $\Omega=(1,0,0,\ldots)$ and the dense subspace of vectors with only finitely many nonzero components by
$\mathcal F_{\mathrm{fin}}$. For a real-valued measurable function $\omega$ on $\mathbb R^3$, its second quantization $\mathrm d\Gamma(\omega)$ is the self-adjoint operator given by
\[
  \bigl(\mathrm d\Gamma(\omega)\xi\bigr)^{(n)}(k_1,\ldots,k_n)
  =\left(\sum_{j=1}^n\omega(k_j)\right)
    \xi^{(n)}(k_1,\ldots,k_n),
\]
with its maximal domain in $\mathcal F$; the sum is zero on the vacuum sector. In particular, the phonon number and momentum operators are
\[
  \mathcal N=\mathrm d\Gamma(1),
  \qquad
  P_{\mathrm f}=\mathrm d\Gamma(k),
\]
where the latter is understood component-wise. On the $n$-phonon sector, these operators act as multiplication by $n$ and $k_1+\cdots+k_n$, respectively. For $f\in\mathfrak h$, the annihilation and creation operators are initially defined on $\mathcal F_{\mathrm{fin}}$ by
\[
  \bigl(a(f)\xi\bigr)^{(n)}(k_1,\ldots,k_n)
  =\sqrt{n+1}\int_{\mathbb R^3}\overline{f(k)}\,
    \xi^{(n+1)}(k,k_1,\ldots,k_n)\,\mathrm dk,
  \qquad n\geq0,
\]
and
\[
  \bigl(a^*(f)\xi\bigr)^{(n)}(k_1,\ldots,k_n)
  =\frac{1}{\sqrt n}\sum_{j=1}^n f(k_j)
    \xi^{(n-1)}(k_1,\ldots,\widehat{k_j},\ldots,k_n),
  \qquad n\geq1,
\]
with $(a^*(f)\xi)^{(0)}=0$. The hat in the previous display denotes an omitted variable. In what follows, we may also write $\mathbf{k}/k_j$ to denote the omitted variable $k_j$. We use the same notation for the closures of creation and annihilation operators, which satisfy $a^*(f)=a(f)^*$. With our convention for inner products, $a(f)$ is anti-linear and $a^*(f)$ is linear in $f$. On $\mathcal F_{\mathrm{fin}}$, the canonical commutation relations read
\[
  [a(f_1),a^*(f_2)]=\langle f_1,f_2\rangle\,\mathbf 1,
  \qquad
  [a(f_1),a(f_2)]=[a^*(f_1),a^*(f_2)]=0,
  \qquad f_1,f_2\in\mathfrak h.
\]
Both operators are defined on $\mathcal D(\mathcal N^{1/2})$, and
\[
  \|a(f)\xi\|\leq\|f\|\,
    \|\mathcal N^{1/2}\xi\|,
  \qquad
  \|a^*(f)\xi\|\leq\|f\|\,
    \|(\mathcal N+1)^{1/2}\xi\|.
\]

The field operator $\Phi(f)$ is the self-adjoint closure of $a(f)+a^*(f)$ on $\mathcal F_{\mathrm{fin}}$. In particular, for $\xi\in\mathcal D(\mathcal N^{1/2})$ and $\varepsilon>0$,
\[
  |\langle\xi,\Phi(f)\xi\rangle|
  \leq 2\|f\|\,
    \|\mathcal N^{1/2}\xi\|\,\|\xi\|
  \leq\varepsilon\|\mathcal N^{1/2}\xi\|^2
    +\varepsilon^{-1}\|f\|^2\|\xi\|^2.
\]

\subsection{Nomenclature, fixed constants and required inputs}
Let us  now fix some conventions that will be used in all that follows. We fix $\alpha>0$ throughout most of the proof and drop it from the notation, meaning
\[
 H(P)=H(P,\alpha),\qquad E(P)=E(P,\alpha),\qquad E_0=E(0,\alpha),\qquad E_*=E_0+1, \qquad g= g_\alpha.
\]

We write $B_r (x)=\{q\in\R^3:|q-x|<r\}$, and $M_h$ for multiplication by a function $h$ on $L^2(B_r)$.  Inner products are anti-linear in the first variable and linear in the second.  If $A$ is semibounded, $A[x]$ denotes its closed quadratic form evaluated at $x$.  Operator inequalities involving unbounded operators are understood in the quadratic-form sense unless an operator-domain statement is explicitly made. Whenever a norm does not have a subscript, it is naturally specified from context.
For $\Lambda<\infty$, put
\[
 g_\Lambda=g\one_{\{|k|\le \Lambda\}},
 \qquad
 H_\Lambda(P)=K(P)+\Phi(g_\Lambda),
 \qquad
 K(P)=\frac12(P-\Pf)^2+\N.
\]
The common form domain of the free fiber Hamiltonians $K(P)$ introduced above can be written as
\[
  \mathcal Q=\mathcal D(|P_{\mathrm f}|)
    \cap\mathcal D(\mathcal N^{1/2}),
  \qquad
  K(P)[\xi]=\frac12\|(P-P_{\mathrm f})\xi\|^2
    +\|\mathcal N^{1/2}\xi\|^2.
\]
These definitions apply directly to the cutoff interactions $\Phi(g_\Lambda)$, since $g_\Lambda\in\mathfrak h$. For the full Fr\"ohlich coupling, $g\notin\mathfrak h$ and the Hamiltonian is obtained as the limit of cutoff quadratic forms; see Appendix~\ref{app:basic_froehlich} for partial results and \cite{BrooksEffectiveMass} for a more detailed account.

Our proof uses the following established facts.
\begin{enumerate}[label=\textbf{(I\arabic*)},leftmargin=3.5em]
\item\label{I:hvz} For every $P\in\R^3$,
\[
 \essspec H(P)=[E(0)+1,\infty).
\]
The full interval statement is Lemma~1.1 (i) in \cite{LampartMitrouskasMysliwy}, after the kinetic normalization change described below. 

\item\label{I:geometry} The function $E(P)$ is radial and non-decreasing in $|P|$, has a strict global minimum at $P=0$, and is strictly increasing as long as $E(P)<E_*$.  The strict global minimum is Theorem~1.2 in \cite{LampartMitrouskasMysliwy}; monotonicity and strict monotonicity below threshold are Theorem~1(i) in \cite{Polzer2023}.

\item\label{I:local} $\cI_\alpha$ contains a neighborhood of zero, and on that set $E(P)$ is a simple isolated eigenvalue with real-analytic eigenprojection and energy.  These statements are collected in Lemma~4.1(2), (3), and (6) of \cite{DybalskiSpohn}.  Moreover,
\[
 E(q)-E(0)=\frac{|q|^2}{2m_{\mathrm{eff}}}+o(|q|^2),
 \qquad 1<m_{\mathrm{eff}}<\infty,
\]
where the all-coupling finiteness and strict positivity are Corollary~7 in \cite{Polzer2023}.

\item\label{I:cutoff} Sharp ultraviolet cutoffs converge in norm-resolvent sense uniformly in the fiber momentum.  This is Proposition~A.2 in \cite{LampartMitrouskasMysliwy}.

\item\label{I:gross} For every regular cutoff model, the fiber energy is minimized at momentum zero.  In the present kinetic normalization this is item~0 of Lemma~3.1 in \cite{DybalskiSpohn}. 
\end{enumerate}

\begin{remark}\label{rem:normalization}
In ~\cite{LampartMitrouskasMysliwy}, the operator $(P-P_{\mathrm f})^2$ is used, whereas \eqref{eq:intro-H} and \cite{DybalskiSpohn,Polzer2023} use $\frac12(P-P_{\mathrm f})^2$.  One can map one convention to the other by a scaling argument.
Because all quoted assertions hold for every positive coupling, their qualitative content is invariant under this conversion.  More importantly, all ultraviolet estimates actually used below are reproved in our convention in Proposition~\ref{prop:form-comparison}.
\end{remark}

We now explain carefully how the uniform local constants are chosen.  Let $p_q$ denote the rank-one ground-state projection.  First decrease $r>0$ so
that $\overline{B_{2r}}$ lies in $\cI_\alpha$ and $\|p_q-p_0\|<1/2$ there. The formula
\[
 \psi_q=\frac{p_q\psi_0}{\|p_q\psi_0\|}
\]
then gives a norm continuous map in $q$ on the whole ball. The quadratic expansion in input~\ref{I:local} and the positivity of $m_{\mathrm{eff}}$ give a constant $c>0$ after decreasing $r$ once in the sense that we may choose $r,c,\gamma>0$ such that
\begin{align}
 h(q)=E(q)-E(0)&\ge c|q|^2,\nonumber\\
 H(q)-E(q)&\ge \gamma\bigl(1-|\psi_q\rangle\langle\psi_q|\bigr),\nonumber\\
 \sup_{|q|\le2r}h(q)&\le\frac12.\nonumber
\end{align}
The third bound follows from continuity and $h(0)=0$.  For the second, every $q\in\overline{B_{2r}}$ has a strictly positive gap above its simple isolated ground state. Norm-resolvent continuity prevents this gap from closing under a small change of $q$, which gives the uniform constant $\gamma>0$.

\subsection{A pull-through formula}

For $\xi \in \mathcal D(\mathcal N^{1/2})$, we define the annihilation field component-wise by
\begin{equation}\label{eq:annihilation_components}
    (a(k)\xi)^{(m)}(k_1,\ldots,k_m)
    =
    \sqrt{m+1}\,
    \xi^{(m+1)}(k,k_1,\ldots,k_m),
    \qquad m\ge0.
\end{equation}
Here $m$ is the number of particles remaining after annihilation. These formulas define an element of $L^2(\mathbb R^3;\mathcal F)$, and
\[
    \int_{\mathbb R^3} \de k\,
    \|a(k)\xi\|^2
    =
    \|\mathcal N^{1/2}\xi\|^2.
\]
The field $a(k)\xi$ is therefore understood for almost every $k$. For a bounded measurable set $K\subset\R^3$, put
\[
 \N_K=\mathrm d\Gamma(\one_K)
\]
and define the localized annihilation operator by
\[
 A_K\xi=(k\mapsto\one_K(k)a(k)\xi),
 \qquad
 \cD(A_K)=\cD(\N_K^{1/2}).
\]
The goal of this subsection is to prove the following Lemma.
\begin{lemma}[Pull-through formula]\label{lemma:pull}
For every $|q|\le2r$,
\begin{equation}\label{eq:ptr}
 a(k)\psi_q
 =-g(k)\bigl(H(q-k)+1-E(q)\bigr)^{-1}\psi_q
\end{equation}
locally in $L^2(\R^3;\F)$ and hence for almost every $k$.  For every bounded $K\subset\R^3$,
\begin{equation}\label{eq:graphconv}
 \sup_{|q|\le2r}\norm{A_K(\psi_{\Lambda,q}-\psi_q)}_{L^2(K;\F)}\longrightarrow0.
\end{equation}
Moreover,
\begin{equation}\label{eq:ptrbound}
 \norm{a(k)\psi_q}\le2|g(k)|
\end{equation}
for almost every $k$ and every $|q|\le2r$.
\end{lemma}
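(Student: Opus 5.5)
We prove the statement first for the cutoff models $H_\Lambda(q)$ and then pass to the limit $\Lambda\to\infty$ using \ref{I:cutoff}. For $\Lambda<\infty$ the interaction $\Phi(g_\Lambda)$ is $\N^{1/2}$-bounded, so the standard pull-through relation
\[
 a(k)\,H_\Lambda(q)=\bigl(H_\Lambda(q-k)+1\bigr)a(k)+g_\Lambda(k)
\]
holds. It follows from $a(k)\N=(\N+1)a(k)$, $a(k)\Pf=(\Pf+k)a(k)$, which gives $a(k)(q-\Pf)^2=(q-k-\Pf)^2a(k)$, and $[a(k),a^*(g_\Lambda)]=g_\Lambda(k)$. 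We verify it weakly: pair both sides with $\eta\in\F_{\mathrm{fin}}$ in a core and $\xi$ in the form domain, and integrate against a test function in $k$.

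Let $\psi_{\Lambda,q}$ be the cutoff ground state with energy $E_\Lambda(q)$, normalized by the same projection formula relative to $\psi_{\Lambda,0}$. Applying the relation to $H_\Lambda(q)\psi_{\Lambda,q}=E_\Lambda(q)\psi_{\Lambda,q}$ gives $(H_\Lambda(q-k)+1-E_\Lambda(q))a(k)\psi_{\Lambda,q}=-g_\Lambda(k)\psi_{\Lambda,q}$. By \ref{I:gross}, $E_\Lambda(q-k)\ge E_\Lambda(0)$. For $\Lambda$ large, uniform cutoff convergence of energies together with $\sup_{|q|\le2r}h(q)\le1/2$ gives $E_\Lambda(q)-E_\Lambda(0)\le 1/2+o(1)\le 3/4$ uniformly in $|q|\le2r$. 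Hence $H_\Lambda(q-k)+1-E_\Lambda(q)\ge 1/4$ is invertible, and the cutoff version of \eqref{eq:ptr} follows.

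For the bound \eqref{eq:ptrbound}, the same estimate gives a resolvent norm at most $1/(1-h(q))\le2$ in the full model. Indeed, $H(q-k)\ge E(0)$ by \ref{I:geometry} and $E(q)-E(0)\le 1/2$, so $\|(H(q-k)+1-E(q))^{-1}\|\le 2$. Thus \eqref{eq:ptrbound} follows immediately once \eqref{eq:ptr} is known.

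To remove the cutoff we use three facts. First, norm-resolvent convergence \ref{I:cutoff}, uniform in the fiber momentum, gives $E_\Lambda(q)\to E(q)$ and $p_{\Lambda,q}\to p_q$ in norm. By the explicit normalization, this yields $\psi_{\Lambda,q}\to\psi_q$ uniformly in $|q|\le 2r$. Second, since the spectral parameter $E_\Lambda(q)-1$ stays at distance at least $1/4$ below $\inf\spec H_\Lambda(q-k)$, resolvent convergence transfers to this point uniformly in $k$ and $q$. Concretely, we write the resolvent at $z=E_\Lambda(q)-1$ through the first resolvent identity relative to a fixed reference point and use functional calculus; this gives $\|(H_\Lambda(q-k)+1-E_\Lambda(q))^{-1}-(H(q-k)+1-E(q))^{-1}\|\to0$ uniformly. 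Third, $g_\Lambda\to g$ pointwise and in $L^2(K)$ for bounded $K$, since $|k|^{-2}$ is locally integrable.

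Combining these, the right-hand side $F_\Lambda(k)=-g_\Lambda(k)R_\Lambda(k)\psi_{\Lambda,q}$ converges in $L^2(K;\F)$, uniformly in $q$, to $F(k)=-g(k)(H(q-k)+1-E(q))^{-1}\psi_q$. Hence $A_K\psi_{\Lambda,q}$ is Cauchy in $L^2(K;\F)$. It remains to identify its limit with $A_K\psi_q$, which yields both \eqref{eq:ptr} and \eqref{eq:graphconv}. For this we use closedness of $A_K$: the operator $\xi\mapsto \one_K a(\cdot)\xi$ is closed on $\cD(\N_K^{1/2})$, since $\|A_K\xi\|^2=\|\N_K^{1/2}\xi\|^2$ and $\N_K^{1/2}$ is closed. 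Since $\psi_{\Lambda,q}\to\psi_q$ and $A_K\psi_{\Lambda,q}$ converges, we get $\psi_q\in\cD(A_K)$ with $A_K\psi_q=\lim A_K\psi_{\Lambda,q}=\one_K F$. Exhausting $\R^3$ by balls $K$ gives \eqref{eq:ptr} almost everywhere.

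The main technical obstacle is the rigorous justification of the pull-through identity for the cutoff model as an identity in $L^2_{\mathrm{loc}}(\R^3;\F)$ rather than pointwise in $k$, because $a(k)$ is only defined after integration. We handle it by testing against $\int \overline{\varphi(k)}\,\langle\eta,a(k)\,\cdot\,\rangle\,\de k$ with $\varphi\in C_c$ and $\eta$ in a core for $H_\Lambda(q-k)$. We also need to check that $\psi_{\Lambda,q}$ lies in $\cD(\N^{1/2})$, which holds because it is in the form domain. A secondary point is ensuring the uniformity in $q$ of the resolvent bound at the shifted spectral point.
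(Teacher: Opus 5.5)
Your proposal is correct and follows essentially the same route as the paper. It proves the cutoff pull-through via the commutator identity and \ref{I:gross}, uses uniform convergence of the shifted resolvents and ground states, applies closedness of $A_K$ to identify the limit (giving \eqref{eq:ptr} and \eqref{eq:graphconv} together), and uses the bound $\|(H(q-k)+1-E(q))^{-1}\|\le 2$ for \eqref{eq:ptrbound}. The one thing to tidy is the phase: normalizing against an unspecified $\psi_{\Lambda,0}$ does not by itself give $\psi_{\Lambda,q}\to\psi_q$, so fix $\psi_{\Lambda,q}=p_{\Lambda,q}\psi_q/\|p_{\Lambda,q}\psi_q\|$ as in Proposition~\ref{lem:riesz-local}, or any choice whose phases converge uniformly.
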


We note that commutation relations of $a(k)$ and $H(P)$ are immediate for the regularized Fr\"ohlich Hamiltonians; see Appendix \ref{app:basic_froehlich}. All of the effort made in this subsection is to remove the cutoff and make sure that inverting operators is well-defined.

\begin{proposition}\label{lem:riesz-local}
For all sufficiently large $\Lambda$ and every $|q|\le2r$, the cutoff Hamiltonian $H_\Lambda(q)$ has a simple ground-state eigenvalue $E_\Lambda(q)$ separated from the rest of the spectrum by at least $\gamma/2>0$.  Its rank-one projection $p_{\Lambda,q}$ can be chosen so that
\begin{equation}\nonumber
 \sup_{|q|\le2r}\norm{p_{\Lambda,q}-p_q}\longrightarrow0.
\end{equation}
Moreover, writing $\psi_{\Lambda,q}=\frac{p_{\Lambda,q}\psi_q}{\norm{p_{\Lambda,q}\psi_q}}$, one has
\begin{equation}\nonumber
 \delta_\Lambda = \sup_{|q|\le2r}\norm{\psi_{\Lambda,q}-\psi_q}\longrightarrow0.
\end{equation}
The maps $q\mapsto p_{\Lambda,q}$ and $q\mapsto\psi_{\Lambda,q}$ are norm-continuous on $\overline{B_{2r}}$.
\end{proposition}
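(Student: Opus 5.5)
The plan is to derive everything from the uniform norm-resolvent convergence in input~\ref{I:cutoff}, combined with the uniform gap $\gamma$ fixed above on $\overline{B_{2r}}$, through a Riesz-projection argument.

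\emph{Spectral picture for the full model.} For each $|q|\le2r$, $E(q)$ is a simple isolated eigenvalue, and the rest of $\spec H(q)$ lies in $[E(q)+\gamma,\infty)$. Let $\Gamma_q$ be the circle of radius $\gamma/2$ centred at $E(q)$. By continuity of $E$ on the compact ball, the numbers $E(q)$ range over a compact interval. I will therefore use a finite family of fixed contours, or equivalently note that $\sup_{|q|\le2r}\sup_{z\in\Gamma_q}\|(H(q)-z)^{-1}\|\le 2/\gamma$.

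\emph{Transfer to the cutoff model.} Input~\ref{I:cutoff} gives, at a fixed reference point $z_0$ far below $\inf_q E_\Lambda(q)$,
\[
\sup_{q}\|(H_\Lambda(q)-z_0)^{-1}-(H(q)-z_0)^{-1}\|\to0 .
\]
Here $z_0$ must be uniform in $\Lambda$ and $q$, which holds by the uniform lower bounds on $E_\Lambda$ from \eqref{eq:uniform-energy-convergence}. The first resolvent identity then converts this into uniform convergence at every $z\in\Gamma_q$. Concretely, writing $R=(H(q)-z_0)^{-1}$ and $R_\Lambda$ similarly, the function $(H-z)^{-1}$ equals $R\,(1-(z-z_0)R)^{-1}$. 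A Neumann-series perturbation, which is uniform because $|z-z_0|$ is bounded and $\|(1-(z-z_0)R)^{-1}\|$ is uniformly bounded on the contours, gives
\[
\sup_{q}\sup_{z\in\Gamma_q}\|(H_\Lambda(q)-z)^{-1}-(H(q)-z)^{-1}\|\to0 .
\]
In particular, for large $\Lambda$ no spectrum of $H_\Lambda(q)$ lies on $\Gamma_q$.

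\emph{Spectral conclusions.} Define $p_{\Lambda,q}=-\frac1{2\pi i}\oint_{\Gamma_q}(H_\Lambda(q)-z)^{-1}\dd z$. Then $\sup_q\|p_{\Lambda,q}-p_q\|\to0$, so eventually this is below $1$ and the rank of $p_{\Lambda,q}$ equals one. The spectrum of $H_\Lambda(q)$ inside $\Gamma_q$ is a single simple eigenvalue, and it must be $E_\Lambda(q)$ provided nothing lies below the circle. This holds because $E_\Lambda(q)\to E(q)$ uniformly by \eqref{eq:uniform-energy-convergence}, or directly because spectrum cannot appear far from $\spec H(q)$ under norm-resolvent convergence.

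For the gap, apply the same argument with slightly larger circles. Norm-resolvent convergence implies that the spectrum of $H_\Lambda(q)$ in the window $(E(q)-\gamma/4,E(q)+3\gamma/4)$ consists only of the eigenvalue near $E(q)$, uniformly in $q$. Hence the gap is at least $\gamma/2$ once $|E_\Lambda(q)-E(q)|<\gamma/4$. I expect this uniform spectral-inclusion step to be the only delicate point: one must show there is no spectrum of $H_\Lambda(q)$ in $[E(q)+\gamma/4,E(q)+3\gamma/4]$ uniformly. I would argue with the resolvent bound $\|(H(q)-z)^{-1}\|\le 4/\gamma$ for $z$ in this real segment. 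Combined with uniform resolvent convergence, this gives invertibility of $H_\Lambda(q)-z$ by a Neumann series, applied on a compact set of $z$ uniformly in $q$.

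\emph{The vectors $\psi_{\Lambda,q}$ and continuity.} We have $\|p_{\Lambda,q}\psi_q-\psi_q\|\le\|p_{\Lambda,q}-p_q\|$, so $\|p_{\Lambda,q}\psi_q\|\ge 1-\|p_{\Lambda,q}-p_q\|$. It follows that $\|\psi_{\Lambda,q}-\psi_q\|\le 2\|p_{\Lambda,q}-p_q\|/(1-\|p_{\Lambda,q}-p_q\|)$, and hence $\delta_\Lambda\to0$.

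Norm-continuity of $q\mapsto p_{\Lambda,q}$ follows from norm-resolvent continuity of $q\mapsto H_\Lambda(q)$, since $H_\Lambda(q)-H_\Lambda(q')=(q-q')\cdot(q+q')/2-(q-q')\cdot\Pf$ is relatively bounded. Near any $q'$ one may use the fixed contour $\Gamma_{q'}$. Continuity of $\psi_{\Lambda,q}$ then follows from its formula, using continuity of $\psi_q$ and the lower bound on the denominator.
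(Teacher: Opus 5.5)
Your proposal is correct and follows essentially the same route as the paper. Both use a Riesz-projection argument in which uniform norm-resolvent convergence at a far-left reference point is transferred to the contours around $E(q)$ by a Neumann series; the paper's factorization via $D_\Lambda(z,q)$ with reference point $-b$ is exactly your $1-(z-z_0)R$ with $z_0=-b$. You also spell out the gap, the identification of the enclosed eigenvalue with $E_\Lambda(q)$, and the continuity in $q$, which the paper leaves implicit. There, its contour radius $3\gamma/4$ gives the gap directly, in place of your extra segment argument.
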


\begin{proof}
    Recall that $\gamma$ was chosen as the uniform spectral gap of $H(q)$ for $|q| \le 2 r$. Fix $q \in B_{2r}(0)$. Let  $\Gamma_q \subset \rho(H(q))$ be the spherical contour with radius $\gamma3/4$ centred at $E(q)$. By definition of the resolvent we then have the simple bound
    \begin{equation}
        \label{eq:physical_countour}
        \sup_{q \in B_{2r}} \sup_{z \in \Gamma_q} \Vert (H(q)  - z)^{-1} \Vert \le \frac 4 \gamma.
    \end{equation}
    By Proposition \ref{prop:uniform_resolvent_conv} we have for $z$ not in the spectrum of any $H(q)$ that
    $$\sup_{q \in B_{2r}} \Vert (H_\Lambda (q) - z)^{-1} - (H(q) - z)^{-1} \Vert \to 0.$$
    This  yields via the resolvent identity the uniform convergence on the compact set $\Gamma_q$. Indeed, choose $b>C_0 +1$, where $-C_0$ is the uniform lower bound for all $H_\Lambda (P)$ with $\Lambda$ sufficiently large and $P \in \bbR^3$ (cf. Proposition \ref{prop:form-comparison}). Then all $(H_\Lambda(P)+b)^{-1}$ are bounded operators with norm $\le 1$. Abbreviating $R_\Lambda(z,q) = (H_\Lambda(q)-z)^{-1}$ with $R(z,q) = (H(q)-z)^{-1}$ and $D_\Lambda(z,q) = (H_\Lambda (q)-z) (H_\Lambda(q)+b)^{-1}$  as well as $D(z,q) =(H(q)-z) (H(q)+b)^{-1}$, we have (for now)  the formal identity
    \begin{equation}
        \label{eq:invert_calc}
        R_\Lambda(z,q) - R(z,q)= D_\Lambda (z,q) ^{-1}\big (R_\Lambda(-b,q) - R(-b,q)\big)D(z,q)^{-1}.
    \end{equation}
    This is verified by calculating  $D_\Lambda(z,q) (R_\Lambda(z,q) - R(z,q))D(z,q) = R_\Lambda(-b,q)-R(-b,q)$ and then multiplying the inverses on both sides. We have that the resolvent at $-b$ converges uniformly as $\Lambda \to \infty$ in $|q|\le 2r$, so it remains to show that $D_\Lambda(z,q)$ is invertible with bounded inverse for $\Lambda$ large enough and $(z,q) \in \Gamma_q \times B_{2r}$. This is seen by writing 
    $$D_\Lambda(z,q) = D(z,q)\big[ I-(z+b)D(z,q)^{-1}\big(R_\Lambda(-b,q)- R(-b,q)\big)\big].$$
    The fact that $D(z,q)$ is invertible is seen by definition. Moreover, by the identity $D(z,q) = 1-(z+b)(H(q)+b)^{-1}$, we also find that $D(z,q)$ has uniformly bounded operator norm. 
    For the inverse we find the formula
    $$D(z,q)^{-1}=1+(z+b)(H(q)-z)^{-1}.$$
    by the contour bound \eqref{eq:physical_countour} found earlier, the above is uniformly bounded in our parameter range. Because $R_\Lambda(-b,q)- R(-b,q)$ converges uniformly in $q$ over compacts, we can bound 
    $$\sup\limits_{|q|\le 2r}\sup\limits_{z \in \Gamma_q}\Vert(z+b)D(z,q)^{-1}\big(R_\Lambda(-b,q)- R(-b,q)\big)\Vert \le \frac 1 2$$
    for $\Lambda$ large enough. The Neumann series then shows that $D_\Lambda (z,q)$ is invertible with uniformly bounded norm in the considered parameter range. This verifies \eqref{eq:invert_calc} 
    and so
    $$\sup_{q \in B_{2r}} \sup_{z \in \Gamma_q} \Vert (H_\Lambda (q) - z)^{-1} - (H(q) - z)^{-1} \Vert \to 0.$$
    The conclusion is that the Riesz projections are converging uniformly, meaning
    $$\sup_{q \in B_{2r}}\Vert p_{\Lambda,q} - p_q \Vert \to 0,$$
    which yields the last two assertions.
\end{proof}

\begin{proposition}\label{lem:AK}
The operator $A_K:\F\supset\cD(A_K)\to L^2(K;\F)$ is closed and
\begin{equation}\nonumber
 \norm{A_K\xi}_{L^2(K;\F)}^2=\Vert\N_K ^{1/2}\xi\Vert^2.
\end{equation}
In particular, convergence in the graph norm of $A_K$ is equivalent to convergence in $\F$ together with convergence of the localized annihilation fields in $L^2(K;\F)$.
\end{proposition}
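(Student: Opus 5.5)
The plan is to prove the norm identity first on the finite-particle dense core, extend it to all of $\cD(A_K)=\cD(\N_K^{1/2})$, and read off closedness from the closedness of $\N_K^{1/2}$.

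\emph{The norm identity.} Fix $\xi\in\cD(\N_K^{1/2})$ and compute sector by sector using \eqref{eq:annihilation_components}. For each $m\ge0$,
\[
\int_K \|(a(k)\xi)^{(m)}\|^2\dd k=(m+1)\int_K\dd k\int \bigl|\xi^{(m+1)}(k,k_1,\ldots,k_m)\bigr|^2\dd k_1\cdots\dd k_m .
\]
Permutation symmetry of $\xi^{(m+1)}$ rewrites the right-hand side as
\[
\int \Bigl(\sum_{j=1}^{m+1}\one_K(k_j)\Bigr)\bigl|\xi^{(m+1)}\bigr|^2 ,
\]
which is exactly the $(m+1)$-sector contribution to $\|\N_K^{1/2}\xi\|^2$. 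Summing over $m$ by Tonelli gives
\[
\|A_K\xi\|_{L^2(K;\F)}^2=\|\N_K^{1/2}\xi\|^2 .
\]
The same computation shows that the map $k\mapsto \one_K(k)a(k)\xi$ is measurable and square integrable whenever the right-hand side is finite, so $A_K$ is well defined on $\cD(\N_K^{1/2})$. Since $\N_K=\mathrm d\Gamma(\one_K)$ acts by multiplication on each sector, its maximal form domain is precisely the set of $\xi$ for which the weighted sum is finite. No approximation argument is therefore needed, and the identity holds on the whole domain directly.

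\emph{Closedness.} Let $\xi_n\in\cD(A_K)$ with $\xi_n\to\xi$ in $\F$ and $A_K\xi_n\to \eta$ in $L^2(K;\F)$. By the identity, $(\xi_n)$ is Cauchy in the graph norm of $\N_K^{1/2}$. Since $\N_K^{1/2}$ is self-adjoint and hence closed, this gives $\xi\in\cD(\N_K^{1/2})$ and $\N_K^{1/2}(\xi_n-\xi)\to0$. Applying the identity to $\xi_n-\xi$ then yields $A_K\xi_n\to A_K\xi$, and so $\eta=A_K\xi$.

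\emph{Graph-norm equivalence.} The final statement follows because the graph norms of $A_K$ and of $\N_K^{1/2}$ coincide:
\[
\|\xi\|^2+\|A_K\xi\|^2=\|\xi\|^2+\|\N_K^{1/2}\xi\|^2 .
\]

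The only delicate point is measure-theoretic: identifying the componentwise formula with an element of $L^2(K;\F)\cong\bigoplus_m L^2(K;\F^{(m)})$. I would handle this via Fubini on each sector, where $L^2(K\times\R^{3m})\cong L^2(K;L^2(\R^{3m}))$.
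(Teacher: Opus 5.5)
Your proof is correct and matches the paper's approach. The paper's entire proof is ``Direct calculation,'' and your write-up supplies exactly the omitted details: the sector-wise symmetrization with Tonelli gives $\|A_K\xi\|_{L^2(K;\F)}=\|\N_K^{1/2}\xi\|$, and closedness together with the graph-norm statement then follows from the self-adjoint (hence closed) operator $\N_K^{1/2}$.
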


\begin{proof}
    Direct calculation.
\end{proof}
The following pull-through formula is known at least since the work \cite{Moller}. We state it again for convenience and in the precise setting we need it.
\begin{lemma}\label{lem:cutoff-pull}
For all sufficiently large finite $\Lambda$, every $|q|\le2r$, and almost every $k$,
\begin{equation}\label{eq:ptrR}
 a(k)\psi_{\Lambda,q}
 =-g_\Lambda(k)\bigl(H_\Lambda(q-k)+1-E_\Lambda(q)\bigr)^{-1}\psi_{\Lambda,q}.
\end{equation}
For all sufficiently large $\Lambda$,
\begin{equation}\label{eq:denom}
 H_\Lambda(q-k)+1-E_\Lambda(q)\ge\frac13
 \qquad(|q|\le2r,\ k\in\R^3).
\end{equation}
\end{lemma}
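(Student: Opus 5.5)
The plan is to prove the operator lower bound \eqref{eq:denom} first and then derive \eqref{eq:ptrR} from the commutation relation of $a(k)$ with $H_\Lambda$. Once the inverse is known to exist, the formula is a purely algebraic statement.

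\textbf{Lower bound \eqref{eq:denom}.} By input \ref{I:gross} applied to the regular cutoff model, $H_\Lambda(q-k)\ge E_\Lambda(q-k)\ge E_\Lambda(0)$ for every $k$. It therefore suffices to show
\[
 E_\Lambda(q)-E_\Lambda(0)\le \tfrac23
\]
uniformly in $|q|\le 2r$ for all large $\Lambda$. By Proposition~\ref{lem:riesz-local} (or directly by input \ref{I:cutoff} together with the uniform energy convergence in Proposition~\ref{prop:form-comparison}), $\sup_{|q|\le2r}|E_\Lambda(q)-E(q)|\to0$. Combining this with $\sup_{|q|\le 2r}h(q)\le\frac12$ gives $E_\Lambda(q)-E_\Lambda(0)\le \frac12+o(1)\le\frac23$. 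Hence $H_\Lambda(q-k)+1-E_\Lambda(q)\ge \frac13$, as a form inequality, uniformly in $k$ and $|q|\le2r$.

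\textbf{Pull-through \eqref{eq:ptrR}.} Since $g_\Lambda\in\mathfrak h$, $H_\Lambda(P)$ is self-adjoint on $\cD(K(P))$. We use the pointwise-in-$k$ intertwining relation
\[
 a(k)H_\Lambda(q)=\bigl(H_\Lambda(q-k)+1\bigr)a(k)+g_\Lambda(k)
\]
on suitable vectors, recalled in Appendix~\ref{app:basic_froehlich}. It follows from $a(k)\Pf=(\Pf+k)a(k)$, $a(k)\N=(\N+1)a(k)$ and $[a(k),\Phi(g_\Lambda)]=g_\Lambda(k)$.

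To make this rigorous we proceed in two steps. First, test against $\eta\in L^2(\R^3;\F)$ with values in a core, say $\cD(K)\cap\F_{\mathrm{fin}}$-valued with compact $k$-support. Then, for $\xi=\psi_{\Lambda,q}\in\cD(H_\Lambda(q))$, pair the identity $\int \langle\eta(k),a(k)(H_\Lambda(q)-E_\Lambda(q))\xi\rangle\,\de k=0$ and move $a(k)$ to the other side as $a^*$. This yields, for a.e.\ $k$ and in the weak sense,
\[
 \bigl(H_\Lambda(q-k)+1-E_\Lambda(q)\bigr)a(k)\psi_{\Lambda,q}=-g_\Lambda(k)\psi_{\Lambda,q}.
\]
Here one needs $a(k)\psi_{\Lambda,q}$ to lie in the form domain of $H_\Lambda(q-k)$ for a.e.\ $k$. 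We obtain this from $\psi_{\Lambda,q}\in\cD(K(q))$ and the fact that $\int\|(K(q-k))^{1/2}a(k)\xi\|^2\de k$ is controlled by $\|K(q)\xi\|\,\|\xi\|$-type quantities plus $\|\N\xi\|^2$, using $\sum_j(\cdot)$ bookkeeping on $n$-particle components. Finally, by \eqref{eq:denom} the operator on the left is invertible with inverse of norm at most $3$, which gives \eqref{eq:ptrR}.

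The main obstacle is the domain step: justifying that $a(k)\psi_{\Lambda,q}$ lies in the relevant (form) domain for almost every $k$, so that the weak identity can be inverted. I would handle it by working in the form sense throughout. The weak identity together with the lower bound $\ge\frac13$ identifies $a(k)\psi_{\Lambda,q}$ with the Lax--Milgram solution, provided it lies in the form domain $\cQ$. This membership follows since $\psi_{\Lambda,q}\in\cD(K(q))\subset\cD(\N)\cap\cD(\Pf^2)$, which makes $\int\|(|\Pf|+\N^{1/2}) a(k)\psi\|^2\de k<\infty$.
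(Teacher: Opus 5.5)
Your proof is correct and follows the paper's route. The bound \eqref{eq:denom} is obtained exactly as in the paper, from \ref{I:gross}, the uniform convergence \eqref{eq:uniform-energy-convergence} and $\sup_{|q|\le2r}h(q)\le\frac12$. For \eqref{eq:ptrR} the paper also inserts the ground state into the commutator relation and inverts, but it outsources the domain step to Lemma~\ref{lem:finite_cutoff_commutator_gs}, whereas you sketch that step yourself.

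One correction to your sketch: $\int\|\,|\Pf|\,a(k)\psi_{\Lambda,q}\|^2\de k$ need not be finite a priori. Since $\Pf a(k)=a(k)(\Pf-k)$, it involves $\langle\psi_{\Lambda,q},\mathrm d\Gamma(|k|^2)\psi_{\Lambda,q}\rangle$, which $\cD(K(q))$ does not control. Your shifted version does work: $\int\|K(q-k)^{1/2}a(k)\psi_{\Lambda,q}\|^2\de k=\frac12\|\N^{1/2}(q-\Pf)\psi_{\Lambda,q}\|^2+\langle\psi_{\Lambda,q},\N(\N-1)\psi_{\Lambda,q}\rangle<\infty$. This gives $a(k)\psi_{\Lambda,q}\in\cQ$ for almost every $k$, which is all the Lax--Milgram step needs.
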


\begin{proof}
As proven in Appendix \ref{app:basic_froehlich}, it holds that
$$a(k) H_\Lambda (q) = ( H_\Lambda (q-k) +1) a(k)+g_\Lambda (k).$$
on the operator core and for the respective ground states.
By \ref{I:gross} we have $H_\Lambda (q-k) \ge E_\Lambda (0)$. Because $E_\Lambda (q)$ converges uniformly over all $q$,
$$H_\Lambda (q-k)+1 - E_\Lambda(q) \ge E_\Lambda (0) +1 - E_\Lambda(q) \ge \frac 1 3$$
for $\Lambda$ sufficiently large because $|q|\le2r$. In particular, the operator on the left-hand side is invertible. The proof then follows via Lemma \ref{lem:finite_cutoff_commutator_gs} by plugging in the ground state and solving for $a(k) \psi_{\Lambda,q}$.
\end{proof}

In order to conclude that the cutoff can be removed, the following assertion is of help.

\begin{lemma}\label{lem:shifted}
Let $K_q\subset \overline{B_{2r}}$, $K_k \subset \bbR^3$ be compact. Then
\[
 \sup_{q\in K_q,\,k\in K_k}\norm{(H_\Lambda(q-k)+1-E_\Lambda(q))^{-1}-(H(q-k)+1-E(q))^{-1}}\longrightarrow0.
\]
\end{lemma}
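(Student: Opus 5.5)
The plan is to run the same resolvent-identity argument as in the proof of Proposition~\ref{lem:riesz-local}, now at the spectral point $z=E(q)-1$ (respectively $E_\Lambda(q)-1$) for the shifted fiber $q-k$, with the convergence of the energies handled separately. I split the difference as
\[
 (H_\Lambda(p)+1-E_\Lambda(q))^{-1}-(H(p)+1-E(q))^{-1}
 =\Big[(H_\Lambda(p)-z_\Lambda)^{-1}-(H_\Lambda(p)-z)^{-1}\Big]+\Big[(H_\Lambda(p)-z)^{-1}-(H(p)-z)^{-1}\Big],
\]
with $p=q-k$, $z=E(q)-1$ and $z_\Lambda=E_\Lambda(q)-1$.

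\emph{Uniform invertibility.} By \ref{I:geometry}, $H(p)\ge E(p)\ge E(0)$. The bound $\sup_{|q|\le2r}h(q)\le\frac12$ then gives $H(p)+1-E(q)\ge\frac12$ for all $p$ and all $|q|\le2r$. Hence $\|(H(p)-z)^{-1}\|\le2$ uniformly. For the cutoff operators, \eqref{eq:denom} gives $\|(H_\Lambda(p)-z_\Lambda)^{-1}\|\le3$ uniformly for large $\Lambda$.

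\emph{First bracket.} By the first resolvent identity it equals
\[
 (z_\Lambda-z)(H_\Lambda(p)-z_\Lambda)^{-1}(H_\Lambda(p)-z)^{-1}.
\]
I need $\sup_{|q|\le2r}|E_\Lambda(q)-E(q)|\to0$. This follows from the uniform norm-resolvent convergence (Proposition~\ref{prop:uniform_resolvent_conv}, i.e.\ \ref{I:cutoff}) together with the uniform convergence of Riesz projections in Proposition~\ref{lem:riesz-local}, since $E_\Lambda(q)=\langle\psi_{\Lambda,q},H_\Lambda(q)\psi_{\Lambda,q}\rangle$ lies inside $\Gamma_q$. Alternatively it is the uniform energy convergence \eqref{eq:uniform-energy-convergence}. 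Once this difference is below $1/12$, the point $z$ satisfies $\operatorname{dist}(z,\spec H_\Lambda(p))\ge\frac14$, so $\|(H_\Lambda(p)-z)^{-1}\|\le4$. The first bracket is therefore $O(\sup_q|E_\Lambda(q)-E(q)|)\to0$.

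\emph{Second bracket.} I use identity \eqref{eq:invert_calc} with $b>C_0+1$:
\[
 R_\Lambda(z,p)-R(z,p)=D_\Lambda(z,p)^{-1}\big(R_\Lambda(-b,p)-R(-b,p)\big)D(z,p)^{-1}.
\]
Here $D(z,p)^{-1}=1+(z+b)R(z,p)$ and $D_\Lambda(z,p)^{-1}=1+(z+b)R_\Lambda(z,p)$. Both are uniformly bounded by the invertibility bounds above, and $|z+b|$ is bounded on the compact range. The middle factor tends to zero uniformly in $p\in\R^3$ by \ref{I:cutoff}, which is the step that makes the $p$-uniformity work; it is in fact uniform in all fiber momenta, not only in the compact set $K_q-K_k$. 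This already gives the claim, even uniformly in $k\in\R^3$.

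\textbf{Main obstacle.} The main subtlety is the one just used. Uniform resolvent convergence is available only at a fixed spectral parameter such as $-b$, while $z$ depends on $q$ and lies inside $\spec H(p)$'s complement only by the margin $\frac12$ from the threshold. I transfer the convergence to $z$ through the $D$-factors, which needs the lower bounds $H(p)+1-E(q)\ge\frac12$ and the analogous cutoff bound uniformly in $p$. Both come from the global minimum at zero (\ref{I:geometry}, \ref{I:gross}), and the cutoff bound additionally needs uniform energy convergence. If the energy convergence cannot be read off cleanly, I would instead restrict to the compact set $K_q$ and use the continuity of $q\mapsto E(q)$ from \ref{I:local} together with the Riesz-projection argument.
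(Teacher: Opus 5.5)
Your proof is correct and uses the same ingredients as the paper's proof: the resolvent identity, the uniform energy convergence \eqref{eq:uniform-energy-convergence}, uniform resolvent convergence, and the uniform lower bounds on the shifted operators. Your transfer from the point $-b$ to the $q$-dependent spectral parameter $z=E(q)-1$ via the $D$-factor identity spells out a step the paper leaves implicit, since Proposition~\ref{prop:uniform_resolvent_conv} is stated only for a fixed $z$; as a by-product it gives uniformity in all $k\in\R^3$.
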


\begin{proof}
    First, note that  $H_\Lambda(q-k)+1-E_\Lambda(q)$ has spectrum bounded below by $1/3$ by Lemma \ref{lem:cutoff-pull}. The same bound holds for $H(q-k)+1-E(q)$. Therefore, both inverses exist and are well-defined via the functional calculus. 
    The claim follows from the resolvent identity, \eqref{eq:uniform-energy-convergence}, Proposition \ref{prop:uniform_resolvent_conv} and the uniform boundedness of the inverses.
\end{proof}

\begin{proof}[Proof of Lemma \ref{lemma:pull}]
Thanks to Lemma \ref{lem:cutoff-pull} the pull-through formula is already verified for the model with cutoff. The convergence of $a(k) \psi_{\Lambda,q} \to a(k) \psi_q$ follows from Proposition \ref{lem:riesz-local} and Proposition \ref{lem:AK}. We show that the left and right sides both converge as functions in $L^2(K;\cF)$, for arbitrary $K \subseteq \bbR^3$ compact. Indeed, we first pass to the limit on the right hand side, and then use closeness to obtain the desired convergence. Because
$$\bigl(H_\Lambda(q-k)+1-E_\Lambda(q)\bigr)^{-1} \to \bigl(H(q-k)+1-E(q)\bigr)^{-1}$$
 uniformly in operator norm, 
$$\Vert \bigl(H_\Lambda(q-k)+1-E_\Lambda(q)\bigr)^{-1} \psi_{\Lambda,q} - \bigl(H(q-k)+1-E(q)\bigr)^{-1} \psi_q \Vert \to 0.$$
Multiplying this by $g_\Lambda$, which is just $g$ on $K$ and integrating on $K$ shows
$$g_\Lambda(k)\bigl(H_\Lambda(q-k)+1-E_\Lambda(q)\bigr)^{-1} \psi_{\Lambda,q} \to g(k)\bigl(H(q-k)+1-E(q)\bigr)^{-1} \psi_q $$
in $L^2(K;\cF)$.
This implies the pull-through formula \eqref{eq:ptr} without cutoff thanks to the convergence of ground states and closeness of $A_K$ for almost every $k \in \bbR^3$.
\eqref{eq:ptrbound} is then a direct consequence of the pull-through formula by noting that the inverse operator on the right-hand side is uniformly bounded (by choice of $|q|\le 2r$). Finally, the convergence 
$$\sup_{|q|\le2r}\norm{A_K(\psi_{\Lambda,q}-\psi_q)}_{L^2(K;\F)}\longrightarrow0$$
also follows from applying the pull-through formula to $A_K (\psi_{\Lambda,q}- \psi_q)$, which yields
\begin{equation}
    \nonumber
    \begin{split}
        &\sup\limits_{|q|\le 2r} \Vert A_K (\psi_{\Lambda,q}- \psi_q)\Vert_{L^2 (K; \cF)} ^2 = \sup\limits_{|q|\le 2r} \int_K \de k \, \big\Vert a(k) \psi_{\Lambda,q}- a(k) \psi_q \big\Vert^2   \\
        &\le \sup\limits_{|q|\le 2r} \int_K \de k \, g(k)^2 \big\Vert \bigl(H_\Lambda(q-k)+1-E_\Lambda(q)\bigr)^{-1} \psi_{\Lambda,q} - \bigl(H(q-k)+1-E(q)\bigr)^{-1} \psi_q \big\Vert^2. 
    \end{split}
\end{equation}
The right-hand side tends to $0$ thanks to Lemma \ref{lem:shifted} and the fact that 
$\int_K \de k \, g(k)^2 < \infty$
in three dimensions. This verifies \eqref{eq:graphconv} and thus finishes the proof.
\end{proof}

\section{The space $\sC_P$ and basic properties}
\label{sec:one_phonon_channel}
The goal of this section is to rigorously construct states of the form \eqref{equ:low_energy_states}.
As before, it will be useful to think of $k \to 1_K a(k)\psi$ for some compact $K \subseteq \R^3$ as an element in $L^2(K;\cF)$. Because the construction of our desired low-energy states is slightly unconventional, we first discuss measurability issues.
\begin{lemma}\label{lem:measurable}
Let $B$ be a standard Borel space and let $B \ni q\mapsto\eta_q\in\F$ be strongly measurable.  The Fock components may be represented by jointly measurable functions
\[
 (q,k_1,\ldots,k_n)\longmapsto\eta_q^{(n)}(k_1,\ldots,k_n).
\]
If, in addition, a locally square-integrable annihilation field is given as a strongly measurable map $(q,k)\mapsto F(q,k)\in\F$, then the component representatives may be chosen so that $F(q,k)=a(k)\eta_q$ almost everywhere on every bounded momentum set.
\end{lemma}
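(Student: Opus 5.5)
The plan is to reduce everything to the classical fact that a strongly measurable map into $L^2(X,\mu)$, with $X$ $\sigma$-finite, has a jointly measurable representative. First, strong measurability of $q\mapsto\eta_q\in\F$ implies strong measurability of each component $q\mapsto\eta_q^{(n)}\in L^2_{\mathrm{sym}}((\R^3)^n)$, since the projection onto $\F^{(n)}$ is bounded. A standard Borel space carries no canonical measure, so I fix any $\sigma$-finite reference measure $\nu$ on $B$, or simply work with a Borel probability measure; the statement is about measurability only.

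Next, by the Pettis theorem (separability of $L^2((\R^3)^n)$), $q\mapsto\eta_q^{(n)}$ is a pointwise limit of countably valued simple functions $s_j(q)=\sum_i \one_{B_{ij}}(q)\,u_{ij}$. Choose fixed Borel representatives of the finitely or countably many $u_{ij}$; then $(q,\mathbf k)\mapsto s_j(q)(\mathbf k)$ is jointly Borel. After passing to a subsequence, I want almost everywhere convergence in $\mathbf k$ for each fixed $q$. The cleanest route is to require $\|s_j(q)-\eta_q^{(n)}\|\le 2^{-j}$ for every $q$, which is possible by density of a countable set and measurable selection of the nearest index. Then $\sum_j |s_{j+1}(q)-s_j(q)|(\mathbf k)$ is finite for a.e.\ $\mathbf k$, for each $q$, and I define $\eta^{(n)}_q(\mathbf k)=\limsup_j \mathrm{Re}\, s_j+i\limsup_j\mathrm{Im}\, s_j$. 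This is jointly Borel and represents $\eta_q^{(n)}$ for every $q$. Symmetrization, obtained by averaging over permutations, preserves joint measurability.

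For the second part, apply the same construction to $F$, viewed on $B\times K$ with $K$ bounded. This gives jointly measurable $(q,k,\mathbf k)\mapsto F(q,k)^{(m)}(\mathbf k)$. On the other hand, $(q,k,\mathbf k)\mapsto\sqrt{m+1}\,\eta_q^{(m+1)}(k,\mathbf k)$ is jointly measurable by the first part. By \eqref{eq:annihilation_components} this function equals $(a(k)\eta_q)^{(m)}(\mathbf k)$ for a.e.\ $k$, for each $q$; this uses Fubini on $(\R^3)^{m+1}$ for fixed $q$. Since $F(q,\cdot)=a(\cdot)\eta_q$ in $L^2(K;\F)$ by hypothesis, the two jointly measurable functions agree for a.e.\ $(k,\mathbf k)$, for each $q$. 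Tonelli then gives agreement $\nu\otimes\de k\otimes\de\mathbf k$-a.e. One can therefore redefine $F$'s components to be the $\eta$-induced ones, which is exactly ``$F(q,k)=a(k)\eta_q$ a.e.'' Exhausting $\R^3$ by countably many balls $K$ makes the choice consistent across bounded sets.

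The main obstacle is the subtle point that the representatives must be chosen \emph{simultaneously} for all $q$ rather than $q$-by-$q$. This is handled by the uniform $2^{-j}$ approximation, which needs a measurable nearest-point selection from a countable dense set.
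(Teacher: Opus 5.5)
Your proposal is correct and follows essentially the paper's argument: fix a countable dense set in $\F^{(n)}$ with chosen representatives, measurably pick the least index within $2^{-j}$ of $\eta_q^{(n)}$, use a.e.\ summability of successive differences to get a jointly measurable pointwise limit, and apply Fubini/Tonelli for the annihilation-field statement. The preliminary Pettis step is unnecessary, and the selection should be the least index within $2^{-j}$ rather than a ``nearest'' index, which need not exist for a dense countable set.
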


\begin{proof}
    Fix $n \in \mathbb N$. Since $\cF^{(n)}$ is separable, choose a dense, countable set $\{\nu_j:j \in \mathbb N\}$ and fix a measurable representative for each of them. Abbreviate a momentum vector by $\mathbf{x}=(x_1,...,x_n) \in \mathbb R ^{3n}$. Let now
    $$j_m (q) = \min \left\{ j : \Vert \nu_j - \eta_q ^{(n)} \Vert < 2^{-m} \right\}, \qquad s_m (q,\mathbf x) =\nu_{j_m(q)}(\mathbf x).$$
    Note that $s_m$ is measurable in the first variable by the assumption that $q \to \eta_q$ is strongly measurable. Clearly $s_m(q,\cdot)$ is also measurable. Joint measurability follows, as $j_m \in \mathbb N$. For fixed $q$,
    $$\sum\limits_{m=1}^\infty \Vert s_{m+1}(q,\cdot) - s_m (q,\cdot) \Vert < \infty,$$
    meaning that $s_m (q,\cdot) \to \eta^{(n)}_q$ in $L^2(\R ^{3n})$. By the triangle inequality and monotone convergence we find
    $$\sum\limits_{m=1}^\infty | s_{m+1}(q,x) - s_m (q,x) | < \infty$$
    for almost every $x$.
    Therefore, $v(q,x) =\lim\limits_{m\to \infty }s_m(q,x)$ exists pointwise for almost every $x$ (we set it to $0$ on a possible zero set). Note that almost everywhere $v(q,\cdot) = \eta_q ^{(n)}$; this is best seen by Fatou's Lemma, because
    $$\int \de x \, |v(q,x) - \eta^{(n)}_q (x)|^2 \le \liminf_{m \to \infty} \int \de x \, |s_m(q,x)  - \eta^{(n)}_q (x)|^2 = 0.$$
    Finally, $v$ is jointly measurable as limit of jointly measurable functions, as desired. The statement about annihilation fields follows directly from what we have just shown and Fubini's Theorem.
\end{proof}
Suppose $q\mapsto\eta_q\in\F$ is measurable and the relevant localized annihilation fields possess a jointly measurable representative as in Lemma~\ref{lem:measurable}. Take $f: B_r \to \bbC$ measurable. Define for $n \ge  0$ the vector $C^\eta f$ by
\begin{align}\label{eq:component}
 (C^\eta f)^{(n+1)}(k_1,\dots,k_{n+1})
 =\frac1{\sqrt{n+1}}\sum_{j=1}^{n+1}
 &\one_{B_r (P)}(k_j)f(P-k_j)
 \eta_{P-k_j}^{(n)}(k_1,\dots,\widehat{k_j},\dots,k_{n+1}).
\end{align}
Set also $(C^\eta f)^0 = 0$.
We will use the above definition mainly for the family of ground states $\psi = \{\psi_q: q \in B_r\}$.
The reason we define it for general families $\eta = \{\eta_q : q \in B_r\}$ is that we will also work with cut-off Hamiltonians, in which case $\psi$ must be replaced by the ground states of the corresponding cutoff Hamiltonians. For a measurable $f$ define the two non-negative quantities
\begin{align*}
 D_\eta(f)
 &=\int_{B_r} \dd q \, |f(q)|^2\|\eta_q\|^2,\\
 X_\eta(f)
 &=\iint_{B_r\times B_r}\dd q\,\dd q' \, |f(q)|\,|f(q')|
 \|a(P-q')\eta_q\|\,\|a(P-q)\eta_{q'}\|.
\end{align*}
The second quantity is an absolute-integrability condition for the exchange term below.  It is finite in every application in this paper. By Lemma \ref{lem:measurable}, the states $C^\eta f$ are all measurable on every Fock sector.

\begin{lemma}\label{lem:creation}
If $D_\eta(f)+X_\eta(f)<\infty$, then \eqref{eq:component} defines a vector of $\F$ and
\begin{align}\label{eq:creationnorm}
 \norm{C^\eta f}^2
 &=\int_{B_r} \dd q \, |f(q)|^2\norm{\eta_q}^2
 +\iint_{B_r\times B_r}\dd q\,\dd q' \, \overline{f(q)}f(q')
 \ip{a(P-q')\eta_q}{a(P-q)\eta_{q'}}.
\end{align}
Moreover, in case the right-hand side is finite, we also have the difference bound
\begin{equation}\label{eq:creationdiff}
 \norm{(C^\eta-C^{\widetilde\eta})f}^2
 \le\left(
 \sup_{q\in B_r}\norm{\eta_q-\widetilde\eta_q}^2
 +\sup_{q\in B_r}\int_{B_r}\dd q'\,\norm{a(P-q')(\eta_q-\widetilde\eta_q)}^2
 \right)\norm f^2.
\end{equation}
\end{lemma}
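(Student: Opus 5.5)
The plan is to compute the norm sector by sector. Fix $n\ge0$ and write $\Psi_j(\mathbf k)=\one_{B_r(P)}(k_j)f(P-k_j)\eta^{(n)}_{P-k_j}(\mathbf k/k_j)$, so that $(C^\eta f)^{(n+1)}=(n+1)^{-1/2}\sum_j\Psi_j$. Expanding the square gives
\[
\norm{(C^\eta f)^{(n+1)}}^2=\frac1{n+1}\sum_{j}\norm{\Psi_j}^2+\frac1{n+1}\sum_{j\neq l}\ip{\Psi_j}{\Psi_l}.
\]
For the diagonal terms, the substitution $q=P-k_j$ and Fubini (justified by the joint measurability from Lemma~\ref{lem:measurable} and nonnegativity) give $\norm{\Psi_j}^2=\int_{B_r}\dd q\,|f(q)|^2\norm{\eta^{(n)}_q}^2$. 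Hence the diagonal part equals $\int|f|^2\norm{\eta_q^{(n)}}^2$, and summing over $n$ yields the first term $D_\eta(f)$.

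For the off-diagonal terms, use symmetry to reduce to $j=1$, $l=2$, which gives $n(n+1)$ identical pairs. Set $q=P-k_1$ and $q'=P-k_2$. The integrand is
\[
\overline{f(q)}f(q')\,\overline{\eta^{(n)}_{q}(P-q',k_3,\dots)}\,\eta^{(n)}_{q'}(P-q,k_3,\dots).
\]
Integrating over $k_3,\dots,k_{n+1}$ first and using \eqref{eq:annihilation_components} with $m=n-1$, one gets $\sqrt n\,\eta_q^{(n)}(P-q',\cdot)=(a(P-q')\eta_q)^{(n-1)}$, and similarly for $q'$. The pair contribution is therefore $\frac1n\iint\overline{f(q)}f(q')\ip{(a(P-q')\eta_q)^{(n-1)}}{(a(P-q)\eta_{q'})^{(n-1)}}$. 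Multiplying by $n(n+1)/(n+1)=n$ cancels the prefactors, and summing over $n$ gives the exchange term in \eqref{eq:creationnorm}.

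The main point to watch is exchanging the $n$-sum with the $q,q'$ integrals. Cauchy--Schwarz in each sector, followed by Cauchy--Schwarz over $n$, bounds the sum of absolute values by $X_\eta(f)$. Lemma~\ref{lem:measurable} is what makes $a(P-q')\eta_q$ evaluable pointwise almost everywhere, since the evaluation point $k=P-q'$ depends on the integration variable. Finiteness of $D_\eta+X_\eta$ then gives absolute convergence, so $C^\eta f\in\F$ and \eqref{eq:creationnorm} holds.

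For \eqref{eq:creationdiff}, note that $C^\eta f$ is linear in $\eta$, so $(C^\eta-C^{\widetilde\eta})f=C^{\eta-\widetilde\eta}f$, and apply \eqref{eq:creationnorm} with $\zeta=\eta-\widetilde\eta$. The diagonal term is at most $\sup_q\norm{\zeta_q}^2\norm f^2$. For the exchange term, bound the absolute value by $\iint|f(q)||f(q')|\,u(q,q')u(q',q)$ with $u(q,q')=\norm{a(P-q')\zeta_q}$. The elementary inequality $ab\le\frac12(a^2+b^2)$ applied to $|f(q)|u(q',q)$ and $|f(q')|u(q,q')$ gives the bound $\int\dd q'|f(q')|^2\int\dd q\,u(q,q')^2$, up to relabeling the integration variables. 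This is at most $\sup_q\int\dd q'\,u(q,q')^2\,\norm f^2$.
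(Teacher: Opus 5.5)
Your derivation of \eqref{eq:creationnorm} follows the paper's proof. You expand each sector into diagonal and exchange terms and substitute $q=P-k_j$. You identify $\sqrt n\,\eta^{(n)}_q(P-q',\cdot)$ with $(a(P-q')\eta_q)^{(n-1)}$ via \eqref{eq:annihilation_components}. You then exchange the $n$-sum with the $q,q'$ integrals using $D_\eta(f)+X_\eta(f)<\infty$. Your counting ($n(n+1)$ ordered pairs, prefactors cancelling to $1$) is correct. Reducing \eqref{eq:creationdiff} to \eqref{eq:creationnorm} via $C^\eta-C^{\widetilde\eta}=C^{\eta-\widetilde\eta}$ is also the paper's route.

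The last step, however, fails as written. With $u(q,q')=\norm{a(P-q')\zeta_q}$, the first slot is the state index and the second is the annihilated momentum. You apply $ab\le\frac12(a^2+b^2)$ to $|f(q)|\,u(q',q)$ and $|f(q')|\,u(q,q')$, which attaches $f$ to the annihilated-momentum slot. The resulting expression $\int\dd q'\,|f(q')|^2\int\dd q\,u(q,q')^2$ is controlled by $\sup_{q'}\int\dd q\,u(q,q')^2$, a supremum over the annihilated momentum. It is not controlled by $\sup_q\int\dd q'\,u(q,q')^2$ as you claim, and relabeling integration variables cannot swap the roles of the two slots. Your claimed inequality is false in general. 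Suppose $\zeta_q$ does not depend on $q$ and $u(q,q')^2=\phi(q')$ with $\phi$ a tall narrow bump. Then for $f$ concentrated under the bump your intermediate expression is about $|B_r|\sup\phi\,\norm f^2$, while $\sup_q\int\dd q'\,u(q,q')^2=\int_{B_r}\phi$. The repair is to pair the other way, as the paper does: use $|f(q)|\,u(q,q')$ and $|f(q')|\,u(q',q)$, so that $f$ sits on the state index. Each half of $\frac12(a^2+b^2)$ then integrates to $\int\dd q\,|f(q)|^2\int\dd q'\,u(q,q')^2\le\sup_q\int\dd q'\,u(q,q')^2\,\norm f^2$, which is \eqref{eq:creationdiff}. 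This form is also the one the paper needs: in Proposition~\ref{prop:intertwine} the bound is combined with \eqref{eq:graphconv}, which controls exactly $\sup_q\int_K\dd k\,\norm{a(k)(\psi_{\Lambda,q}-\psi_q)}^2$.
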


\begin{proof}
    Fix $f $ measurable with $D_\eta(f)+X_\eta(f)<\infty$. We start on the $n$-th Fock sector. Abbreviate $\mathbf k = (k_1,...,k_n) \in \bbR^{3n}$ and calculate
    \begin{equation}
        \label{eq:channel_inner_product_first_calculation}
        \begin{split}
            &\langle (C^\eta f)^{(n)}, (C^\eta f)^{(n)}\rangle \\
            &= \frac 1 n \int\de \mathbf{k}\, \sum\limits_{i,j=1}^n 1_{B_r(P)}(k_i)1_{B_r(P)}(k_j)\overline{f(P-k_i)\eta_{P-k_i}^{(n-1)}(\mathbf{k}\backslash k_i})f(P-k_j) \eta_{P-k_j}^{(n-1)}(\mathbf{k}\backslash k_j).
        \end{split}
    \end{equation}
    Here, $\mathbf{k}\backslash k_i = (k_1,\dots,\hat{k}_i,\dots,k_n)$, where a hat indicates an omitted index.
    Let us now separate two cases. If $i = j$, the terms become
    $$\int_{B_r} \de q \, |f(q)|^2 \big\Vert \eta_{q}^{(n-1)} \big\Vert ^2.$$
    Because there are $n$ of them, the prefactor $1/n$ in \eqref{eq:channel_inner_product_first_calculation} cancels, giving the first term in \eqref{eq:creationnorm}. The cross-terms evaluate after a change of variables $u_j = P-k_j$ to 
    $$(n-1)
    \iint_{B_r \times B_r} \de q \de q' \overline{f(q)} f(q') \langle \eta^{(n-1)}_{q}(P-q',\cdot), \eta^{(n-1)}_{q'} (P-q,\cdot) \rangle.$$ 
    But $\sqrt{n-1}\eta_q ^{(n-1)} (P-q',\cdot) = a(P-q')\eta_q ^{(n-1)}$.
    Summing over $n$ and exchanging summation and integration by dominated convergence (recall that we asked for $D_\eta (f) + X_\eta (f) < \infty$) yields \eqref{eq:creationnorm}. Next, note that $C^{\kappa}- C^{\theta} = C^{\kappa-\theta}$ for two families $\kappa,\theta$. We can therefore extract \eqref{eq:creationdiff} from \eqref{eq:creationnorm} by considering $\eta = \kappa - \theta$ and bounding both terms on the right side.
    For the first term, we bound
    $$\int_{B_r} \dd q \, |f(q)|^2\norm{\eta_q}^2 \le \sup_{q \in B_r} \Vert \eta_q\Vert^2 \Vert f \Vert^2.$$
    For the second term, use $xy \le \frac 12  (x^2+y^2)$ to bound
    \begin{equation}
        \nonumber
        \begin{split}
            \left| \overline{f(q)}f(q')
 \ip{a(P-q')\eta_q}{a(P-q)\eta_{q'}} \right| &\le \big|\overline{f(q)}\big| \, \Vert a(P-q')\eta_{q} \Vert \, \big|f(q')\big| \, \Vert a(P-q)\eta_{q'} \Vert \\
 &\le \frac 1 2 \left( |f(q)|^2 \Vert a(P-q')\eta_{q} \Vert^2 + |f(q')|^2 \Vert a(P-q)\eta_{q'} \Vert^2  \right).
        \end{split}
    \end{equation}
Integrating this over $B_r \times B_r$ yields the bound
$$X_\eta (f)  \le \frac 1 2 \iint_{B_r \times B_r}\de q \de q' \,  |f(q)|^2 \Vert a(P-q')\eta_{q} \Vert^2 + |f(q')|^2 \Vert a(P-q)\eta_{q'} \Vert^2.$$
Noting that
$$\int_{B_r}\de q \, |f(q)|^2  \int_{B_r }\de q' \, \Vert a(P-q')\eta_{q} \Vert^2 \le \int_{B_r} \de q \, |f(q)|^2  \sup\limits_{q'' \in B_r}\int_{B_r }\de q' \,   \Vert a(P-q')\eta_{q''} \Vert^2$$
finishes the proof.
\end{proof}

For the physical family, write $J_P=C^\psi$ and use the suggestive notation
\begin{equation}\label{eq:J-and-S}
 J_Pf=\int_{B_r} \dd q \, f(q)a^*(P-q)\psi_q,
 \qquad
 S_Pf=\int_{B_r} \dd q \, f(q)g(P-q)\psi_q.
\end{equation}
The first integral is simply a shorthand for the component definition \eqref{eq:component}. In contrast, the second integral is a proper Bochner integral (the integrand is strongly measurable and integrable by Cauchy-Schwarz). Both maps are bounded as maps from $L^2(B_r)$ into Fock space.  Indeed, with $b_P(q)=|g(P-q)|$, the pull-through estimate and Lemma~\ref{lem:creation} give
\[
 \norm{J_Pf}^2
 \le \bigl(1+4\norm{b_P}_{L^2(B_r)}^2\bigr)\norm f^2,
 \qquad
 \norm{S_Pf}\le\norm{b_P}_{L^2(B_r)}\norm f.
\]
For a cutoff $\Lambda>|P|+r$, define $J_{\Lambda,P}=C^{\psi_\Lambda}$ and define $S_{\Lambda,P}$ by replacing $\psi_q$ with $\psi_{\Lambda,q}$ in \eqref{eq:J-and-S}.  Then
$g_\Lambda(P-q)=g(P-q)$ on $B_r$.  For all sufficiently large $\Lambda$, \eqref{eq:ptrR} and \eqref{eq:denom} give
\[
 \norm{a(P-q')\psi_{\Lambda,q}}\le3b_P(q')
 \qquad(q,q'\in B_r).
\]
Consequently Lemma~\ref{lem:creation} shows that $J_{\Lambda,P}$ and $S_{\Lambda,P}$ are bounded, uniformly in all sufficiently large $\Lambda$ for each fixed $P$.

\begin{lemma}\label{lem:cutoff-channel}
For every $f\in L^2(B_r)$,
\begin{equation}\nonumber
 \bigl(H_\Lambda(P)-E_\Lambda(0)-1\bigr)J_{\Lambda,P}f
 =J_{\Lambda,P}M_{h_\Lambda}f+S_{\Lambda,P}f
\end{equation}
with $h_\Lambda(q)=E_\Lambda(q)-E_\Lambda(0)$. In particular, $J_{\Lambda,P}f\in\cD(H_\Lambda(P))$.
\end{lemma}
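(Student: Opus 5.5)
The plan is a direct computation of $H_\Lambda(P)$ on vectors of the form $a^*(P-q)\psi_{\Lambda,q}$, followed by averaging against $f$. Everything is justified because the cutoff model has a bounded coupling $g_\Lambda\in\mathfrak h$. The formal commutation rules I will use are
\begin{align*}
 \tfrac12(P-\Pf)^2 a^*(k) &= a^*(k)\tfrac12(P-k-\Pf)^2,\\
 \N a^*(k) &= a^*(k)(\N+1),\\
 \Phi(g_\Lambda)a^*(k) &= a^*(k)\Phi(g_\Lambda)+g_\Lambda(k).
\end{align*}
Together they give
\[
 H_\Lambda(P)\,a^*(k)=a^*(k)\bigl(H_\Lambda(P-k)+1\bigr)+g_\Lambda(k).
\]
Formally, with $k=P-q$ and $H_\Lambda(q)\psi_{\Lambda,q}=E_\Lambda(q)\psi_{\Lambda,q}$, this yields
\[
 \bigl(H_\Lambda(P)-E_\Lambda(0)-1\bigr)a^*(P-q)\psi_{\Lambda,q}=h_\Lambda(q)\,a^*(P-q)\psi_{\Lambda,q}+g(P-q)\psi_{\Lambda,q}.
\]
Here $g_\Lambda(P-q)=g(P-q)$ because $\Lambda>|P|+r$. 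Integrating against $f(q)$ then gives the claim. The work is to make this rigorous, since $a^*(P-q)$ is an operator-valued distribution and $J_{\Lambda,P}$ is defined only through the components \eqref{eq:component}.

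To do this, I will verify the identity in weak form. It suffices to show that for all $\phi$ in a form core of $H_\Lambda(P)$ (say $\F_{\mathrm{fin}}\cap\cD(K(P))$ with nice components), one has
\[
 H_\Lambda(P)[\phi,J_{\Lambda,P}f]=\ip{\phi}{(E_\Lambda(0)+1)J_{\Lambda,P}f+J_{\Lambda,P}M_{h_\Lambda}f+S_{\Lambda,P}f}.
\]
The right-hand vector lies in $\F$ by Lemma~\ref{lem:creation} and the uniform bound $\norm{a(P-q')\psi_{\Lambda,q}}\le3b_P(q')$, because $h_\Lambda$ is bounded on $B_r$. Since $H_\Lambda(P)$ is self-adjoint, a vector $u$ satisfying $H_\Lambda(P)[\phi,u]=\ip{\phi}{w}$ for all $\phi$ in a form core, with $w\in\F$, lies in $\cD(H_\Lambda(P))$ and satisfies $H_\Lambda(P)u=w$. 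This also delivers the domain statement.

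To compute the left side, I will first check that $J_{\Lambda,P}f$ lies in the form domain $\cQ$. This follows from the sector formula \eqref{eq:component}, because $\psi_{\Lambda,q}\in\cD(H_\Lambda(q))\subset\cQ$ uniformly in $|q|\le 2r$. Then I treat each term of $H_\Lambda(P)$ separately, sector by sector. In the $(n+1)$-sector the free part $K(P)$ multiplies by $\frac12(P-\sum_i k_i)^2+n+1$. In the $j$-th summand of \eqref{eq:component} this equals $\frac12\bigl(q-\sum_{i\ne j}k_i\bigr)^2+n+1$ with $q=P-k_j$, which is exactly $K(q)+1$ acting on $\eta^{(n)}_q$. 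This produces $C^{(K(\cdot)+1)\psi_\Lambda}f$, understood weakly against $\phi$.

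For the field term $a^*(g_\Lambda)$, the symmetrized formula commutes with the insertion of the extra variable $k_j$. For $a(g_\Lambda)$, contracting the variable $k_j$ itself produces the term $g_\Lambda(P-q)f(q)\psi_{\Lambda,q}$, which is $S_{\Lambda,P}f$. Contracting any other variable reproduces $C^{a(g_\Lambda)\psi_\Lambda}f$. This is the usual computation behind $[a(g),a^*(k)]=g(k)$, done at the level of components and using Fubini; Lemma~\ref{lem:measurable} ensures that the integrands are jointly measurable. Summing the contributions gives
\[
 C^{(H_\Lambda(\cdot)+1)\psi_\Lambda}f+S_{\Lambda,P}f=J_{\Lambda,P}M_{E_\Lambda+1}f+S_{\Lambda,P}f,
\]
and this is the claim.

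The main obstacle is justifying the interchange of summation over sectors, integration over $q$, and the form pairing. In particular, one must show that the exchange-type cross terms, which appear as in the proof of Lemma~\ref{lem:creation}, are absolutely integrable after applying $K(P)$. I would handle this by pairing against $\phi$ with finitely many nonzero, bounded, compactly supported components. Then every sector sum is finite, and absolute integrability reduces to $\int_{B_r}|f|\,\norm{\psi_{\Lambda,q}}_{\cQ}\,\dd q<\infty$, which holds by uniform boundedness and Cauchy--Schwarz on the bounded ball. Equivalently, one may invoke the already-established commutator identity of Lemma~\ref{lem:finite_cutoff_commutator_gs} pointwise in $q$, tested against such $\phi$, and integrate in $q$.
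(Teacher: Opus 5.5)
Your proposal is correct in substance, and its computational core is the paper's. You commute $H_\Lambda(P)$ past the phonon of momentum $P-q$, use $H_\Lambda(q)\psi_{\Lambda,q}=E_\Lambda(q)\psi_{\Lambda,q}$ and $g_\Lambda(P-q)=g(P-q)$ for $q\in B_r$, and integrate against $f$. The difference is in the closing step.

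The paper puts all the work on the test vector. It uses $\ip{\Xi}{J_{\Lambda,P}f}=\int_{B_r}\dd q\,f(q)\ip{a(P-q)\Xi}{\psi_{\Lambda,q}}$ for $\Xi\in\cD(\N^{1/2})$ and takes $\Xi=H_\Lambda(P)\xi$ with $\xi$ in the operator core $\mathfrak C$. It then applies the core identity of Lemma~\ref{lem:finite_cutoff_commutator} to $\xi$ and reads off $J_{\Lambda,P}f\in\cD(H_\Lambda(P)^*)=\cD(H_\Lambda(P))$ from the definition of the adjoint. This needs no a priori regularity of $J_{\Lambda,P}f$. The contraction producing $S_{\Lambda,P}f$ also happens inside $a(P-q)\Phi(g_\Lambda)\xi$, so no sector-by-sector Fubini argument on the components of $J_{\Lambda,P}f$ is needed. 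Your route through the form representation theorem instead requires you to prove $J_{\Lambda,P}f\in\cQ$ first.

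That extra step is justified too quickly. By the component formula \eqref{eq:component}, $\N^{1/2}J_{\Lambda,P}f=C^{\theta}f$ with $\theta_q=(\N+1)^{1/2}\psi_{\Lambda,q}$, and $(P-\Pf)J_{\Lambda,P}f=C^{\theta}f$ with $\theta_q=(q-\Pf)\psi_{\Lambda,q}$. Via \eqref{eq:creationdiff}, the exchange terms then need $\sup_q\int_{B_r}\dd q'\,\norm{a(P-q')\theta_q}^2\le\sup_q\norm{\N^{1/2}\theta_q}^2$. That is, you need uniform bounds on $\norm{\N\psi_{\Lambda,q}}$ and $\norm{\N^{1/2}(q-\Pf)\psi_{\Lambda,q}}$.

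A uniform $\cQ$-bound on the family does not give this in general. For a constant family $\eta_q\equiv\eta$ one has $C^\eta f=a^*(F)\eta$ with $F=\one_{B_r(P)}f(P-\cdot)$. Take $\eta=\sum_m m^{-3/2}\eta_m$, where $\eta_m$ is normalized with $m$ phonons in the state $F$ and $m$ in the state $F(-\,\cdot)$. Then $\eta\in\cQ$, but $a^*(F)\eta\notin\cD(\N^{1/2})$.

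What does work is the $q$-uniform operator-domain bound $\norm{K(q)\psi_{\Lambda,q}}\le C_\Lambda$. It follows from $\norm{H_\Lambda(q)\psi_{\Lambda,q}}=|E_\Lambda(q)|$ and the $q$-uniform Kato--Rellich estimate, and you feed it into \eqref{eq:creationdiff}.

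The simpler repair avoids this altogether. Your test vectors already lie in $\cD(K(P))=\cD(H_\Lambda(P))$, and $\mathfrak C$ is an operator core (Lemma~\ref{lem:finite_cutoff_core}). So read your pairing as $\ip{H_\Lambda(P)\phi}{J_{\Lambda,P}f}$ and conclude by self-adjointness. The $\cQ$ step then disappears, and you are exactly at your closing ``equivalently'' remark, which is the paper's proof. The identity needed there is Lemma~\ref{lem:finite_cutoff_commutator}, not the ground-state pull-through Lemma~\ref{lem:finite_cutoff_commutator_gs}.
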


\begin{proof}
    Let $\mathfrak C$ be the operator core from Lemma \ref{lem:finite_cutoff_core}. Let $\Xi \in \mathfrak C$. Moreover, take $f \in L^2(B_r)$. Then, by the component-wise definition of $J_{\Lambda,P}$ and bosonic symmetry,
    \begin{equation}
        \label{eq:simple_J_action}
        \langle \Xi, J_{\Lambda,P} f \rangle = \int_{B_r} \de q \, f(q) \langle a(P-q) \Xi, \psi_{\Lambda,q} \rangle.
    \end{equation}
    \eqref{eq:simple_J_action} is directly extended to $\Xi \in \cD(\cN^{1/2})$ because $\mathfrak C$ is a core for $\cN^{1/2}$ as well.
    Using now Lemma \ref{lem:finite_cutoff_commutator} , we find
    $$a(P-q)H_{\Lambda}(P) \xi = \left[ (H_\Lambda(q) +1)a(P-q) + g_\Lambda (P-q) \right] \xi.$$
    Inserting this into \eqref{eq:simple_J_action} with the choice $\Xi = H_\Lambda (P) \xi$ for $\xi \in \mathfrak C$ together with the self-adjointness of $H_{\Lambda}(q)$ and Lemma \ref{lem:finite_cutoff_commutator_gs} yields
    \begin{equation}
        \nonumber
        \begin{split}
            \langle H_\Lambda (P)\xi, J_{\Lambda,P} f \rangle &= \int_{B_r} \de q \, f(q) \langle a(P-q) \xi , (E_{\Lambda}(q)+1) \psi_{\Lambda,q} \rangle + \int_{B_r} \de q \, f(q) g_\Lambda(P-q)\langle \xi, \psi_{\Lambda,q}    \rangle \\
            &= \langle \xi, J_{\Lambda,P} M_{E_\Lambda(\cdot)+1} f \rangle + \langle \xi, S_{\Lambda,P}f \rangle. 
        \end{split}
    \end{equation}
    The second equality follows by the definition of $S_{\Lambda,P}f$ and once again applying \eqref{eq:simple_J_action}. Subtracting $E_\Lambda(0)+1$ from the Hamiltonian in the last display yields
    $$\langle [H_\Lambda (P) - E_\Lambda (0)-1] \xi , J_{\Lambda,P}f \rangle = \langle \xi , J_{\Lambda,P} M_{h_\Lambda}f + S_{\Lambda,P}f \rangle. $$
    Since $\mathfrak C$ is an operator core and by self-adjointness of $H_\Lambda (P)$, this shows the desired commutation relations and $J_{\Lambda,P}f \in \cD(H_\Lambda(P))$.
\end{proof}

\begin{proposition}\label{prop:intertwine}
The map $J_P:L^2(B_r)\to\F$ takes values in $\cD(H(P))$, is bounded for the graph norm of $H(P)$, and satisfies
\begin{equation}\label{eq:intertwine}
 \bigl(H(P)-E_*\bigr)J_P=J_PM_h+S_P.
\end{equation}
\end{proposition}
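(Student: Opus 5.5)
The plan is to deduce \eqref{eq:intertwine} from the cutoff identity of Lemma~\ref{lem:cutoff-channel} by letting $\Lambda\to\infty$ at fixed $P$ and $f\in L^2(B_r)$. We use that $H(P)$ is closed (it is self-adjoint) together with the norm-resolvent convergence of $H_\Lambda(P)$ to $H(P)$ from \ref{I:cutoff} and Proposition~\ref{prop:uniform_resolvent_conv}. The rearranged cutoff identity reads
\[
 H_\Lambda(P)J_{\Lambda,P}f=J_{\Lambda,P}M_{h_\Lambda+E_\Lambda(0)+1}f+S_{\Lambda,P}f .
\]
We show that both $J_{\Lambda,P}f$ and the right-hand side converge in $\F$. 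A graph-closedness statement for norm-resolvent limits then gives $J_Pf\in\cD(H(P))$ and identifies $H(P)J_Pf$.

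\textbf{Step 1: convergence of $J_{\Lambda,P}f$ to $J_Pf$.} Apply the difference bound \eqref{eq:creationdiff} with $\eta=\psi_\Lambda$ and $\widetilde\eta=\psi$. The first supremum is $\delta_\Lambda^2\to0$ by Proposition~\ref{lem:riesz-local}. For the second, the momenta $P-q'$ with $q'\in B_r$ range over the bounded set $K=\overline{B_r(P)}$. Hence
\[
 \int_{B_r}\dd q'\,\|a(P-q')(\psi_{\Lambda,q}-\psi_q)\|^2=\|A_K(\psi_{\Lambda,q}-\psi_q)\|^2_{L^2(K;\F)},
\]
whose supremum over $q$ tends to zero by \eqref{eq:graphconv}. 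The finiteness hypotheses $D+X<\infty$ follow from the pull-through bounds $\|a(P-q')\psi_{(\Lambda),q}\|\le 3b_P(q')$ and $b_P\in L^2(B_r)$. Note that $b_P$ is bounded once $|P|>r$, and in general is square integrable because $|k|^{-2}$ is locally integrable.

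\textbf{Step 2: convergence of the right-hand side.} By uniform energy convergence \eqref{eq:uniform-energy-convergence}, $M_{h_\Lambda+E_\Lambda(0)+1}\to M_{E(\cdot)+1}$ in operator norm on $L^2(B_r)$. Since $J_{\Lambda,P}$ is uniformly bounded in $\Lambda$, it suffices to combine this with Step 1 applied to the fixed function $(E+1)f$. For $S_{\Lambda,P}f-S_Pf$ we use the Bochner bound
\[
 \|S_{\Lambda,P}f-S_Pf\|\le\delta_\Lambda\|b_P\|_{L^2(B_r)}\|f\|\longrightarrow0,
\]
using that $g_\Lambda(P-q)=g(P-q)$ on $B_r$ for $\Lambda>|P|+r$.

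\textbf{Step 3: closedness and conclusion.} This is the main technical point, since $\cD(H_\Lambda(P))$ and $\cD(H(P))$ differ. Fix $b>C_0+1$ and set $\xi_\Lambda=J_{\Lambda,P}f$ and $\zeta_\Lambda=(H_\Lambda(P)+b)\xi_\Lambda$. By Steps 1 and 2, $\xi_\Lambda\to\xi=J_Pf$ and $\zeta_\Lambda\to\zeta$ in $\F$. Then
\[
 \xi_\Lambda=(H_\Lambda(P)+b)^{-1}\zeta_\Lambda\longrightarrow(H(P)+b)^{-1}\zeta ,
\]
by norm-resolvent convergence and the uniform bound $\|(H_\Lambda(P)+b)^{-1}\|\le1$. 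Hence $\xi=(H(P)+b)^{-1}\zeta\in\cD(H(P))$ and $(H(P)+b)J_Pf=\zeta$. Subtracting $b$ gives
\[
 H(P)J_Pf=J_PM_{E+1}f+S_Pf .
\]
Subtracting $E_*J_Pf$, and using $E(q)+1-E_*=h(q)$, yields \eqref{eq:intertwine}.

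\textbf{Graph-norm boundedness.} The identity just proved gives
\[
 \|H(P)J_Pf\|\le\bigl(\|J_P\|\sup_{B_r}|E+1|+\|b_P\|_{L^2(B_r)}\bigr)\|f\|.
\]
This is finite because $\sup_{B_r}h\le\frac12$ and $J_P$, $S_P$ are bounded, so $J_P$ is bounded into $\cD(H(P))$ with its graph norm.
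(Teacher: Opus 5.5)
Your proposal is correct and follows essentially the same route as the paper: you establish convergence of $J_{\Lambda,P}$, $S_{\Lambda,P}$ and the multipliers, then pass to the limit in the resolvent form of the cutoff identity from Lemma~\ref{lem:cutoff-channel}. The only difference is cosmetic: you absorb the $\Lambda$-dependent shift $E_\Lambda(0)+1$ into the multiplier and work at the real point $-b$, where $\|(H_\Lambda(P)+b)^{-1}\|\le1$ is immediate, whereas the paper keeps the shifted operators, works at the spectral point $i$, and checks their norm-resolvent convergence separately.
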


\begin{proof}
    We begin with showing $J_{\Lambda,P}M_{h_\Lambda}+ S_{\Lambda,P} \to J_P M_h +S_P$ in operator norm. First, note
    $$\Vert (S_{\Lambda,P}-S_P)f \Vert \le \int_{B_r} \de q |f(q)|b_P(q) \Vert \psi_q - \psi_{\Lambda,q}\Vert \le \delta_\Lambda \Vert b_P \Vert \, \Vert f \Vert.$$
    This verifies $S_{\Lambda,P} \to S_P$ in operator norm. Next, \eqref{eq:creationdiff} implies together with Proposition \ref{lem:riesz-local} and \eqref{eq:graphconv} that $J_{\Lambda,P} \to J_P$ in operator norm. Now, because $ \sup\limits_{|q|\le 2r} |E_\Lambda (q) -E(q)| \to 0$ thanks to Proposition \ref{prop:form-comparison}, we have $\Vert h_\Lambda - h \Vert _\infty \to 0$. This implies the claimed operator norm convergence. Indeed,
    $$\Vert J_{\Lambda,P}M_{h_\Lambda}-J_P M_h \Vert \le \Vert J_{\Lambda,P} - J_P \Vert \, \Vert h_\Lambda \Vert_\infty + \Vert J_P \Vert \, \Vert h_\Lambda - h \Vert_\infty \to 0.$$
    It remains to verify 
    $$
    \bigl(H_\Lambda(P)-E_\Lambda(0)-1\bigr)J_{\Lambda,P}f \to 
    \bigl(H(P)-E(0)-1\bigr)J_{P}f.
    $$
    Analogously to $E_* = E(0)+1$, define $E_{\Lambda,*} = E_\Lambda (0) +1$.
    By the resolvent identity it holds
    $$\Vert (H_\Lambda(P) - E_{\Lambda,*}-i)^{-1} - (H_\Lambda(P) - E_{*}-i)^{-1}\Vert \le |E_{\Lambda,*}-E_*|.$$
    But also
    $$\Vert (H_\Lambda(P) - E_{*}-i)^{-1} - (H(P) - E_{*}-i)^{-1}\Vert \to 0$$
    by norm resolvent convergence of $H_\Lambda (P) \to H(P)$. This shows $H_\Lambda(P) - E_\Lambda (0) - 1 \to H(P) - E (0) - 1$ in norm resolvent sense. Fix some $f \in L^2(B_r)$. Then, by Lemma \ref{lem:cutoff-channel},
    $J_{\Lambda,P}f \in \cD(H_\Lambda(P))$. Moreover,
    $$J_{\Lambda,P}f = (H_\Lambda(P)-E_\Lambda(0)-1-i)^{-1} (J_{\Lambda,P}M_{h_\Lambda}f +S_{\Lambda,P}f - iJ_{\Lambda,P}f).$$
    Taking $\Lambda \to \infty$ then shows
    $$J_{P}f = (H(P)-E(0)-1-i)^{-1} (J_{P}M_{h}f +S_{P}f - iJ_{P}f).$$
    Because the range of the inverse of an operator is its domain, $J_{P}f \in \cD(H(P))$ and \eqref{eq:intertwine} follows by multiplying $H(P)-E(0)-1-i$ on both sides and adding $i J_P f$. 
    Finally, by \eqref{eq:intertwine} we have
    $$\Vert H(P)J_P f\Vert \le \big((|E_*| + \Vert h \Vert_\infty) \Vert J_P \Vert + \Vert S_P \Vert \big) \Vert f\Vert.$$
    But then
    $\Vert J_P f \Vert + \Vert H(P)J_P f\Vert \le C \Vert f \Vert,$
    as desired.
\end{proof}

\section{Projecting onto $\sC_P$}
\label{sec:projection_construction}

Let us put $\sC_P=\operatorname{Ran}J_P$, $G_P=J_P^*J_P$ and $b_P(q)=|g(P-q)|$. Recall that $h(q)=E(q)-E(0)$ and $h(q)\ge c|q|^2$ on $B_r$.
For $|P|>r$, define
\begin{equation}\nonumber
 \beta_P^2=\int_{B_r}\de q\,\frac{b_P(q)^2}{h(q)},
\end{equation}
where the value of the integrand at $q=0$ is irrelevant. For future use we record the immediate bound
\begin{equation}
    \beta_P^2\le\frac{2\alpha r}{\pi c(|P|-r)^2}
 =O_\alpha(|P|^{-2})
 \qquad\text{as }|P|\to\infty.\label{eq:projection-beta-bound}
\end{equation}
The goal of this section is to prove the following proposition.

\begin{proposition}\label{prop:projection}
There are constants $C<\infty$ and $P_0<\infty$ such that, for every
$P\in\bbR^3$ with $|P|\ge P_0$, one has $\frac12\le G_P\le\frac32$.
In particular, $G_P$ is invertible and $\sC_P$ is closed.
Put $F_P=G_P^{-1/2}$ and $U_P=J_PF_P$.
Then $U_P:L^2(B_r)\to\cF$ is an isometry onto $\sC_P$, and the
orthogonal projections onto $\sC_P$ and $\sC_P^\perp$ are given by
\begin{equation}\label{eq:projection-formula}
 \Pi_P=U_PU_P^*=J_PG_P^{-1}J_P^*,
 \qquad Q_P=1-\Pi_P.
\end{equation}
Moreover, $\norm{F_P-1}+\norm{\Pi_P-J_PJ_P^*}\le C\beta_P^2$ and,
for every $f\in L^2(B_r)$,
\begin{equation}\label{eq:projection-energy}
 (1-C\beta_P^2)\ip{f}{M_hf}
 \le\ip{F_Pf}{M_hF_Pf}
 \le(1+C\beta_P^2)\ip{f}{M_hf}.
\end{equation}
Finally, $\Pi_P\cF\subset\cD(H(P))$, and both $\Pi_P$ and $Q_P$ leave the form domain $\cQ$ invariant.
\end{proposition}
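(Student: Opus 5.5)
The plan is to compute $G_P=J_P^*J_P$ explicitly from the norm formula \eqref{eq:creationnorm} of Lemma~\ref{lem:creation}, polarized. Since $\|\psi_q\|=1$, this gives
\[
 G_P=1+X_P,\qquad (X_Pf)(q)=\int_{B_r}\dd q'\,\ip{a(P-q)\psi_q}{a(P-q')\psi_{q'}}f(q'),
\]
where I have written the exchange kernel $x_P(q,q')=\ip{a(P-q)\psi_q}{a(P-q')\psi_{q'}}$ in whichever orientation matches \eqref{eq:creationnorm}. The whole proposition then reduces to estimates on $X_P$. The crude bound $\|a(P-q')\psi_q\|\le 2b_P(q')$ from \eqref{eq:ptrbound} only gives $\|X_P\|\le 4\|b_P\|^2_{L^2(B_r)}=O(|P|^{-2})$. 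That already yields $\frac12\le G_P\le\frac32$ for $|P|\ge P_0$, hence invertibility of $G_P$, closedness of $\sC_P=\Ran J_P$ (since $J_P$ is bounded below), and the statements that $U_P=J_PG_P^{-1/2}$ is an isometry with range $\sC_P$ and $\Pi_P=U_PU_P^*=J_PG_P^{-1}J_P^*$.

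The estimates must be measured against $\beta_P^2$, and in particular must be weighted by $h$. The key step is the operator inequality
\[
 |X_P|\le C\,M_{h}^{1/2}\,\cdot\,\beta_P^2\,\cdot\,M_h^{1/2}\quad\text{(in some form)},
\]
or more concretely a Schur/Hilbert--Schmidt bound of the kernel with weights. Because the kernel is bounded by $4b_P(q)b_P(q')$, one gets $\|M_h^{-1/2}X_PM_h^{-1/2}\|\le 4\beta_P^2$ by Cauchy--Schwarz, since $\int b_P^2/h=\beta_P^2$. Using $h\le\frac12$ on $B_r$, this also gives $\|X_P\|\le 2\beta_P^2$, so $\|F_P-1\|\le C\beta_P^2$ follows by the functional calculus, for example via $|(1+x)^{-1/2}-1|\le C|x|$ for $|x|\le\frac12$. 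Likewise
\[
 \Pi_P-J_PJ_P^*=J_P(G_P^{-1}-1)J_P^*,
\]
and $\|J_P\|$ is uniformly bounded, which gives the second norm bound.

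For \eqref{eq:projection-energy} I would write
\[
 \ip{F_Pf}{M_hF_Pf}-\ip{f}{M_hf}=\ip{(F_P-1)f}{M_h(F_P+1)f}\ \text{(symmetrized)}
\]
and try to show that $M_h^{1/2}(F_P-1)M_h^{-1/2}$ has norm $\le C\beta_P^2$. This is the main obstacle: $F_P=(1+X_P)^{-1/2}$ is not obviously compatible with the weight $M_h^{1/2}$. I would first expand $F_P-1$ in the convergent power series $\sum_{n\ge1}c_nX_P^n$. Each term factors as
\[
 X_P^n=M_h^{1/2}\bigl(M_h^{-1/2}X_PM_h^{-1/2}\bigr)\bigl(M_hX_P\cdots\bigr)
\]
with intermediate factors $M_h\le\frac12$, so that $\|M_h^{1/2}X_P^nM_h^{-1/2}\|\le\|M_h^{1/2}\|\,\|M_h^{-1/2}X_PM_h^{-1/2}\|\,\|M_h\|^{n-1}\cdots$. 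More simply, $M_h^{1/2}X_PM_h^{-1/2}=M_h(M_h^{-1/2}X_PM_h^{-1/2})$ has norm $\le 2\beta_P^2$, and then $\|M_h^{1/2}X_P^nM_h^{-1/2}\|\le(2\beta_P^2)^n$ by inserting $M_h^{-1/2}M_h^{1/2}$ between factors. Summing the series gives the weighted bound, and hence \eqref{eq:projection-energy} with $C\beta_P^2$.

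For the domain statements: $\Pi_P\cF\subset\Ran J_P\subset\cD(H(P))$ by Proposition~\ref{prop:intertwine}, and $\cD(H(P))\subset\cQ$, so $\Pi_P$ maps into $\cQ$. Then $Q_P=1-\Pi_P$ maps $\cQ$ into $\cQ$ trivially.
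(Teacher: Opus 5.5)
Your proposal is correct and follows essentially the paper's route. You obtain $G_P=1+R_P$ from the norm formula of Lemma~\ref{lem:creation} and use the kernel bound $4b_P(q)b_P(q')$ together with the weighted Cauchy--Schwarz inequality to get $\|M_h^{1/2}R_Pg\|\le 2\beta_P^2\|M_h^{1/2}g\|$; you then sum the power series for $F_P-1$ and get the domain claims from Proposition~\ref{prop:intertwine}. The differences are cosmetic: you bound $\Pi_P-J_PJ_P^*$ via $J_P(G_P^{-1}-1)J_P^*$ rather than $-U_PR_PU_P^*$, and you derive \eqref{eq:projection-energy} from $\operatorname{Re}\langle (F_P-1)f,M_h(F_P+1)f\rangle$ rather than from the reverse triangle inequality. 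Also, the kernel obtained by polarizing \eqref{eq:creationnorm} is $\langle a(P-q')\psi_q,a(P-q)\psi_{q'}\rangle$, not the one you wrote, but none of your estimates depend on this.
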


We first estimate the difference between $J_P^*J_P$ and the identity.

\begin{lemma}\label{lem:projection-kernel}
For every $|P|>r$, one has $G_P=1+R_P$, where $R_P$ is a self-adjoint operator on $L^2(B_r)$ with operator norm bounded by
\begin{align}
 \norm{R_P}\le4\norm{b_P}_{L^2(B_r)}^2
 \le4\norm{h}_{\infty}\beta_P^2.\nonumber
\end{align}
\end{lemma}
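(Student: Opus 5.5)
The plan is to read off $G_P=J_P^*J_P$ from the norm identity \eqref{eq:creationnorm} of Lemma~\ref{lem:creation} applied to the physical family $\eta=\psi$, using $\|\psi_q\|=1$. By polarization, for $f,f'\in L^2(B_r)$,
\[
 \ip{f}{G_Pf'}=\ip{J_Pf}{J_Pf'}=\int_{B_r}\dd q\,\overline{f(q)}f'(q)
 +\iint_{B_r\times B_r}\dd q\,\dd q'\,\overline{f(q)}f'(q')\,\ip{a(P-q')\psi_q}{a(P-q)\psi_{q'}} .
\]
I therefore define $R_P$ as the integral operator on $L^2(B_r)$ with kernel
\[
 R_P(q,q')=\ip{a(P-q')\psi_q}{a(P-q)\psi_{q'}} .
\]
Self-adjointness follows from $\overline{R_P(q',q)}=\ip{a(P-q')\psi_q}{a(P-q)\psi_{q'}}=R_P(q,q')$. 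Together with the identity above this gives $G_P=1+R_P$ as a bounded operator identity. Finiteness of $D_\psi(f)+X_\psi(f)$, needed to invoke Lemma~\ref{lem:creation}, comes from the bound below.

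For the norm bound I use the pull-through estimate \eqref{eq:ptrbound}. Since $|q|,|q'|<r\le 2r$, it gives $\|a(P-q')\psi_q\|\le 2|g(P-q')|=2b_P(q')$. Hence
\[
 |R_P(q,q')|\le 4\,b_P(q)\,b_P(q').
\]
The kernel is thus dominated by the rank-one kernel $4\,b_P\otimes b_P$. By Cauchy--Schwarz,
\[
 |\ip{f}{R_Pf'}|\le 4\Bigl(\int|f|b_P\Bigr)\Bigl(\int|f'|b_P\Bigr)\le 4\|b_P\|^2_{L^2(B_r)}\|f\|\,\|f'\|,
\]
which yields $\|R_P\|\le4\|b_P\|^2_{L^2(B_r)}$. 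The condition $|P|>r$ guarantees that $b_P$ is bounded on $B_r$, so $\|b_P\|_{L^2(B_r)}<\infty$ and the same estimate shows $X_\psi(f)<\infty$.

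The second inequality is pointwise. Since $h\ge0$, we have $b_P(q)^2=\frac{b_P(q)^2}{h(q)}\,h(q)\le\|h\|_\infty\frac{b_P(q)^2}{h(q)}$ for $q\neq0$. Integrating over $B_r$ gives $\|b_P\|^2\le\|h\|_\infty\beta_P^2$; the point $q=0$ is a null set.

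The only point needing care is justifying the polarized form of \eqref{eq:creationnorm} and the measurability of the kernel. Measurability I will take from Lemma~\ref{lem:measurable}. Polarization is legitimate because the absolute integrability $X_\psi$ holds for all $f,f'$ by the dominating rank-one bound, so the sesquilinear form is well defined and bounded.
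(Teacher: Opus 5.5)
Your proposal is correct and follows essentially the paper's route: you define $R_P$ through the exchange kernel in \eqref{eq:creationnorm}, dominate it pointwise by $4b_P(q)b_P(q')$ via the pull-through bound \eqref{eq:ptrbound}, and conclude the norm estimate. The only differences are minor: the paper integrates the kernel bound over $B_r\times B_r$ (a Hilbert--Schmidt estimate) where you apply Cauchy--Schwarz to the form, and you also spell out the self-adjointness and the elementary bound $\|b_P\|^2_{L^2(B_r)}\le\|h\|_\infty\beta_P^2$, which the paper leaves implicit.
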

\begin{proof}
    Recall Lemma \ref{lem:creation}, which gives for $f \in L^2(B_r)$ that
    $$\norm{J_P f}^2
 = \Vert f \Vert _{L^2 (B_r)} ^2
 +\iint_{B_r\times B_r}\dd q\,\dd q' \, \overline{f(q)}f(q')
 \ip{a(P-q')\psi_q}{a(P-q)\psi_{q'}}.$$
 We  define $R_P$ to be the operator on $L^2(B_r)$ with kernel $k(q,q') = \ip{a(P-q')\psi_q}{a(P-q)\psi_{q'}}$. By Cauchy-Schwarz and pull-through, the kernel is bounded from above by
$$|\ip{a(P-q')\psi_q}{a(P-q)\psi_{q'}}| \le 4b_P(q') b_P (q),$$
which implies that $R_P$ is well-defined, bounded and verifies the claimed operator bound by integrating the bound over $B_r \times B_r$.
\end{proof}

The following lemma shows that $R_P$ and $F_P-1$ are small with respect to the weighted norm $f\mapsto\|M_h^{1/2}f\|$. These estimates will allow us to compare $\langle F_Pf,M_hF_Pf\rangle$ with $\langle f,M_hf\rangle$.

\begin{lemma}\label{lem:projection-energy-error}
There is a constant $C<\infty$ such that, for all sufficiently large
$|P|$ and every $f\in L^2(B_r)$,
\begin{equation}
    \norm{M_h^{1/2}R_Pf}
 +\norm{M_h^{1/2}(F_P-1)f}
 \le C\beta_P^2\norm{M_h^{1/2}f},\label{eq:projection-energy-norm}\\
\end{equation}
\begin{equation}
     \big|\ip{f}{R_PM_hf}\big|
 \le C\beta_P^2\ip{f}{M_hf}.\nonumber
\end{equation}
The constant is independent of $P$ and $f$.
\end{lemma}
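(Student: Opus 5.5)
The whole argument rests on the kernel bound for $R_P$ from the proof of Lemma~\ref{lem:projection-kernel},
\[
 |k(q,q')|=\bigl|\ip{a(P-q')\psi_q}{a(P-q)\psi_{q'}}\bigr|\le 4b_P(q)b_P(q'),
\]
so that $R_P$ is dominated by the rank-one operator $4|b_P\rangle\langle b_P|$. I will first prove the two estimates for $R_P$ and then pass to $F_P-1$ through a power series in $R_P$.

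For the first term of \eqref{eq:projection-energy-norm}, I estimate pointwise
\[
 |(R_Pf)(q)|\le 4b_P(q)\int_{B_r}\dd q'\,b_P(q')|f(q')|.
\]
Cauchy--Schwarz with weights $h^{1/2}$ and $h^{-1/2}$ gives $\int b_P|f|\le \beta_P\norm{M_h^{1/2}f}$. Then
\[
 \norm{M_h^{1/2}R_Pf}\le 4\bigl(\textstyle\int h\,b_P^2\bigr)^{1/2}\beta_P\norm{M_h^{1/2}f}\le 4\norm{h}_\infty\beta_P^2\norm{M_h^{1/2}f},
\]
using $\int h\,b_P^2\le\norm{h}_\infty^2\beta_P^2$. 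Similarly,
\[
 |\ip{f}{R_PM_hf}|\le 4\Bigl(\int b_P|f|\Bigr)\Bigl(\int b_P h|f|\Bigr)\le 4\beta_P^2\norm{h}_\infty\norm{M_h^{1/2}f}^2,
\]
where the second factor is bounded by $(\int h\,b_P^2)^{1/2}\norm{M_h^{1/2}f}$. This settles both statements for $R_P$.

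For $F_P-1$, I take $|P|$ large so that $\norm{R_P}\le 4\norm{h}_\infty\beta_P^2\le\frac12$, using \eqref{eq:projection-beta-bound}. Then $F_P=(1+R_P)^{-1/2}=\sum_{n\ge0}c_nR_P^n$ with binomial coefficients $|c_n|\le1$, and $F_P-1=R_P\,T_P$ with $T_P=\sum_{n\ge1}c_nR_P^{n-1}$. The key point is to keep $R_P$ as the leftmost factor, so that I only need the weighted bound for a single $R_P$. The bound for $R_P$ above uses only $|(R_Pf)(q)|\le 4b_P(q)\int b_P|f|$, and a more robust form is
\[
 \norm{M_h^{1/2}R_Pg}\le 4\bigl(\textstyle\int h\,b_P^2\bigr)^{1/2}\int b_P|g|.
\]
So it suffices to show $\int b_P|T_Pf|\lesssim\beta_P\norm{M_h^{1/2}f}$.

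Controlling this weighted $L^1$ norm of $T_Pf$ is the main obstacle. I write $T_Pf=c_1f+R_P\sum_{n\ge2}c_nR_P^{n-2}f$. For the first piece, $\int b_P|f|\le\beta_P\norm{M_h^{1/2}f}$. For the rest, each application of $R_P$ produces a multiple of $b_P$, and then $\int b_P\cdot 4b_P=4\norm{b_P}^2\le 4\norm{h}_\infty\beta_P^2\le\frac12$. The functional $\phi\mapsto\int b_P|\phi|$ therefore contracts by a factor at most $\frac12$ under $R_P$, via $\int b_P|R_P\phi|\le4\norm{b_P}^2\int b_P|\phi|$. Iterating, $\int b_P|R_P^mf|\le 2^{-m}\int b_P|f|$, and the geometric series gives $\int b_P|T_Pf|\le 2\beta_P\norm{M_h^{1/2}f}$. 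Hence
\[
 \norm{M_h^{1/2}(F_P-1)f}\le C\beta_P^2\norm{M_h^{1/2}f},
\]
with $C$ depending only on $\norm{h}_\infty\le\frac12$ and not on $P$ or $f$. This completes \eqref{eq:projection-energy-norm}.
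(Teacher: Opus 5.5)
Your proof is correct and is essentially the paper's argument: the rank-one domination $|k(q,q')|\le 4b_P(q)b_P(q')$, the weighted Cauchy--Schwarz bound $\int b_P|f|\le\beta_P\|M_h^{1/2}f\|$ together with $\int h\,b_P^2\le\|h\|_\infty^2\beta_P^2$, and the binomial series for $(1+R_P)^{-1/2}$. The only difference is in the last step, where your factorization $F_P-1=R_PT_P$ with the contraction of $\phi\mapsto\int b_P|\phi|$ is valid (and gives the same constant $8\|h\|_\infty$) but not needed. The paper instead iterates the single-$R_P$ bound $\|M_h^{1/2}R_Pg\|\le 4\|h\|_\infty\beta_P^2\|M_h^{1/2}g\|$ to get $\|M_h^{1/2}R_P^nf\|\le(4\|h\|_\infty\beta_P^2)^n\|M_h^{1/2}f\|$ and sums the series, since that bound maps the weighted norm to itself.
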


\begin{proof}
    Recall that $R_P$ is defined to be the operator with integral kernel
    $$k(q,q') = \ip{a(P-q')\psi_q}{a(P-q)\psi_{q'}}.$$
    Therefore, for any $f \in L^2(B_r)$,
    $$|R_Pf(q)| = \left| \int_{B_r} \de x \, k(q,x) f(x) \right| \le 4 b_P(q) \int_{B_r}\de x \, b_P(x) |f(x)|.$$
    We now bound $\Vert M_h ^{1/2} R_Pf\Vert $. Writing out the inner product explicitly, we find
    \begin{equation}
        \nonumber
        \begin{split}
            \Vert M_h ^{1/2} R_Pf\Vert^2 = \int_{B_r} \de q \, h(q) |R_P f(q)|^2 &\le 16 \int \de q \,  h(q) b_P (q)^2  \left( \int \de q' \,  b_P (q') |f(q')| \right)^2 \\
            &\le 16\Vert M_h ^{1/2} f \Vert^2 \beta_P ^2  \int \de q \,  h(q) b_P (q)^2 .
        \end{split}
    \end{equation}
    Because $0<h(q) \le H_* = \Vert h \Vert_{L^\infty (B_r)}\le 1/2$ almost everywhere,
    $\int \de q \,  h(q) b_P (q)^2 \le H_* ^2 \beta_P ^2,$
    which implies with the previous bound and taking square roots on both sides that
    \begin{equation}
        \label{eq:first_term}
        \Vert M_h ^{1/2} R_Pf\Vert \le H_* \beta_P ^2 4 \Vert M_h ^{1/2} f \Vert.
    \end{equation}
    Next,
    $$\ip{f}{R_PM_hf} = \langle M_h ^{1/2}R_P f,  M_h ^{1/2}f \rangle.$$
    The second claimed inequality therefore follows straight from \eqref{eq:first_term} and Cauchy-Schwarz. We continue with bounding the second term in \eqref{eq:projection-energy-norm}. Choose $P$ large enough so that $H_* \beta_P ^2 4 < \frac 1 8$. Then $F_P = (1+R_P)^{-1/2}$ is a well-defined, bounded operator. In particular,  $\Vert R_P \Vert < 1/8$ due to Lemma \ref{lem:projection-kernel}. This implies that the series expansion
    \begin{equation}
        \label{eq:F_P_series_expansion}
        F_P - 1 = \sum\limits_{n=1}^\infty c_n R_P ^n
    \end{equation}
    converges in operator norm; in the above $c_n = (-1)^n \binom{2n}{n}/4^n$ are the power series coefficients of $(1+z)^{-1/2}$. Iteratively applying \eqref{eq:first_term} with $R_P ^n f$ instead of just $f$ for any $n \in \mathbb N$ yields the bound
    $$\Vert M_h ^{1/2} R_P ^n f \Vert \le (4H_* \beta_P^2)^n \Vert M_h ^{1/2}f \Vert.$$
    Because $M_h ^{1/2}$ is a bounded operator, we may multiply it on both sides of \eqref{eq:F_P_series_expansion} and apply it to each summand. The resulting norm bound yields
    $$\Vert M_h ^{1/2}(F_P - 1)f \Vert \le \sum\limits_{n=1}^\infty |c_n| \, \Vert M_h^{1/2} R_P^nf\Vert \le \Vert M_h ^{1/2} f \Vert \sum\limits_{n=1}^\infty |c_n| (4H_* \beta_P ^2)^n =\Vert M_h ^{1/2} f \Vert ( (1- 4H_* \beta_P ^2)^{-1/2}-1).$$
    Using $(1-x)^{-1/2} - 1 \le 2x$ for $0 < x \le 1/2$, the final bound
    $$\Vert M_h ^{1/2}(F_P - 1)f \Vert \le 8H_* \beta_P ^2\Vert M_h ^{1/2} f \Vert$$
    follows, and the proof is complete.
\end{proof}

\begin{proof}[Proof of Proposition \ref{prop:projection}]
Choose again $|P|$ large enough so that $4H_* \beta_P ^2 < 1/8$. Then, because
$$1-\Vert R_P \Vert  \le  G_P  \le 1+ \Vert R_P \Vert,$$
we find the claimed bound $1/2 \le G_P \le 3/2$. In particular, $F_P = G_P ^{-1/2}$ is well-defined, as before. This directly yields
$$\Vert J_P f \Vert ^2 = \langle f, G_P f \rangle \ge \frac 1 2 \Vert f \Vert^2.$$
In particular, $J_P$ is injective and has closed range. We obtain
$$U_P ^* U_P = F_P J^* _P J_P F_P = F_P G_P F_P = 1,$$
so $U_P$ is isometric. As $F_P$ is invertible, 
$\Ran (U_P) = \sC_P$. This shows that $\Pi_P = U_P U_P^*$ is the projection onto $\sC_P$. Indeed, 
$$\Pi_P ^2 = U_P U_P^* U_P U_P ^* = U_P U_P ^* = \Pi_P.$$
The fact that $\Pi_P$ is self-adjoint and bounded was already verified. Thus, $\Pi_P$ is the orthogonal projection onto $\sC_P$. The formulas \eqref{eq:projection-formula} follow from writing out the definitions and noting that $F_P$ is self-adjoint. The  bound on $\Vert 1-F_P \Vert \le C\beta_P ^2$ is verified exactly as in Lemma \ref{lem:projection-energy-error}, but leaving out the step of multiplying $M_h ^{1/2}$ on both sides. We can also write
$$\Pi_P - J_P J_P ^* = U_P (1-G_P) U_P ^* = - U_P R_P U_P ^* .$$
Because $U_P$ is isometric, $\Vert U_P R_P U_P ^* \Vert = \Vert R_P \Vert$, and so
$$\Vert \Pi_P - J_P J_P ^* \Vert \le 4 H_* \beta_P ^2$$
thanks to Lemma \ref{lem:projection-kernel}. Combining verifies 
$$\Vert F_P -1 \Vert + \Vert \Pi_P - J_P J_P ^* \Vert \le 12 H_* \beta_P ^2 .$$
To verify \eqref{eq:projection-energy}, we recall the bound \eqref{eq:projection-energy-norm}, which shows the upper bound. The lower bound follows by the reverse triangle inequality. Indeed,
$$\left|\Vert M_h ^{1/2}F_P  f \Vert - \Vert M_h ^{1/2} f \Vert \right| \le \Vert M_h ^{1/2} (F_P -1) f \Vert \le C \beta_P ^2 \Vert M_h ^{1/2}f \Vert,$$
where the second inequality is just Lemma \ref{lem:projection-energy-error}. It remains to verify $\Pi_P \cF \subseteq \cD(H(P))$. Note that for every $\xi \in \cF$, $G_P ^{-1}J_P ^* \xi \in L^2(B_r)$, and so 
$$\Pi_P \xi = J_P (G_P ^{-1} J_P ^*) \xi \in \cD(H(P)),$$
thanks to Proposition \ref{prop:intertwine}. Invariance of $\cQ$ under $\Pi_P$ and its complement follows.
\end{proof}

\section{Energy estimates on $\sC_P$ and cross-terms}
\label{sec:channel_energy_estimates}
We use the notation of Section~\ref{sec:projection_construction} and put $A_P=H(P)-E_*$. We recall the notation  $E_*=E(0)+1$ and
\[
 S_Pf=\int_{B_r}\de q\,f(q)g(P-q)\psi_q.
\]
The goal of this section is to prove the following.

\begin{proposition}\label{prop:channel-estimates}
There are constants $C<\infty$ and $P_0<\infty$ such that, for every
$P\in\bbR^3$ with $|P|\ge P_0$ and every $f\in L^2(B_r)$,
\begin{equation}
     \label{eq:channel-lower-bound}
     A_P[U_Pf]\ge(1-C\beta_P^2)\ip{f}{M_hf}
\end{equation}
\begin{equation}
    \label{eq:channel-cross-bound}
    \norm{Q_PA_PU_Pf}\le C\beta_P\ip{f}{M_hf}^{1/2}.
\end{equation}
In particular, for every $y\in\sC_P^\perp\cap\cQ$,
\begin{equation}\label{eq:channel_bound_sum}
 \begin{split}
 A_P[U_Pf+y]
 &\ge(1-C\beta_P^2)\ip{f}{M_hf}+A_P[y]
 -2C\beta_P\norm{y}\ip{f}{M_hf}^{1/2}.
 \end{split}
\end{equation}
The constants are independent of $P$, $f$ and $y$.
\end{proposition}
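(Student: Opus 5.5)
The plan is to use the intertwining relation of Proposition~\ref{prop:intertwine} to reduce everything to explicit quantities on $L^2(B_r)$. We then control the only genuinely non-diagonal object, the source term $S_P$, by a weighted Cauchy--Schwarz inequality against $h$; this is where $\beta_P$ enters. Throughout write $u=F_Pf$ and take $|P|\ge P_0$ with $P_0$ as in Proposition~\ref{prop:projection}. Then $U_Pf=J_Pu\in\cD(H(P))$, and $A_PU_Pf=J_PM_hu+S_Pu$ by \eqref{eq:intertwine}. The basic estimate, used repeatedly, is
\[
 \int_{B_r}\de q\,b_P(q)|u(q)|\le\Bigl(\int_{B_r}\frac{b_P^2}{h}\Bigr)^{1/2}\Bigl(\int_{B_r}h|u|^2\Bigr)^{1/2}=\beta_P\,\ip{u}{M_hu}^{1/2},
\]
which is finite precisely because $h(q)\ge c|q|^2$ and $|q|^{-2}$ is locally integrable. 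Combined with \eqref{eq:projection-energy}, it gives $\ip{u}{M_hu}\le(1+C\beta_P^2)\ip{f}{M_hf}$.

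For the lower bound \eqref{eq:channel-lower-bound}, since $U_Pf$ lies in the operator domain, I will expand
\[
 A_P[U_Pf]=\ip{J_Pu}{J_PM_hu}+\ip{J_Pu}{S_Pu}=\ip{u}{M_hu}+\ip{u}{R_PM_hu}+\ip{J_Pu}{S_Pu}.
\]
\begin{itemize}
\item The first term is at least $(1-C\beta_P^2)\ip{f}{M_hf}$ by \eqref{eq:projection-energy}.
\item The second term is bounded by $C\beta_P^2\ip{u}{M_hu}$ by Lemma~\ref{lem:projection-energy-error}.
\item For the third term, I use \eqref{eq:simple_J_action}, which extends to $\psi_{q'}\in\cD(\cN^{1/2})$. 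It gives $\ip{\psi_{q'}}{J_Pu}=\int_{B_r}\de q\,u(q)\ip{a(P-q)\psi_{q'}}{\psi_q}$. By the pull-through bound \eqref{eq:ptrbound} its modulus is at most $2\int b_P|u|$. Hence
\[
 |\ip{J_Pu}{S_Pu}|\le\int_{B_r}\de q'\,|u(q')|b_P(q')\,|\ip{J_Pu}{\psi_{q'}}|\le2\Bigl(\int b_P|u|\Bigr)^2\le2\beta_P^2\ip{u}{M_hu}.
\]
\end{itemize}
Collecting the three terms and converting $\ip{u}{M_hu}$ back to $\ip{f}{M_hf}$ via \eqref{eq:projection-energy} yields \eqref{eq:channel-lower-bound} after enlarging $C$.

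For the cross bound \eqref{eq:channel-cross-bound}, note that $Q_PJ_P=0$ because $\Ran J_P=\sC_P$. Hence $Q_PA_PU_Pf=Q_PS_Pu$, and
\[
 \norm{Q_PS_Pu}\le\norm{S_Pu}\le\int b_P|u|\,\norm{\psi_q}\le\beta_P\ip{u}{M_hu}^{1/2}\le C\beta_P\ip{f}{M_hf}^{1/2}.
\]
Finally, for \eqref{eq:channel_bound_sum}, take $y\in\sC_P^\perp\cap\cQ$. Since $U_Pf\in\cD(H(P))$, the form polarizes as
\[
 A_P[U_Pf+y]=A_P[U_Pf]+A_P[y]+2\operatorname{Re}\ip{y}{A_PU_Pf}.
\]
Since $y=Q_Py$, we have $\ip{y}{A_PU_Pf}=\ip{y}{Q_PA_PU_Pf}$. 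Cauchy--Schwarz together with the two bounds just proved then gives the claim.

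I expect no serious obstacle here. The one delicate point is the third term in the lower bound, i.e.\ the overlap of $J_Pu$ with $S_Pu$. It is not small in plain $L^2$ norm, but it must be small relative to the channel energy $\ip{u}{M_hu}$, which degenerates at $q=0$. Two facts make this work: the overlap factorizes into $\bigl(\int b_P|u|\bigr)^2$, and the weighted Cauchy--Schwarz above converts this factor into $\beta_P^2\ip{u}{M_hu}$. A minor justification is also needed to apply \eqref{eq:simple_J_action} to $\Xi=\psi_{q'}$, which follows from $\psi_{q'}\in\cD(\cN^{1/2})$.
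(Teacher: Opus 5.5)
Your proposal is correct and follows essentially the same route as the paper. The paper also expands $A_P[J_PF_Pf]$ via \eqref{eq:intertwine} into the diagonal term, the $R_P$ term and the overlap $\langle J_Pu,S_Pu\rangle$, bounds that overlap by $2\bigl(\int b_P|u|\bigr)^2\le 2\beta_P^2\langle u,M_hu\rangle$ using pull-through and the weighted Cauchy--Schwarz inequality, and converts back to $f$ via \eqref{eq:projection-energy}. It then handles the cross term through $Q_PA_PU_Pf=Q_PS_PF_Pf$ exactly as you do.
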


\begin{remark}
    The comparison in \eqref{eq:projection-energy} is used to express the preceding estimates in terms of $\langle f,M_hf\rangle$. The proof below first gives these estimates with $\langle F_Pf,M_hF_Pf\rangle$ instead. This is already sufficient for the argument in Section \ref{sec:finale}. Indeed, the lower bound on $\sC_P$ and the cross-term estimate only need to involve the same nonnegative quantity so that the cross-term error can be absorbed. We pass back to $f$ for aesthetic reasons and because it does not come with too much additional work.
\end{remark}

\begin{remark}
We will see that the key input in the preceding estimates is the finiteness of $\beta_P^2$. Indeed, it allows us to use the weighted Cauchy-Schwarz inequality
\[
    \int_{B_r}\mathrm dq \, |f(q)|b_P(q)
    \le \beta_P\langle f,M_hf\rangle^{1/2},
\]
thereby controlling the error terms relative to the low-energy action, which may be arbitrarily small. Since $h(q)$ is comparable to $|q|^2$ near zero and $b_P(q)\to b_P(0)>0$ as $q\to0$, the integrand has a singularity of order $|q|^{-2}$. This singularity is locally integrable precisely when $d>2$. Thus, in dimensions one and two, $\beta_P$ would be infinite and the present estimate would fail. In dimension three, we moreover have $\beta_P=O(|P|^{-1})$, which makes the resulting errors small enough to absorb for sufficiently large $|P|$.
\end{remark}

To get started, we first estimate the error term $S_P$ from the identity in Section~\ref{sec:one_phonon_channel}.

\begin{lemma}\label{lem:channel-S-estimates}
For every $|P|>r$ and every $f\in L^2(B_r)$,
\begin{equation}\nonumber
 \norm{S_Pf}\le\beta_P\ip{f}{M_hf}^{1/2}, \qquad \big|\ip{f}{J_P^*S_Pf}\big|
 \le2\beta_P^2\ip{f}{M_hf}.
\end{equation}
\end{lemma}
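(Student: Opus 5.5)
The first bound will come from the triangle inequality for the Bochner integral, followed by a weighted Cauchy--Schwarz inequality. Since $\norm{\psi_q}=1$,
\[
 \norm{S_Pf}\le\int_{B_r}\de q\,|f(q)|\,b_P(q)
 =\int_{B_r}\de q\,\bigl(h(q)^{1/2}|f(q)|\bigr)\bigl(b_P(q)h(q)^{-1/2}\bigr)
 \le\ip{f}{M_hf}^{1/2}\beta_P .
\]
The point $q=0$, where $h$ vanishes, is a null set and causes no problem. If $\ip{f}{M_hf}=\infty$ there is nothing to prove, but on $B_r$ this cannot happen, since $h$ is bounded there.

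For the second bound I will write $\ip{f}{J_P^*S_Pf}=\ip{J_Pf}{S_Pf}$ and compute this inner product with the duality formula \eqref{eq:simple_J_action}, taking $\Xi=S_Pf$. This requires $S_Pf\in\cD(\cN^{1/2})$. I will check this either by bounding $\norm{\cN^{1/2}\psi_q}$ uniformly on $B_{2r}$ through $\int\|a(k)\psi_q\|^2\de k$, using the pull-through formula \eqref{eq:ptr}, or more simply by passing through the cutoff versions and using Proposition~\ref{lem:riesz-local}. The formula then gives
\[
 \ip{J_Pf}{S_Pf}=\int_{B_r}\de q\,\overline{f(q)}\,\ip{\psi_q}{a(P-q)S_Pf}
 =\iint_{B_r\times B_r}\de q\,\de q'\,\overline{f(q)}f(q')g(P-q')\ip{\psi_q}{a(P-q)\psi_{q'}}.
\]
The bound \eqref{eq:ptrbound} gives $|\ip{\psi_q}{a(P-q)\psi_{q'}}|\le\norm{a(P-q)\psi_{q'}}\le 2b_P(q)$. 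Hence
\[
 |\ip{f}{J_P^*S_Pf}|\le 2\Bigl(\int_{B_r}\de q\,|f(q)|b_P(q)\Bigr)^2\le 2\beta_P^2\ip{f}{M_hf},
\]
again by the weighted Cauchy--Schwarz inequality from the first step.

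The only delicate step is justifying the exchange $\ip{\psi_q}{a(P-q)\psi_{q'}}$, that is, applying $a(P-q)$ to the Bochner integral and evaluating it pointwise in $q$, which is only defined almost everywhere. I plan to avoid the issue by working componentwise. On the $(n+1)$-phonon sector, \eqref{eq:component} and bosonic symmetry give
\[
 \ip{J_Pf}{\xi}=\sum_n\sqrt{n+1}\int_{B_r}\de q\,\overline{f(q)}\,\ip{\psi_q^{(n)}}{\xi^{(n+1)}(P-q,\cdot)}.
\]
With $\xi=S_Pf$, Fubini applies because of the absolute bound $2b_P(q)b_P(q')|f(q)||f(q')|$, which is integrable. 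The jointly measurable representatives supplied by Lemma~\ref{lem:measurable} make each of these steps rigorous.
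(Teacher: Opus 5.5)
Your proof is correct and follows essentially the paper's route: the triangle inequality plus weighted Cauchy--Schwarz for the first bound, and the duality formula for $J_P$ together with the pull-through bound \eqref{eq:ptrbound} for the second. The paper orders things more economically. It first pulls the Bochner integral $S_Pf$ out of the bounded functional $\langle J_Pf,\cdot\rangle$ and then applies the duality formula to each $\psi_q$, which only requires $\psi_q\in\cD(\N_{B_r(P)}^{1/2})$ (immediate from \eqref{eq:ptrbound} and $\int_{B_r(P)}|g|^2<\infty$) and makes your exchange of $a(P-q)$ with the integral unnecessary. Note also that your first suggested route to $S_Pf\in\cD(\N^{1/2})$, via $\int\|a(k)\psi_q\|^2\,\de k$ and \eqref{eq:ptrbound}, does not close as stated, since $g\notin L^2(\R^3)$; the fact itself follows directly from $\psi_q\in\cD(H(q))\subset\cQ$.
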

\begin{proof}
    By the triangle inequality for Bochner integrals we have
    $$\Vert S_P f \Vert \le \int_{B_r} \de q \, |f(q)g(P-q)|=\int_{B_r} \de q \, |f(q)|\sqrt{h(q)}\frac{|g(P-q)|}{\sqrt{h(q)}}.$$
    An application of Cauchy-Schwarz yields the first bound claimed.

    The second bound is obtained by the explicit formulas for $S_P$ and $J_P$. Indeed, for $\xi \in D(\cN_{B_r (P)} ^{1/2})$ and $u \in L^2(B_r)$ we obtain as before
    $$\langle J_P u, \xi \rangle = \int_{B_r} \de q \, \overline{u(q)} \langle \psi_q, a(P-q)\xi \rangle.$$
    We can in particular choose $\xi = \psi_q$ for any $q \in B_r$. Indeed, by pull-through,
    $$\Vert a(P-q') \psi_q \Vert \le 2 |g(P-q')|,$$
    which shows $\psi_q \in D(\cN_{B_r (P)} ^{1/2})$ thanks to Proposition \ref{lem:AK}. The claim is now a direct calculation. Write
    \begin{equation}
        \nonumber
        \begin{split}
            \langle f, J_P ^* S_P f\rangle = \langle J_P f, S_P f\rangle &=  \int_{B_r} \de q f(q)g(P-q) \langle J_Pf,\psi_q \rangle  \\
            &= \int_{B_r} \de q \int_{B_r} \de q' f(q)g(P-q) \overline{f(q')} \langle \psi_{q'},a(P-q')\psi_q \rangle.
        \end{split}
    \end{equation}
    Using Cauchy-Schwarz on the inner product and the pull-through bound $\Vert a(P-q')\psi_{q}\Vert \le 2|b_P(q')|$ yields the overall bound
    $$|\langle f, J_P ^* S_P f\rangle| \le 2\left( \int_{B_r} \de q \,  |f(q)| |b_P (q)| \right)^2 \le 2 \beta_P ^2 \langle f,M_h f\rangle$$
    by multiplying and dividing in the middle integral by $\sqrt{h(q)}$.
\end{proof}

The next lemma gives the energy estimate before replacing $J_P$ with $U_P$.
\begin{lemma}\label{lem:channel-before-normalization}
For all sufficiently large $|P|$ and every $f\in L^2(B_r)$,
\begin{align}
 A_P[J_Pf]
 &=\ip{f}{M_hf}
   +\operatorname{Re}\ip{f}{R_PM_hf}
   +\operatorname{Re}\ip{f}{J_P^*S_Pf},
 \label{eq:channel_action}\\
 Q_PA_PJ_Pf&=Q_PS_Pf.
 \nonumber
\end{align}
In particular, there is a constant $C<\infty$, independent of $P$ and $f$, such that
\begin{equation}\label{eq:distance_diagonal_A_p_action}
 \big|A_P[J_Pf]-\ip{f}{M_hf}\big|
 \le C\beta_P^2\ip{f}{M_hf}.
\end{equation}
\end{lemma}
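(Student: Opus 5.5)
The plan is to read everything off the intertwining identity \eqref{eq:intertwine} of Proposition~\ref{prop:intertwine}, namely $A_PJ_P=J_PM_h+S_P$, valid as an operator identity on $L^2(B_r)$ with values in $\cF$. Since $J_Pf\in\cD(H(P))$, the form value satisfies $A_P[J_Pf]=\ip{J_Pf}{A_PJ_Pf}$. Inserting the identity gives
\[
A_P[J_Pf]=\ip{J_Pf}{J_PM_hf}+\ip{J_Pf}{S_Pf}=\ip{f}{G_PM_hf}+\ip{f}{J_P^*S_Pf}.
\]
By Lemma~\ref{lem:projection-kernel} we have $G_P=1+R_P$, so the first term equals $\ip{f}{M_hf}+\ip{f}{R_PM_hf}$. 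The left-hand side is real because $A_P$ is self-adjoint and $J_Pf$ lies in its domain. Taking real parts of both sides therefore yields \eqref{eq:channel_action}; the term $\ip{f}{M_hf}$ is already real.

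For the second identity, I apply $Q_P$ to the intertwining relation. By Proposition~\ref{prop:projection}, $Q_P$ is the orthogonal projection onto $\sC_P^\perp=(\Ran J_P)^\perp$. Hence $Q_PJ_PM_hf=0$, and we obtain $Q_PA_PJ_Pf=Q_PS_Pf$. This step only requires $|P|\ge P_0$, so that $\Pi_P$ is well defined and $\sC_P$ is closed.

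The estimate \eqref{eq:distance_diagonal_A_p_action} then follows by bounding the two error terms in \eqref{eq:channel_action}. The second inequality of Lemma~\ref{lem:projection-energy-error} gives $|\ip{f}{R_PM_hf}|\le C\beta_P^2\ip{f}{M_hf}$. The second bound of Lemma~\ref{lem:channel-S-estimates} gives $|\ip{f}{J_P^*S_Pf}|\le 2\beta_P^2\ip{f}{M_hf}$. Adding these gives the claim with constant $C+2$, which is independent of $P$ and $f$.

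There is essentially no hard step here; the main point of care is justifying the first equality. One has to make sure that the form $A_P[\cdot]$ evaluated at a vector of the operator domain coincides with the inner product $\ip{\cdot}{A_P\,\cdot}$. This holds because $J_Pf\in\cD(H(P))$ by Proposition~\ref{prop:intertwine}, and the operator domain of a semibounded self-adjoint operator is contained in its form domain. The other requirement is the adjoint identity $\ip{J_Pf}{J_PM_hf}=\ip{f}{J_P^*J_PM_hf}$, which is legitimate because $J_P$ is bounded and $M_h$ is bounded on $L^2(B_r)$.
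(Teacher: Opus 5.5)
Your proposal is correct and follows the paper's proof essentially step for step. You pair the intertwining identity \eqref{eq:intertwine} with $J_Pf$, use $G_P=1+R_P$ from Lemma~\ref{lem:projection-kernel}, apply $Q_P$ to annihilate $J_PM_hf$, and bound the two error terms via Lemma~\ref{lem:projection-energy-error} and Lemma~\ref{lem:channel-S-estimates}; your remarks on why the form value equals $\ip{J_Pf}{A_PJ_Pf}$ and why the left-hand side is real make explicit what the paper leaves implicit.
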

\begin{proof}
    \eqref{eq:channel_action} follows from a direct application of Proposition \ref{prop:intertwine}. Indeed, applying \eqref{eq:intertwine} to $f$, taking the inner product with $J_P f$ and recalling Lemma \ref{lem:projection-kernel} yields
    $$A_P[J_Pf]= \langle J_Pf, J_P M_h f + S_Pf \rangle = \ip{f}{M_hf}
   +\operatorname{Re}\ip{f}{R_PM_hf}
   +\operatorname{Re}\ip{f}{J_P^*S_Pf},$$
   verifying the first claimed equality.
   To show the second equality, we again use \eqref{eq:intertwine} to see
   $$A_P J_P f = J_P M_h f + S_Pf.$$
   Because $Q_P$ projects onto $\sC^\perp _P$, the first term becomes zero. 
   Take $|P|$ large enough so that Proposition \ref{prop:projection} applies. To show \eqref{eq:distance_diagonal_A_p_action}, we need to estimate the last two terms in \eqref{eq:channel_action}. For the first one we use Lemma \ref{lem:projection-energy-error}. For the second term, use Lemma \ref{lem:channel-S-estimates}.
\end{proof}

\begin{proof}[Proof of Proposition \ref{prop:channel-estimates}]
Fix $f \in L^2(B_r)$. Take  $|P|$ large enough so that the conclusions from Section \ref{sec:projection_construction} apply. Proposition \ref{prop:projection} allows us to find a constant $C_1 >0 $ such that
$$(1-C_1 \beta_P ^2 )\langle f, M_h f \rangle \le \langle F_P f, M_h F_P f \rangle \le (1+C_1 \beta_P^2)\langle f,M_h f \rangle.$$
Next, use Lemma \ref{lem:channel-before-normalization} to find $C_2 > 0$ such that 
$$A_P [U_P f] \ge (1-C_2 \beta_P ^2) \langle F_P f, M_h F_P f \rangle.$$
The last bound holds because $U_Pf = J_P F_P f$.
Combining the last two displays yields
$$A_P [U_P f] \ge (1-C_1 \beta_P ^2)(1-C_2 \beta_P ^2) \langle f, M_h f \rangle.$$
This verifies \eqref{eq:channel-lower-bound} with $C=C_1+C_2$ because $(1-C_1)(1-C_2) \ge (1-(C_1+C_2))$ for $C_1,C_2 \ge 0$.
To show \eqref{eq:channel-cross-bound}, simply observe that $Q_P$ is an orthogonal projection, meaning in particular that $\Vert Q_P \xi \Vert \le \Vert \xi \Vert$ for any $\xi$. Therefore, the claim follows from Lemma \ref{lem:channel-S-estimates} as we will see now. Fix any $y \in \sC^\perp _P \cap \cQ$. Then, by what was just shown
\begin{equation}
    \label{eq:explicit_cross_term_on_channel}
    \begin{split}
        A_P[U_Pf+y] \ge (1- C \beta_P ^2) \langle f,M_h f\rangle + A_P[y] + 2 \mathrm{Re}\langle A_P U_Pf, y\rangle.
    \end{split}
\end{equation}
Because $Q_P y= y$, $\langle A_P U_Pf, y\rangle = \langle Q_P A_P U_Pf, y\rangle$. But now $Q_P A_P U_P f = Q_P S_P F_P f$, which yields the upper bound
$$|\langle Q_P A_P U_Pf, y\rangle|\le \Vert y \Vert \,  \Vert S_P F_Pf \Vert \le C\Vert y \Vert \beta_P \langle f,M_hf \rangle^{1/2}$$
by Lemma \ref{lem:channel-S-estimates}.
Combining with \eqref{eq:explicit_cross_term_on_channel} yields finally
$$ A_P[U_Pf+y] \ge (1- C \beta_P ^2) \langle f,M_h f\rangle + A_P[y] -2 C\Vert y \Vert \beta_P \langle f,M_hf \rangle^{1/2},$$
exactly as claimed.
   \end{proof}

\section{Spectral gap of $H(P)$ on $\sC_P^\perp$}
\label{sec:orthogonal_to_channel}

Recall that $A_P=H(P)-E(0)-1$ and $Q_P$ is the orthogonal projection onto $\sC_P^\perp$. This section verifies that $H(P)$ is coercive with constant at least $E(0)+1+\eta$ for some $\eta >0$ on $\sC_P ^\perp$.

\begin{proposition}\label{prop:complement-gap}
There are constants $\eta>0$ and $P_0<\infty$ such that, for every $P\in\bbR^3$ with $|P|\ge P_0$ and every $\xi\in\sC_P^\perp\cap\cQ$,
\begin{equation}\label{eq:physical-complement-gap}
 A_P[\xi]\ge\eta\norm{\xi}^2.
\end{equation}
The constants may depend on $\alpha$, but not on $P$ or $\xi$.
\end{proposition}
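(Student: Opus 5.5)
We follow the sector decomposition sketched in the proof strategy. The proof first establishes a gap for a cutoff model and then transfers it to the full Fröhlich Hamiltonian.

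\textbf{Step 1 (cutoff gap).} Fix a large $\Lambda$ and let $\sC_{\Lambda,P}=\Ran J_{\Lambda,P}$, with projection $\Pi_{\Lambda,P}$ built as in Proposition~\ref{prop:projection}. The same proof applies, since $\|a(P-q')\psi_{\Lambda,q}\|\le 3b_P(q')$. Write $\F\cong\F_{<}\otimes\F_{>}$, splitting phonons with $|k|\le\Lambda$ and $|k|>\Lambda$. Since $g_\Lambda$ vanishes on the outer modes, $H_\Lambda(P)$ restricted to the sector with $m$ outer phonons of momenta $k_1,\dots,k_m$ acts as $H_\Lambda(P-\sum_j k_j)+m$ on $\F_<$. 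Here $H_\Lambda(\cdot)$ denotes the inner cutoff Hamiltonian, whose spectrum equals that of the full cutoff Hamiltonian bottom by \ref{I:gross}. The sectors are treated as follows.

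\begin{enumerate}[label=(\roman*)]
\item \emph{$m\ge2$.} Using \ref{I:gross}, the energy is at least $E_\Lambda(0)+2\ge E_{\Lambda,*}+1$.

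\item \emph{$m=0$.} The field momentum is carried by inner phonons of momentum at most $\Lambda$. Hence $|P_{\mathrm f}|\le\Lambda\N$, and then
\[
\tfrac12(P-P_{\mathrm f})^2+\N\ \ge\ \min\bigl(\tfrac18|P|^2,\ |P|/(2\Lambda)\bigr).
\]
To absorb $\Phi(g_\Lambda)$, use half of $K(P)$ together with the form bound with $\eps=1/2$; this costs only $O(\|g_\Lambda\|^2)$. The energy therefore exceeds $E_{\Lambda,*}+1$ once $|P|\gg\Lambda\|g_\Lambda\|^2$.

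\item \emph{$m=1$, outer momentum $k$ with $q=P-k\notin B_r$.} The energy is at least $E_\Lambda(q)+1\ge E_\Lambda(0)+1+\tfrac12\inf_{|q|\ge r}h(q)$. The last infimum is positive by the strict global minimum in \ref{I:geometry} combined with \ref{I:hvz}, which gives $E(q)\le E_*$ bounded away from $E_0$ for $|q|\ge r$. The uniform energy convergence~\eqref{eq:uniform-energy-convergence} transfers this to the cutoff model.

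\item \emph{$m=1$, $q\in B_r$.} A state in this sector has the form $k\mapsto\phi_k\in\F_<$. Since $|P-q|>\Lambda$, the vector $J_{\Lambda,P}f$ lies entirely in the one-outer-phonon sector. Its fibre is essentially $\psi_{\Lambda,q}$, apart from the exchange contributions, which are small because $a(k)\psi_{\Lambda,q}$ vanishes for $|k|>\Lambda$. Orthogonality to $\sC_{\Lambda,P}$ for all $f$ then forces $\phi_{P-q}\perp\psi_{\Lambda,q}$ for a.e.\ $q\in B_r$. The gap of Proposition~\ref{lem:riesz-local} gives energy at least $E_\Lambda(q)+1+\gamma/2\ge E_{\Lambda,*}+\gamma/2$.
\end{enumerate}

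All sector operators commute with the sector decomposition, so the sector bounds combine. We obtain $H_\Lambda(P)-E_{\Lambda,*}\ge\eta_0$ on $\sC_{\Lambda,P}^\perp$ with $\eta_0=\min(1,\gamma/2,\tfrac12\inf_{|q|\ge r}h)$, for all $|P|\ge P_1(\Lambda)$.

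\textbf{Step 2 (removing the cutoff).} For $\xi\in\sC_P^\perp\cap\cQ$ we compare $A_P[\xi]$ with $(H_\Lambda(P)-E_{\Lambda,*})[Q_{\Lambda,P}\xi]$. We use the form comparison of Proposition~\ref{prop:form-comparison} in the form
\[
H(P)\ \ge\ (1-\eps)H_\Lambda(P)-C_\eps\,\epsilon_\Lambda\,(K(P)+1),
\]
or a similar relative bound with a constant $\epsilon_\Lambda\to0$ uniformly in $P$, together with the uniform energy convergence. We also need $\|\Pi_{\Lambda,P}-\Pi_P\|\to0$ uniformly in $|P|$ large. This follows from \eqref{eq:creationdiff}, Proposition~\ref{lem:riesz-local}, \eqref{eq:graphconv}, and $G_P^{-1}$ being uniformly bounded. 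The cross-terms $\langle \Pi_{\Lambda,P}\xi,A\,Q_{\Lambda,P}\xi\rangle$ are controlled as follows. By Lemma~\ref{lem:cutoff-channel}, $A_\Lambda$ maps $\sC_{\Lambda,P}$ to itself up to $S_{\Lambda,P}$, whose norm is $O(\|b_P\|)=O(|P|^{-1})$.

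We choose $\Lambda$ so that all cutoff errors are below $\eta_0/4$, and then $|P|\ge P_1(\Lambda)$ large. This yields \eqref{eq:physical-complement-gap} with $\eta=\eta_0/2$.

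\textbf{Main obstacle.} The main obstacle is Step~2. The form perturbation $\Phi(g-g_\Lambda)$ is only relatively form-small with respect to $K(P)$, so we need the errors to be bounded by $\epsilon_\Lambda(A_P[\xi]+C\|\xi\|^2)$ rather than in norm. We also need $\Pi_{\Lambda,P}$ and $\Pi_P$ to be close in a norm strong enough for the unbounded form. We would try to handle this by writing $\xi=Q_{\Lambda,P}\xi+(\Pi_{\Lambda,P}-\Pi_P)\xi$. The second piece lies in $\sC_{\Lambda,P}+\sC_P$, whose elements have graph-norm bounded energy by Proposition~\ref{prop:intertwine}. This keeps its form contribution small.
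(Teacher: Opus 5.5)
Your Step~1 is the paper's four-sector argument (Proposition~\ref{prop:cutoff-complement-decomposition}), and your projection comparison is Lemma~\ref{lem:cutoff-projection-comparison}. The difference is in how Step~2 is closed, and the route sketched in the body of your Step~2 is shorter than the paper's.

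For $|P|>\Lambda+r$ one has $a(P-q')\psi_{\Lambda,q}=0$ for $q,q'\in B_r$ (cutoff pull-through). Also, the inhomogeneous term in the computation of Lemma~\ref{lem:cutoff-channel} carries the factor $g_\Lambda(P-q)=0$. So the exchange terms and this term vanish \emph{exactly}, not just up to $O(|P|^{-1})$. Consequently $J_{\Lambda,P}$ is the isometry $J^0_{\Lambda,P}$, $H_\Lambda(P)J_{\Lambda,P}f=J_{\Lambda,P}M_{e_\Lambda(\cdot)+1}f$, and $\sC_{\Lambda,P}$ reduces $H_\Lambda(P)$ with $A_{\Lambda,P}\ge0$ on it. Exactness is also what your sector (iv) needs: approximate vanishing would not force $\phi_{P-q}\perp\psi_{\Lambda,q}$ almost everywhere.

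Hence you can apply \eqref{eq:two-sided} once to the whole of $\xi\in\sC_P^\perp\cap\cQ$ and split only the cutoff form. With $d=\|\Pi_P-\Pi_{\Lambda,P}\|$ and $\|\Pi_{\Lambda,P}\xi\|=\|(\Pi_{\Lambda,P}-\Pi_P)\xi\|\le d\|\xi\|$,
\[
 H_\Lambda(P)[\xi]=H_\Lambda(P)[\Pi_{\Lambda,P}\xi]+H_\Lambda(P)[Q_{\Lambda,P}\xi]\ge\bigl(E_{\Lambda,*}+\eta_0(1-d^2)\bigr)\|\xi\|^2 .
\]
Then $\epsilon_\Lambda\to0$ and $E_\Lambda(0)\to E(0)$ give $A_P[\xi]\ge\frac{\eta_0}{2}\|\xi\|^2$ for $d^2\le\frac14$ and $\Lambda$ large. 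So the obstacle in your last paragraph does not arise: operator-norm closeness of the projections suffices, and no energy bound on the small piece is needed.

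The paper instead first transfers the gap to the full $A_P$ on $(\sC^0_{\Lambda,P})^\perp$ (Lemma~\ref{lem:physical-energy-cutoff-spaces}). It then handles $z=\Pi^0_{\Lambda,P}\xi$ by the reverse triangle inequality for $(A_P+2)^{1/2}$ together with an upper bound $(A_P+2)[z]\le M^2\|z\|^2$ on $\sC^0_{\Lambda,P}$ (Lemma~\ref{lem:change-complement}). This is essentially your fallback decomposition. It uses the same inputs but needs the extra upper bound $M$, which your route avoids.

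Two justifications need repair.

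\textbf{Uniformity of the projection comparison.} The bound \eqref{eq:graphconv} holds for a fixed bounded $K$ and gives no uniformity in $P$. You need uniformity, since $\Lambda$ is fixed before taking $|P|\ge P_1(\Lambda)$. Use instead that for $|P|>\Lambda+r$ the second term in \eqref{eq:creationdiff} is $\sup_{q\in B_r}\int_{B_r}\|a(P-q')\psi_q\|^2\,\mathrm dq'\le4\|b_P\|_{L^2(B_r)}^2=O(|P|^{-2})$ by \eqref{eq:ptrbound}. Together with $\|F_P-1\|=O(\beta_P^2)$, this yields $\|\Pi_P-\Pi_{\Lambda,P}\|\le C(\delta_\Lambda+|P|^{-1})$.

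\textbf{Sector (iii).} Positivity of $\inf_{|q|\ge r}h(q)$ comes from the strict minimum combined with the monotonicity in \ref{I:geometry}. The bound $E(q)\le E_*$ from \ref{I:hvz} plays no role, and a strict minimum alone would not rule out $E(q)\to E(0)$ as $|q|\to\infty$.
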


We first decompose the orthogonal complement of the corresponding cutoff space into four subspaces and estimate the energy on each of them. We then compare the cutoff space with $\sC_P$.

\subsection{Decomposing the cutoff Hamiltonian}

Fix $\Lambda>0$. We split the Fock space into high and low momenta. To be more specific, write
\[
 \cF=\cF_<\otimes\cF_>,
 \qquad
 \cF_<=\Gamma_{\mathrm s}(L^2(B_\Lambda)),
 \qquad
 \cF_>=\Gamma_{\mathrm s}(L^2(B_\Lambda^c)).
\]
The phonons with momentum outside $B_\Lambda$ do not interact in $H_\Lambda(P)$. Let $\cN_<$ and $P_{\mathrm f,<}$ denote the number and field momentum operators on $\cF_<$. On this space, put
\begin{equation}\label{eq:inner-cutoff-H}
 h_\Lambda(q)=\frac12(q-P_{\mathrm f,<})^2+\cN_<+\Phi_<(g_\Lambda),
 \qquad e_\Lambda(q)=\inf\operatorname{spec}h_\Lambda(q).
\end{equation}
Here $\Phi_<(g_\Lambda)$ is the interaction acting on $\cF_<$. For $n\ge0$, write
\[
 \cF_\Lambda^{(n)}
 =\cF_<\otimes L^2_{\mathrm{sym}}((B_\Lambda^c)^n),
 \qquad
 \cF=\bigoplus_{n=0}^{\infty}\cF_\Lambda^{(n)}.
\]
Thus $\cF_\Lambda^{(n)}$ consists of states with exactly $n$ phonons outside $B_\Lambda$ and any number inside $B_\Lambda$. For $n=0$, the second factor is the vacuum space $\bbC\Omega_>$.

\begin{lemma}\label{lem:cutoff-number-sectors}
Each $\cF_\Lambda^{(n)}$ is invariant under $H_\Lambda(P)$. 
For $\xi \in \cD(H_\Lambda(P))\cap\cF_\Lambda ^{(n)} \simeq L^2 _{\mathrm{sym}}( (B_\Lambda^c)^n;\cF_<)$, the
restriction of $H_\Lambda(P)$ to $\cF_\Lambda ^{(n)}$ is given by
\begin{equation}\label{eq:cutoff-number-sectors}
 \bigl(H_\Lambda(P)\xi\bigr)(k_1,\ldots,k_n)
 =\left[h_\Lambda\left(P-\sum_{j=1}^n k_j\right)+n\right]
   \xi(k_1,\ldots,k_n).
\end{equation}
For $n=0$, this restriction is $h_\Lambda(P)$. The associated orthogonal projections preserve $\cQ$, and the cutoff quadratic form has no cross terms between different values of $n$.
\end{lemma}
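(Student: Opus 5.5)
The plan is to use the tensor product structure $\cF=\cF_<\otimes\cF_>$ and to check separately how each term of $H_\Lambda(P)$ acts. Since $g_\Lambda$ is supported in $B_\Lambda$, it lies in $L^2(B_\Lambda)$. Under the standard unitary identification $\Gamma_{\mathrm s}(\mathfrak h_<\oplus\mathfrak h_>)\cong\Gamma_{\mathrm s}(\mathfrak h_<)\otimes\Gamma_{\mathrm s}(\mathfrak h_>)$, the field $\Phi(g_\Lambda)$ therefore becomes $\Phi_<(g_\Lambda)\otimes 1$. Second quantizations split additively:
\[
\N=\N_<\otimes1+1\otimes\N_>,\qquad \Pf=P_{\mathrm f,<}\otimes1+1\otimes P_{\mathrm f,>}.
\]
On $\cF_\Lambda^{(n)}\simeq L^2_{\mathrm{sym}}((B_\Lambda^c)^n;\cF_<)$ the operators $1\otimes\N_>$ and $1\otimes P_{\mathrm f,>}$ act as multiplication by the scalar $n$ and by the vector $\sum_j k_j$, respectively. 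So on this sector
\[
\tfrac12(P-\Pf)^2=\tfrac12\Bigl(P-\textstyle\sum_j k_j-P_{\mathrm f,<}\Bigr)^2
\]
fiberwise in $(k_1,\dots,k_n)$. Combining these gives \eqref{eq:cutoff-number-sectors} formally, with $h_\Lambda(P)$ recovered for $n=0$.

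To make this rigorous, I would first work on the core $\mathfrak C$ of Lemma~\ref{lem:finite_cutoff_core}, or more simply on finite-particle vectors with compactly supported, bounded components. On such vectors the identity is a direct computation in components. The next step is to define the direct-integral operator
\[
T=\bigoplus_n\int^\oplus\Bigl[h_\Lambda\Bigl(P-\sum_j k_j\Bigr)+n\Bigr]\,dk.
\]
It is self-adjoint because each fiber $h_\Lambda(q)$ is self-adjoint: $g_\Lambda\in L^2$ is KLMN-bounded relative to the free part, uniformly in $q$. The fiber family is measurable in $q$, for instance by norm-resolvent continuity in $q$, since the $q$-dependence enters only through a kinetic term that is quadratic in $q$. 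Then $T$ and $H_\Lambda(P)$ agree on a common core, so $T=H_\Lambda(P)$. This gives invariance of each $\cF_\Lambda^{(n)}$ and the fiber formula on $\cD(H_\Lambda(P))\cap\cF^{(n)}_\Lambda$.

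For the last claims, the orthogonal projection $\Pi^{(n)}$ onto $\cF^{(n)}_\Lambda$ is the spectral projection of $1\otimes\N_>=\mathrm d\Gamma(\one_{B_\Lambda^c})$ onto the eigenvalue $n$. It commutes with $\N$ and with $\Pf$, since all of these are functions of commuting second quantizations of multiplication operators. Hence $\Pi^{(n)}$ preserves $\cD(|\Pf|)\cap\cD(\N^{1/2})=\cQ$. It also commutes with $\Phi(g_\Lambda)=\Phi_<(g_\Lambda)\otimes1$, and hence with $H_\Lambda(P)$ in the strong sense of commuting resolvents. The form is then the sum over $n$ of its restrictions, so there are no cross terms. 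Concretely, for $\xi\in\cQ$ the decomposition
\[
H_\Lambda(P)[\xi]=\sum_n H_\Lambda(P)[\Pi^{(n)}\xi]
\]
follows because the kinetic, number and field forms each decompose this way.

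The main obstacle is bookkeeping rather than substance. One has to identify the field operator correctly under the exponential-law unitary, including the symmetrization normalizations $\sqrt n$ that appear when high and low momentum variables are split. One also has to justify the domain identity beyond the core, and the direct-integral self-adjointness argument handles that step.
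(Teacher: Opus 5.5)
Your proposal is correct and follows the paper's proof. Both arguments split $\cN$ and $\Pf$ along $\cF=\cF_<\otimes\cF_>$, use $\Phi(g_\Lambda)=\Phi_<(g_\Lambda)\otimes 1$, read off multiplication by $n$ and $\sum_j k_j$ on $\cF_\Lambda^{(n)}$, and extend from a core. Your direct-integral identification $T=H_\Lambda(P)$ is simply a more careful version of the paper's ``extend to the full domain by closure'' step; it only needs $\mathfrak C$ to be a core for $H_\Lambda(P)$ and contained in $\cD(T)$, not a common core.
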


\begin{proof}
    As with Fock space we split $\cN = \cN_> + \cN_<$ as well as $P_f = P_{f,>}+P_{f,<}$. Here, we define $\cN_>$ as the number operator restricted to $\cF_>$ and similarly for $P_{f,>}$. Because $g_\Lambda$ vanishes outside of $B_\Lambda$, $\Phi(g_\Lambda)= \Phi_<(g_\Lambda)\otimes 1$.
    The cutoff Hamiltonian then reads
    $$H_\Lambda(P) = \frac 12 (P-P_{f,<}-P_{f,>})^2 + \cN_> + \cN_< + \Phi_<(g_\Lambda).$$
    On $\cF^{(n)}_\Lambda$, $P_{f,>}$ and $\cN_>$ act as multiplication by $\sum_{j=1}^n k_j$ and $n$, respectively. This verifies \eqref{eq:cutoff-number-sectors} by first considering finite-particle sectors and then extending to the full domain by closure of $H_\Lambda (P)$. Moreover, note that $\cF^{(n)}_\Lambda$ is reducing. Therefore, orthogonal projections preserve $\cQ$ and $H_\Lambda(P)$ commutes with the projection onto $\cF_\Lambda ^{(n)}$. This also implies that the quadratic form has no cross terms for $n \neq m$, which finishes the proof.
\end{proof}

Let $m_r=\inf_{|q|\ge r}(E(q)-E(0))>0$, and recall that $\gamma>0$ is the local spectral gap fixed in Section~\ref{sec:prelims}. We now transfer ground state (energy) properties of $H_\Lambda (P)$ to $h_\Lambda (P)$.

\begin{lemma}\label{lem:inner-cutoff-estimates}
There is $\Lambda_0<\infty$ such that, for every $\Lambda\ge\Lambda_0$, 
\begin{itemize}
    \item $e_\Lambda(0)=E_\Lambda(0),$
    \item $e_\Lambda(q)\ge e_\Lambda(0)$ for $q\in\bbR^3$,
    \item $e_\Lambda(q)-e_\Lambda(0)\ge \frac{m_r}{2}$ for $|q|\ge r$.
\end{itemize}
For $|q|\le2r$, one has $e_\Lambda(q)=E_\Lambda(q)$, $\psi_{\Lambda,q}=\psi^<_{\Lambda,q}\otimes\Omega_>$ as well as
\begin{equation}\label{eq:inner-cutoff-ground-state}
 h_\Lambda(q)-e_\Lambda(q)
 \ge\frac\gamma2
 \bigl(1-|\psi^<_{\Lambda,q}\rangle\langle\psi^<_{\Lambda,q}|\bigr),
\end{equation}
where $\psi^<_{\Lambda,q}$ is the normalized ground state of $h_\Lambda(q)$. Finally, there is $C_{\alpha,\Lambda}<\infty$ such that
\begin{equation}\nonumber
 d_\Lambda(P)=e_\Lambda(P)-e_\Lambda(0)-1
 \ge\min\left\{\frac{|P|^2}{8},\frac{|P|}{4\Lambda}\right\}
       -C_{\alpha,\Lambda}.
\end{equation}
In particular, $d_\Lambda(P)\to\infty$ as $|P|\to\infty$ for each fixed $\Lambda$.
\end{lemma}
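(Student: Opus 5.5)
The strategy is to compare $h_\Lambda(q)$ with $H_\Lambda(q)$ through the sector decomposition of Lemma~\ref{lem:cutoff-number-sectors}. Applied at momentum $q$, that lemma writes
\[
 H_\Lambda(q)=\bigoplus_{n\ge0}\Bigl[h_\Lambda\bigl(q-\textstyle\sum_j k_j\bigr)+n\Bigr].
\]
The $n=0$ block is $h_\Lambda(q)$ acting on $\cF_<\otimes\Omega_>$, so $E_\Lambda(q)\le e_\Lambda(q)$ for every $q$. Conversely, for $n\ge1$ every block satisfies
\[
 h_\Lambda(q-\textstyle\sum k_j)+n\ \ge\ \inf_{p}e_\Lambda(p)+1 .
\]
The first task is therefore to show $\inf_p e_\Lambda(p)=e_\Lambda(0)$. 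For this I would observe that $h_\Lambda$ is itself a regular cutoff fiber Hamiltonian, namely a Fr\"ohlich model on $L^2(B_\Lambda)$ whose coupling $g\one_{B_\Lambda}$ is square integrable. Input \ref{I:gross} applies to it, or one can run the same Lieb--Yamazaki / Gross argument verbatim on $\cF_<$. This gives the second bullet. Combining, $E_\Lambda(0)=\min\{e_\Lambda(0),\,\inf_{n\ge1}(\cdots)\}=e_\Lambda(0)$, since the $n\ge1$ blocks lie above $e_\Lambda(0)+1$. That is the first bullet.

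For $|q|\le 2r$ I would use the same block structure. The $n\ge1$ blocks are bounded below by $e_\Lambda(0)+1=E_\Lambda(0)+1$. By the uniform energy convergence \eqref{eq:uniform-energy-convergence} and $\sup_{|q|\le2r}h(q)\le\frac12$, this exceeds $E_\Lambda(q)+\frac12\ge E_\Lambda(q)+\gamma/2$ for $\Lambda$ large, after shrinking $\gamma\le1$ if necessary. Proposition~\ref{lem:riesz-local} gives that $E_\Lambda(q)$ is a simple eigenvalue with gap $\gamma/2$. Hence it must be the bottom of the $n=0$ block, which yields $e_\Lambda(q)=E_\Lambda(q)$ and $\psi_{\Lambda,q}=\psi^<_{\Lambda,q}\otimes\Omega_>$. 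The gap $\gamma/2$ in the full operator restricts to the $n=0$ block, giving \eqref{eq:inner-cutoff-ground-state}. For $|q|\ge r$ I would use $e_\Lambda(q)\ge E_\Lambda(q)$ together with the uniform convergence $E_\Lambda\to E$ in $q$ (from \ref{I:cutoff} and Proposition~\ref{prop:form-comparison}). This gives
\[
 e_\Lambda(q)-e_\Lambda(0)\ \ge\ E_\Lambda(q)-E_\Lambda(0)\ \ge\ m_r-o(1)\ \ge\ \frac{m_r}{2}.
\]

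For the lower bound on $d_\Lambda(P)$, I would bound $h_\Lambda(P)$ from below on $\cF_<$ directly. Every phonon has $|k|\le\Lambda$, so $|P_{\mathrm f,<}|\le\Lambda\cN_<$ and hence $|P-P_{\mathrm f,<}|\ge|P|-\Lambda\cN_<$. On the spectral subspace where $\Lambda\cN_<\le|P|/2$, the kinetic term is at least $|P|^2/8$. On its complement, $\cN_<\ge|P|/(2\Lambda)$. Since the operators $\cN_<$ and $P_{\mathrm f,<}$ commute, this splitting is clean for the free part. The interaction is handled by spending half of $\cN_<$ on it:
\[
 \Phi_<(g_\Lambda)\ \ge\ -\tfrac12\cN_<-2\|g_\Lambda\|^2 .
\]
The remaining $\frac12\cN_<$ then gives at least $|P|/(4\Lambda)$ on the complement. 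I expect this step to be the main obstacle: the field does not commute with the number projection, so the split cannot be done after the interaction is bounded. The way around it is to use the form inequality on the whole space first, and only then split the free operator $K_<(P)-\frac12\cN_<$ spectrally via the joint spectral measure of $(\cN_<,P_{\mathrm f,<})$. This gives
\[
 \tfrac12(P-P_{\mathrm f,<})^2+\tfrac12\cN_<\ \ge\ \min\Bigl\{\frac{|P|^2}{8},\frac{|P|}{4\Lambda}\Bigr\},
\]
and therefore $e_\Lambda(P)\ge\min\{\cdot\}-2\|g_\Lambda\|^2$. Subtracting $e_\Lambda(0)+1$, which is bounded above independently of $P$ (for instance by the vacuum energy, $e_\Lambda(0)\le0$), yields the claim with $C_{\alpha,\Lambda}=2\|g_\Lambda\|^2+1$, and $d_\Lambda(P)\to\infty$ follows.
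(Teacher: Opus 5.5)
Your proposal is correct and follows the paper's proof essentially step by step. The shared steps are the sector decomposition of Lemma~\ref{lem:cutoff-number-sectors} at momentum $q$, uniform energy convergence for the $|q|\ge r$ and $|q|\le 2r$ claims, restricting the gap from Proposition~\ref{lem:riesz-local} to the $n=0$ block, and the bound $\Phi_<(g_\Lambda)\ge-\frac12\cN_<-2\|g_\Lambda\|^2$ followed by a sector-wise split in $\cN_<$. The one harmless deviation is that you invoke \ref{I:gross} for the reduced model $h_\Lambda$ on $\cF_<$, whereas the paper needs it only for $H_\Lambda$ and gets the second bullet from the chain $e_\Lambda(q)\ge E_\Lambda(q)\ge E_\Lambda(0)=e_\Lambda(0)$; you already have this chain available, and it avoids extending the cited input.
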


\begin{proof}
    First, note that for any $q \in \bbR^3$
    $$e_\Lambda (q) \ge E_\Lambda(q) \ge E_\Lambda (0).$$
    The first inequality holds because $h_\Lambda (q)$ can be thought of as the restriction of $H_\Lambda (q)$ to $\cF_<$. The second inequality is \ref{I:gross}. We now show $e_\Lambda (0) = E_\Lambda (0)$. By \eqref{eq:cutoff-number-sectors} and the last displayed equation we find that $H_\Lambda(0)$ has energy at least $E_\Lambda (0) +1$ on $\bigoplus_{n \ge 1} \cF^{(n)}_\Lambda$. But also $H_\Lambda(0) \ge E_\Lambda (0)$ by definition, which implies  $\psi_{\Lambda,0} \in \cF_<$. But since $H_\Lambda (0) = h_\Lambda(0)$ on $\cF_<$, $e_\Lambda(0) \le E_\Lambda (0)$, showing the desired equality. This directly implies for $|q|\ge r$ and $\Lambda$ large enough
    $$e_\Lambda(q)-e_\Lambda (0) \ge  E_\Lambda(q) - E_\Lambda (0) \ge \frac{m_r}{2},$$ 
    exactly as claimed. 
    We continue with showing $e_\Lambda (q) = E_\Lambda (q)$ for $|q|\le 2r$, where it is sufficient to verify $\psi_{\Lambda,q} \in \cF_<$. This holds by a similar argument as before. Indeed, $H_\Lambda (P) \ge E_\Lambda (0) +1$ if at least one high-momentum phonon is present. But for $|q|\le 2r$ and $\Lambda$ sufficiently large, $E_\Lambda (q) < E_\Lambda(0)+1$, which means $\psi_{\Lambda,q}$ must be contained entirely in $\cF_<$. Therefore,
    $\psi_{\Lambda,q}= \psi_{\Lambda,q}^< \otimes \Omega_>$
    and $e_\Lambda (q) = E_\Lambda (q)$ for $|q|\le 2r$. \eqref{eq:inner-cutoff-ground-state} is then a simple consequence of this, as
    $$h_\Lambda(q) - e_\Lambda(q) = H_\Lambda(q) - E_\Lambda (q),$$
    where we interpret the right-hand side operator  restricted to $\cF^{(0)}_\Lambda$. Then, using Proposition \ref{lem:riesz-local} shows the claim by our choices of $r$ and $\gamma$ at the beginning of Section \ref{sec:prelims}. Finally,
    $$h_\Lambda (P) \ge \frac 1 2 (P-P_f)^2 + \frac 1 2 \cN_< - 2 \Vert g_\Lambda \Vert^2,$$
    as operators on $\cF(L^2(B_\Lambda))$, which follows by the standard bound on Segal fields for regular interactions. We now bound the operator on all $n$ particle sectors separately. We also recall that all momenta are in $B_\Lambda$, since $h_\Lambda (q)$ acts on $\cF_<$, meaning $|P_f|\le n\Lambda$. If $n < |P|/2\Lambda$, then $|P-P_f|\ge |P|/2$, and so the kinetic term is at least $|P|^2 /8$. On the other hand, if $n \ge |P|/2 \Lambda$, $\cN_< \ge |P|/2\Lambda$. Thus, the desired lower bound for $d_\Lambda (P)$ follows by subtracting $e_\Lambda(0)+1$ on both sides and  choosing $C_{\alpha,\Lambda}= 2\Vert g_\Lambda \Vert^2+ 1+e_\Lambda(0)$.
\end{proof}

In the sector with one phonon outside $B_\Lambda$, use the variable $q=P-k$ to identify
\[
 \cF_\Lambda^{(1)}\simeq L^2(B_\Lambda (P)^c;\cF_<).
\]
In other words, a function $f \in L^2( B_\Lambda ^c;\cF_<)$ corresponds to $L^2(B_\Lambda (P)^c;\cF_<) \ni\Tilde f (q) = f(P-q)$. This identification is just for simplified notation, because now $q$ denotes the remaining momentum not carried by the hard phonon and hence makes the action of $H_\Lambda (P)$ diagonal. Indeed, whereas we had before (with $\xi \in L^2(B_\Lambda ^c;\cF_<)$)
$$H_\Lambda(P) \xi (k) = (h_\Lambda (P-k) +1) \xi(k),$$
we have on $L^2(B_\Lambda (P)^c;\cF_<) \ni \Tilde \xi$ that $H_\Lambda (P) \Tilde \xi (q) = (h_\Lambda(q)+1) \Tilde \xi(q)$. This change of coordinates does not change the spectrum, as the mapping is unitary.
Suppose now that $|P|> \Lambda +r$. This implies  $B_r\subset B_\Lambda (P)^c$, and so we can define $J^0_{\Lambda,P}: L^2(B_r) \to L^2(B_\Lambda (P)^c;\cF_<)$ by
\begin{equation}\nonumber
 (J^0_{\Lambda,P}f)(q)
 =\begin{cases}
    f(q)\psi^<_{\Lambda,q},&q\in B_r,\\
    0,&q\in B_\Lambda (P)^c\setminus B_r,
   \end{cases}
\end{equation}
and extend by zero components in all other sectors to $\cF$. This is an isometry, which can be verified by a direct computation. Write
\[
 \sC^0_{\Lambda,P}=\operatorname{Ran}J^0_{\Lambda,P},
 \qquad
 \Pi^0_{\Lambda,P}=J^0_{\Lambda,P}(J^0_{\Lambda,P})^*,
 \qquad Q^0_{\Lambda,P}=1-\Pi^0_{\Lambda,P},
\]
and set $A_{\Lambda,P}=H_\Lambda(P)-e_\Lambda(0)-1$.
Suppressing $\Lambda$ and $P$ in their notation, define the four subspaces
\begin{align*}
 \mathcal{V}_0&=\cF_\Lambda^{(0)},\\
 \mathcal{V}_{\mathrm{out}}
 &=L^2(B_\Lambda (P)^c\setminus B_r;\cF_<),\\
 \mathcal{V}_{\mathrm{exc}}
 &=\bigl\{u\in L^2(B_r;\cF_<):
       \ip{\psi^<_{\Lambda,q}}{u(q)}=0
       \text{ for almost every }q\bigr\},\\
 \mathcal{V}_{\ge2}&=\bigoplus_{n\ge2}\cF_\Lambda^{(n)}.
\end{align*}
The two  spaces $\mathcal V_\mathrm{out}, \mathcal V_\mathrm{exc}$ are regarded as subspaces of
$\cF_\Lambda^{(1)}$ by extension by zero. They describe, respectively, residual momentum outside $B_r$ and an inner state orthogonal to the ground state at residual momentum inside $B_r$. $\mathcal V_0$ describes the spaces where no hard phonon is present, and $\mathcal V_{\ge 2}$ is the subspace where more than one hard phonon is present. Intuitively, the first space needs to have too many soft phonons to carry large momentum, whereas two or more hard phonons have too much of a cost.

\begin{proposition}
\label{prop:cutoff-complement-decomposition}
For every sufficiently large $\Lambda$ and every $|P|>\Lambda+r$,
\begin{equation}\label{eq:cutoff-complement-direct-sum}
 (\sC^0_{\Lambda,P})^\perp
 =\mathcal{V}_0\oplus\mathcal{V}_{\mathrm{out}}
       \oplus\mathcal{V}_{\mathrm{exc}}\oplus\mathcal{V}_{\ge2}.
\end{equation}
The sum is orthogonal, and all subspaces reduce $H_\Lambda (P)$. The projections onto these four subspaces preserve $\cQ$.
The individual energy bounds are
\begin{enumerate}
    \item $A_{\Lambda,P}[x_0]\ge d_\Lambda(P)\norm{x_0}^2$ with  $x_0 \in \mathcal V_0 \cap \cQ$,
    \item $A_{\Lambda,P}[x_{\mathrm{out}}] \ge\frac{m_r}{2}\norm{x_{\mathrm{out}}}^2$ with  $x_\mathrm{out} \in \mathcal V_\mathrm{out}\cap \cQ$,
    \item $ A_{\Lambda,P}[x_{\mathrm{exc}}]\ge\frac\gamma2\norm{x_{\mathrm{exc}}}^2$ with  $x_\mathrm{exc} \in \mathcal V_\mathrm{exc}\cap \cQ$,
    \item $A_{\Lambda,P}[x_{\ge2}]\ge\norm{x_{\ge2}}^2$ with  $x_{\ge2} \in \mathcal V_{\ge2}\cap \cQ$.
\end{enumerate}

Consequently, with $\eta_0=\frac14\min\{m_r,\gamma,1\}>0$, for every sufficiently large $\Lambda$ there is $P_0(\Lambda)>\Lambda+r$ such that
\begin{equation}\label{eq:cutoff-complement-gap}
 A_{\Lambda,P}[y]\ge\eta_0\norm{y}^2,
 \qquad
 \text{where } y\in(\sC^0_{\Lambda,P})^\perp\cap\cQ,\quad
       |P|\ge P_0(\Lambda).
\end{equation}
The gap $\eta_0$ is independent of $\Lambda$.
\end{proposition}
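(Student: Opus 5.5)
The plan is to work sector by sector using Lemma \ref{lem:cutoff-number-sectors}. Since $\cF=\bigoplus_n\cF_\Lambda^{(n)}$ is a reducing decomposition for $H_\Lambda(P)$, every vector splits orthogonally into $\cF_\Lambda^{(0)}=\mathcal V_0$, $\cF_\Lambda^{(1)}$ and $\mathcal V_{\ge2}$. The range $\sC^0_{\Lambda,P}$ of $J^0_{\Lambda,P}$ lies entirely in $\cF_\Lambda^{(1)}\simeq L^2(B_\Lambda(P)^c;\cF_<)$. Inside that sector we split $L^2(B_\Lambda(P)^c;\cF_<)=L^2(B_\Lambda(P)^c\setminus B_r;\cF_<)\oplus L^2(B_r;\cF_<)$, which is possible because $B_r\subset B_\Lambda(P)^c$ when $|P|>\Lambda+r$. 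We then split $L^2(B_r;\cF_<)$ pointwise in $q$ into the span of $\psi^<_{\Lambda,q}$ and its orthogonal complement. The first piece is exactly $\sC^0_{\Lambda,P}$, the second is $\mathcal V_{\mathrm{exc}}$. This pointwise splitting is measurable because $q\mapsto\psi^<_{\Lambda,q}$ is norm continuous on $\overline{B_{2r}}$ by Proposition \ref{lem:riesz-local} and Lemma \ref{lem:inner-cutoff-estimates}. Taking orthogonal complements gives \eqref{eq:cutoff-complement-direct-sum}.

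Next we check reducibility. In the $q$-variable, $H_\Lambda(P)$ acts on $\cF_\Lambda^{(1)}$ as the decomposable operator $q\mapsto h_\Lambda(q)+1$. Multiplication by the indicator $\one_{B_r}$ therefore commutes with it. So does the fibrewise rank-one projection $|\psi^<_{\Lambda,q}\rangle\langle\psi^<_{\Lambda,q}|$, since $\psi^<_{\Lambda,q}$ is an eigenvector of $h_\Lambda(q)$. These projections are bounded functions of commuting reducing decompositions, so they preserve $\cQ$: the fibrewise rank-one projection maps into $\cD(h_\Lambda(q))$ with fibre energy $e_\Lambda(q)$ bounded on $B_r$, hence into the form domain.

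The four energy bounds come from the sector formula \eqref{eq:cutoff-number-sectors} together with Lemma \ref{lem:inner-cutoff-estimates}.
\begin{enumerate}
\item On $\mathcal V_0$ the form is $h_\Lambda(P)-e_\Lambda(0)-1\ge d_\Lambda(P)$.
\item On $\mathcal V_{\mathrm{out}}$ the fibre form is $h_\Lambda(q)-e_\Lambda(0)\ge e_\Lambda(q)-e_\Lambda(0)\ge m_r/2$, because $|q|\ge r$ there.
\item On $\mathcal V_{\mathrm{exc}}$ we use \eqref{eq:inner-cutoff-ground-state} fibrewise: $h_\Lambda(q)-e_\Lambda(0)\ge e_\Lambda(q)-e_\Lambda(0)+\frac\gamma2\ge\frac\gamma2$ on vectors orthogonal to $\psi^<_{\Lambda,q}$, where we used $e_\Lambda(q)\ge e_\Lambda(0)$.
\item On $\cF_\Lambda^{(n)}$ with $n\ge2$ the fibre operator is $h_\Lambda(P-\sum k_j)+n-e_\Lambda(0)-1\ge n-1\ge1$, again by $e_\Lambda(\cdot)\ge e_\Lambda(0)$.
\end{enumerate}
To obtain these fibrewise bounds for the quadratic form we integrate them over $q$ (or over $(k_1,\dots,k_n)$). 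This requires knowing that the form of a decomposable operator is the integral of the fibre forms. That identity is standard for direct integrals of operators whose fibres are uniformly bounded below.

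For \eqref{eq:cutoff-complement-gap}, decompose $y=x_0+x_{\mathrm{out}}+x_{\mathrm{exc}}+x_{\ge2}$. Because all four subspaces reduce $H_\Lambda(P)$ and the projections preserve $\cQ$, there are no cross terms, so
\[
A_{\Lambda,P}[y]\ge\min\{d_\Lambda(P),m_r/2,\gamma/2,1\}\norm y^2 .
\]
By Lemma \ref{lem:inner-cutoff-estimates}, $d_\Lambda(P)\to\infty$, so we can choose $P_0(\Lambda)>\Lambda+r$ with $d_\Lambda(P)\ge1$ for $|P|\ge P_0(\Lambda)$. The minimum is then at least $\eta_0$, and $\eta_0$ does not depend on $\Lambda$. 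The main technical point I expect is the rigorous treatment of the fibrewise projection onto $\psi^<_{\Lambda,q}$: its measurability, its preservation of $\cQ$, and the resulting absence of cross terms in the form. The remaining steps are direct.
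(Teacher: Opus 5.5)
Your proposal is correct and follows essentially the same route as the paper. You split off $\mathcal V_0=\cF^{(0)}_\Lambda$ and $\mathcal V_{\ge2}$ using the outer-phonon-number sectors of Lemma \ref{lem:cutoff-number-sectors}, decompose the one-phonon sector fibrewise in the residual momentum $q$ into $\mathcal V_{\mathrm{out}}$, $\sC^0_{\Lambda,P}$ and $\mathcal V_{\mathrm{exc}}$, and read off the four bounds from Lemma \ref{lem:inner-cutoff-estimates}. You are also more explicit than the paper on three points it leaves implicit: measurability of the fibrewise projection, the absence of cross terms, and the final choice of $P_0(\Lambda)$ with $d_\Lambda(P)\ge1$.
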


\begin{proof}
    In order to prove \eqref{eq:cutoff-complement-direct-sum} it remains to verify the orthogonal decomposition $L^2(B_r ; \cF_<) = \mathcal V _\mathrm{exc} \oplus \sC^0 _{\Lambda,P}$. By what we have shown already, the projection onto $\sC^0 _{\Lambda,P}$ is $\Pi_{\Lambda,P}^0$. Therefore, for any $\xi \in L^2(B_r;\cF_<)$,
    $$\Pi_{\Lambda,P}^0 \xi (q) = \psi_{\Lambda,q}\langle \psi_{\Lambda,q},\xi(q)  \rangle.$$
    The last display is the zero vector if, and only if, $\langle \xi(q), \psi_{\Lambda,q} \rangle = 0$ for almost every $q$. This is exactly $\mathcal V_{\mathrm{exc}}$, which verifies \eqref{eq:cutoff-complement-direct-sum}.
    All subspaces reduce $H_\Lambda (P)$. Indeed, $\Phi(g_\Lambda)$ can not create or destroy a phonon with momentum $\ge \Lambda$, which means that outer phonon numbers are conserved. This means that $H_\Lambda(P)$ leaves outer phonons untouched, and so the spaces reduce $H_\Lambda (P)$. We next verify the claimed energy estimates.
    First, let $x_0 \in \mathcal V _0 \cap \cQ$. Then
    $$A_{\Lambda,P}[x_0] \ge (e_\Lambda(P)-e_\Lambda (0) -1) [x_0] = d_\Lambda(P)\Vert x_0 \Vert^2,$$
    as $H_\Lambda (P)$ restricted to $\cF_<$ is just $h_\Lambda (P)$. Next, take $x_\mathrm{out} \in \mathcal V_\mathrm{out}\cap \cQ$. Then, by \eqref{eq:cutoff-number-sectors} we have the formula
    $$A_{\Lambda,P}[x_\mathrm{out}] \ge \int_{B_\Lambda (P)^c \backslash B_r} \de q \,  \langle x_\mathrm{out}(q), \big( e_\Lambda(q) -e_\Lambda(0) \big) x_\mathrm{out}(q)  \rangle \ge \frac{m_r }{2}\Vert  x_\mathrm{out}\Vert^2,$$
    by definition of $m_r$ and the domain we integrate over. This verifies the second claimed energy estimate. Third, pick $x_\mathrm{exc} \in \mathcal V_\mathrm{exc}\cap \cQ$. Split 
    $$x_\mathrm{exc} (q) = \psi_{\Lambda,q}^< \langle  \psi_{\Lambda,q}^<,x_\mathrm{exc} (q) \rangle + \phi_q$$
    with $\phi_q \in (\mathrm{span }{\{\psi_{\Lambda,q}^<} \})^\perp$. By assumption the first term is $0$. Using again the explicit action of $H_\Lambda (P)$ on the soft-phonon Fock space via \eqref{eq:cutoff-number-sectors} yields
    $$A_{\Lambda,P}[x_\mathrm{exc}] = \int_{B_r} \de q \big( h_\Lambda(q) - e_\Lambda(0) \big)[ \phi_q] .$$
    Because $\phi_q$ is orthogonal to the ground state, we have the point-wise bound
    $$\big( h_\Lambda(q) - e_\Lambda(0) \big)[ \phi_q]  \ge \frac\gamma 2 \Vert \phi_q \Vert ^2.$$ Integrating against $q$ on $B_r$ verifies the third energy estimate. Finally, let $x_{\ge 2} \in \mathcal V_{\ge 2}\cap \cQ$. Using once again \eqref{eq:cutoff-number-sectors} shows that $A_{\Lambda, P}$ is lower bounded by $1$ in case at least two hard phonons are present. By definition $\eta_0$ is independent of $\Lambda$, finishing the proof.
\end{proof}

\subsection{Comparison with the full Hamiltonian}

The decomposition \eqref{eq:cutoff-complement-direct-sum} gives an
additive energy estimate for $H_\Lambda(P)$. For $H(P)$ the four subspaces can interact. We use the form comparison from Appendix~\ref{app:basic_froehlich} and the following estimates to obtain a bound on $\sC_P^\perp$.

\begin{lemma}\label{lem:cutoff-projection-comparison}
There is $C<\infty$ such that, for all sufficiently large $\Lambda$ and every $|P|>\Lambda+r$,
\begin{align}
 \norm{J_P-J^0_{\Lambda,P}}^2
 &\le\delta_\Lambda^2+4\norm{b_P}_{L^2(B_r)}^2,
 \label{eq:cutoff-channel-comparison}\\
 \norm{\Pi_P-\Pi^0_{\Lambda,P}}
 &\le C\bigl(\delta_\Lambda+|P|^{-1}\bigr).
 \label{eq:cutoff-projection-comparison}
\end{align}
The constant is independent of $\Lambda$ and $P$.
\end{lemma}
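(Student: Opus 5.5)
The plan is to split $J_P-J^0_{\Lambda,P}$ into two pieces: the ``direct'' part of $J_P$, which only sees the change of ground states, and the ``exchange'' part, which is small because $P$ is large. The projection bound then follows from the fact that $J_P$ is almost isometric.

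\textbf{The map $J^0_{\Lambda,P}$ as a creation map.} Under the identification $\cF_\Lambda^{(1)}\simeq L^2(B_\Lambda(P)^c;\cF_<)$, the vector $J^0_{\Lambda,P}f$ is the state whose $(n+1)$-phonon component is $f(P-k)\psi^<_{\Lambda,P-k}\otimes\Omega_>$, with exactly one phonon $k\in B_r(P)\subset B_\Lambda^c$ and no symmetrization cross terms with the soft phonons. By Lemma~\ref{lem:inner-cutoff-estimates} we have $\psi_{\Lambda,q}=\psi^<_{\Lambda,q}\otimes\Omega_>$ for $|q|\le2r$. Hence $C^{\psi_\Lambda}f=J_{\Lambda,P}f$ splits into two terms:
\begin{itemize}
\item the ``diagonal'' term, where the created phonon is the hard one; this term equals $J^0_{\Lambda,P}f$;
\item the remaining term, in which $\psi_{\Lambda,q}$ would need a phonon in $B_r(P)\subset B_\Lambda^c$; this term vanishes since $\psi_{\Lambda,q}\in\cF_<$.
\end{itemize}
Consequently $J_{\Lambda,P}=J^0_{\Lambda,P}$ exactly when $|P|>\Lambda+r$. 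This is consistent with \eqref{eq:creationnorm}, since the exchange term contains $a(P-q')\psi_{\Lambda,q}=0$.

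\textbf{Proof of \eqref{eq:cutoff-channel-comparison}.} Apply \eqref{eq:creationnorm} to the difference family $\eta_q=\psi_q-\psi_{\Lambda,q}$, using $J_P-J^0_{\Lambda,P}=C^{\eta}$. The diagonal term gives at most $\delta_\Lambda^2\|f\|^2$ by Proposition~\ref{lem:riesz-local}. In the exchange term we have $a(P-q')\psi_{\Lambda,q}=0$, so $a(P-q')\eta_q=a(P-q')\psi_q$. The pull-through bound \eqref{eq:ptrbound} then gives the kernel bound $4b_P(q)b_P(q')$, and Cauchy--Schwarz bounds the exchange term by $4\|b_P\|^2\|f\|^2$. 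Summing yields \eqref{eq:cutoff-channel-comparison}.

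\textbf{Proof of \eqref{eq:cutoff-projection-comparison}.} Since $J^0_{\Lambda,P}$ is an isometry, $\Pi^0_{\Lambda,P}=J^0_{\Lambda,P}(J^0_{\Lambda,P})^*$. Write
\[
\Pi_P-\Pi^0_{\Lambda,P}=(\Pi_P-J_PJ_P^*)+(J_P-J^0_{\Lambda,P})J_P^*+J^0_{\Lambda,P}(J_P-J^0_{\Lambda,P})^*.
\]
The three terms are bounded as follows.
\begin{itemize}
\item The first term is at most $C\beta_P^2$ by Proposition~\ref{prop:projection}.
\item The last two terms are at most $(\|J_P\|+1)\|J_P-J^0_{\Lambda,P}\|$.
\item For $|P|\ge P_0$ we have $\|J_P\|\le\sqrt{3/2}$.
\item We have $\|b_P\|_{L^2(B_r)}^2\le \alpha r^3/(\pi(|P|-r)^2)=O(|P|^{-2})$ and $\beta_P^2=O(|P|^{-2})$ by \eqref{eq:projection-beta-bound}.
\end{itemize}
Together these give $C(\delta_\Lambda+|P|^{-1})$, with $C$ depending only on $\alpha$ and $r$. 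For $|P|$ below $P_0$, the condition $|P|>\Lambda+r$ with $\Lambda$ large already forces $|P|$ large. So choosing $\Lambda$ large makes Proposition~\ref{prop:projection} applicable.

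\textbf{Main obstacle.} The only delicate point is the first step: making rigorous that the exchange contribution of the cutoff family vanishes and that $C^{\psi_\Lambda}$ coincides with $J^0_{\Lambda,P}$. This requires the support property $\psi_{\Lambda,q}\in\cF_<\otimes\Omega_>$ and the disjointness $B_r(P)\cap B_\Lambda=\emptyset$. With these, the symmetrization in \eqref{eq:component} reduces to the single hard-phonon slot.
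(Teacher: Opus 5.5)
Your proof is correct and follows the paper's argument. You identify $J^0_{\Lambda,P}$ with $C^{\psi_\Lambda}$, use $a(P-q')\psi_{\Lambda,q}=0$ so that the exchange term only sees $\psi_q$ and is controlled by the pull-through bound, and then get \eqref{eq:cutoff-projection-comparison} from the estimates of Proposition~\ref{prop:projection}. The only cosmetic difference is that you telescope $J_PJ_P^*-J^0_{\Lambda,P}(J^0_{\Lambda,P})^*$ and add $\|\Pi_P-J_PJ_P^*\|\le C\beta_P^2$, whereas the paper compares the isometries $U_P=J_PF_P$ and $J^0_{\Lambda,P}$ via $\|UU^*-VV^*\|\le 2\|U-V\|$ and $\|F_P-1\|\le C\beta_P^2$; both rest on the same inputs.
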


\begin{proof}
To prove \eqref{eq:cutoff-channel-comparison}, we apply Lemma \ref{lem:creation}. Indeed, recall that for $|q|\le r$, $\psi_{\Lambda,q}$ is supported on $\cF_<$. Therefore, $J^0 _{\Lambda,P} = C^{\psi_{\Lambda}}$, with $\psi_\Lambda = \{\psi_{\Lambda,q}: q \in B_r  \}$. An application of \eqref{eq:creationdiff} yields
$$\norm{(J_P-J^0_{\Lambda,P})f }^2 \le \left(
 \sup_{|q|\le r}\norm{\psi_{\Lambda,q}-\psi_q}^2
 +\sup_{|q|\le r}\int_{B_r}\dd q'\,\norm{a(P-q')(\psi_{\Lambda,q}-\psi_q)}^2
 \right)\norm f ^2.$$
 The first term on the right-hand side is bounded by $\delta_\Lambda^2$ by definition (recall Proposition \ref{lem:riesz-local}). For the second term, note that  $a(P-q')\psi_{\Lambda,q} = 0$ for $(q,q') \in B_r \times B_r$, and so $\Vert a(P-q')(\psi_{\Lambda,q}-\psi_q)\Vert = \Vert a(P-q') \psi_q\Vert \le 2b_P (q')$. Therefore 
 $$\norm{(J_P-J^0_{\Lambda,P})f }^2 \le \left(\delta_\Lambda ^2 + 4\Vert b_P \Vert _{L^2 (B_r)}^2 \right) \Vert f \Vert^2,$$
 which is \eqref{eq:cutoff-channel-comparison}. In order to show the second claimed inequality, we choose $|P|$ sufficiently large so that we can apply Proposition \ref{prop:projection}. We then have $\Vert J_P \Vert \le
 \sqrt{3/2}$
 , $\Vert F_P -1 \Vert \le C|P|^{-2}$ as well as $\Vert b_P \Vert_{L^2(B_r)}\le C/|P|$. Therefore,
 $$\Vert U_P - J^0 _{\Lambda,P} \Vert \le \Vert J_P (F_P -1)  \Vert + \Vert J_P - J^0 _{\Lambda,P}\Vert \le C(\delta_\Lambda + |P|^{-1}).$$
 Using that $\Vert UU^* - VV^*\Vert \le 2 \Vert U-V\Vert$ for two isometries $U,V$ on the same initial space, \eqref{eq:cutoff-projection-comparison} is shown (pick $U= J^0 _{\Lambda,P}$, $V= U_P$). Finally, note that we chose all of our constants independent of $\Lambda$ or $P$.
\end{proof}

\begin{lemma}\label{lem:physical-energy-cutoff-spaces}
There are $\Lambda_1<\infty$ and $M<\infty$ such that, for every $\Lambda\ge\Lambda_1$ and every $|P|\ge P_0(\Lambda)$, the cutoff energy on $\sC^0_{\Lambda,P}$ is given by
\begin{equation}\label{eq:cutoff-channel-energy}
 A_{\Lambda,P}[J^0_{\Lambda,P}f]
 =\int_{B_r}\de q\,\bigl(e_\Lambda(q)-e_\Lambda(0)\bigr)|f(q)|^2
 \qquad(f\in L^2(B_r)).
\end{equation}
For the full Hamiltonian, one has 
$$A_P[y]\ge\eta_1\norm{y}^2, \qquad \qquad \qquad (A_P+2)[z]\le M^2\norm{z}^2$$
for $y\in (\sC^0_{\Lambda,P})^\perp\cap\cQ$ and $z\in\sC^0_{\Lambda,P}$,  where $\eta_1=\eta_0/2$. Moreover, $\sC^0_{\Lambda,P}\subset\cQ$, and $\Pi^0_{\Lambda,P}$ and $Q^0_{\Lambda,P}$ preserve $\cQ$. The constants $\eta_1$ and $M$ are independent of $\Lambda$ and $P$.
\end{lemma}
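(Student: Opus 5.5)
The plan is to prove the four claims of Lemma~\ref{lem:physical-energy-cutoff-spaces} in the following order: the cutoff channel energy formula, the regularity statements, the lower bound on $(\sC^0_{\Lambda,P})^\perp$, and finally the upper bound on $\sC^0_{\Lambda,P}$.

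\textbf{Channel energy formula.} I would compute \eqref{eq:cutoff-channel-energy} directly from Lemma~\ref{lem:cutoff-number-sectors}. The vector $J^0_{\Lambda,P}f$ lies in $\cF^{(1)}_\Lambda$, and in the variable $q=P-k$ the Hamiltonian acts fibrewise as $h_\Lambda(q)+1$. By Lemma~\ref{lem:inner-cutoff-estimates}, $\psi^<_{\Lambda,q}$ is an eigenvector of $h_\Lambda(q)$ with eigenvalue $e_\Lambda(q)$ for $|q|\le 2r$. Hence
\[
A_{\Lambda,P}[J^0_{\Lambda,P}f]=\int_{B_r}\de q\,|f(q)|^2\bigl(e_\Lambda(q)+1-e_\Lambda(0)-1\bigr),
\]
which is the claimed formula.

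\textbf{Regularity statements.} For the inclusion $\sC^0_{\Lambda,P}\subset\cQ$, I would note that $J^0_{\Lambda,P}f=C^{\psi_\Lambda}f$, as in Lemma~\ref{lem:cutoff-projection-comparison}. Then $\sC^0_{\Lambda,P}=\Ran J_{\Lambda,P}\subset\cD(H_\Lambda(P))\subset\cQ$ by Lemma~\ref{lem:cutoff-channel}. Alternatively, one can bound $\|\mathcal N^{1/2}\cdot\|$ and $\|P_{\mathrm f}\cdot\|$ directly, using the uniform boundedness of $K(q)[\psi^<_{\Lambda,q}]$ on $B_r$ and the fact that $|k|\le|P|+r$ for the hard phonon. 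Since $\Pi^0_{\Lambda,P}$ maps $\cF$ into $\sC^0_{\Lambda,P}\subset\cQ$, both $\Pi^0_{\Lambda,P}$ and $Q^0_{\Lambda,P}=1-\Pi^0_{\Lambda,P}$ preserve $\cQ$.

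\textbf{Lower bound on the complement.} Here I would invoke the form comparison Proposition~\ref{prop:form-comparison} from the appendix. I expect it to give, for $\xi\in\cQ$,
\[
\bigl|(H(P)-H_\Lambda(P))[\xi]\bigr|\le \eps_\Lambda\bigl(K(P)+1\bigr)[\xi],
\]
with $\eps_\Lambda\to0$ uniformly in $P$, and also $K(P)\le C(H_\Lambda(P)+C)$ uniformly in $\Lambda$ and $P$. Combined with $|e_\Lambda(0)-E(0)|\to0$, this yields
\[
A_P[y]\ge(1-C\eps_\Lambda)A_{\Lambda,P}[y]-C\eps_\Lambda\|y\|^2-|e_\Lambda(0)-E_0|\,\|y\|^2 .
\]
Using $A_{\Lambda,P}[y]\ge\eta_0\|y\|^2$ from \eqref{eq:cutoff-complement-gap}, and choosing $\Lambda_1$ so large that the error terms are at most $\eta_0/2$, gives $A_P[y]\ge\eta_1\|y\|^2$.

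The main obstacle is that $A_{\Lambda,P}[y]$ may be very large on the complement; for instance, on $\mathcal V_0$ it grows like $d_\Lambda(P)$. A purely additive error would therefore fail. The error must be relative to $A_{\Lambda,P}$, and the relative form bound has to be uniform in $P$. This uniformity is exactly what the paper stresses is provided by \ref{I:cutoff} and its reproof in our convention.

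\textbf{Upper bound on the channel.} For $z=J^0_{\Lambda,P}f$, the same form comparison in the other direction gives
\[
A_P[z]\le(1+C\eps_\Lambda)A_{\Lambda,P}[z]+C\|z\|^2 .
\]
By \eqref{eq:cutoff-channel-energy} and $e_\Lambda(q)-e_\Lambda(0)=h_\Lambda(q)\le \|h\|_\infty+o(1)\le1$, we get $A_{\Lambda,P}[z]\le\|z\|^2$, since $J^0_{\Lambda,P}$ is an isometry. This gives $(A_P+2)[z]\le M^2\|z\|^2$ with $M$ independent of $\Lambda\ge\Lambda_1$ and $P$.
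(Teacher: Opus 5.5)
Your proposal is correct and follows essentially the same route as the paper: a fibrewise computation on $\cF^{(1)}_\Lambda$ using $\psi_{\Lambda,q}=\psi^<_{\Lambda,q}\otimes\Omega_>$ for \eqref{eq:cutoff-channel-energy}, and the $P$-uniform relative form comparison \eqref{eq:two-sided} for both the lower bound on $(\sC^0_{\Lambda,P})^\perp$ (via \eqref{eq:cutoff-complement-gap}) and the upper bound on $\sC^0_{\Lambda,P}$. One small caveat: Lemma~\ref{lem:cutoff-channel} is formulated for $J_{\Lambda,P}$ in the regime $\Lambda>|P|+r$, the opposite of the present $|P|>\Lambda+r$, so the inclusion $\sC^0_{\Lambda,P}\subset\cD(H_\Lambda(P))\subset\cQ$ is better read off directly from your fibrewise identity $H_\Lambda(P)J^0_{\Lambda,P}f=J^0_{\Lambda,P}M_{e_\Lambda(\cdot)+1}f$, as the paper does. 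Your range-inclusion argument for the $\cQ$-invariance of $\Pi^0_{\Lambda,P}$ and $Q^0_{\Lambda,P}$ is a valid, slightly simpler substitute for the paper's reducing-subspace remark.
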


\begin{proof}
    Take $f \in L^2(B_r)$. Because $\psi_{\Lambda,q} = \psi_{\Lambda,q}^<\otimes \Omega_>$ and by \eqref{eq:cutoff-number-sectors}, it holds that
    $$H_\Lambda (P) J^0 _{\Lambda,P}f = J^0 _{\Lambda,P} M_{e_\Lambda (\cdot)+1} f.$$
In particular, $\sC^{0}_{\Lambda,P} \subseteq D(H_\Lambda (P))\subseteq \cQ$, and this verifies \eqref{eq:cutoff-channel-energy} by taking the scalar product with $J^0 _{\Lambda,P}f$. By Proposition \ref{prop:cutoff-complement-decomposition} it is known that $\Pi^0 _{\Lambda,P}$ reduces $H_\Lambda (P)$, which means that the projection as well as projecting onto the orthogonal complement preserve $\cQ$. We now apply \eqref{eq:two-sided} to see that for any $y \in (\sC^0 _{\Lambda,P})^\perp$,
$$A_P [y] \ge (1-\epsilon_\Lambda) H_\Lambda (P) [y] - (\epsilon_\Lambda b  + E_*) \Vert y \Vert^2 \ge \left[ e_\Lambda \!(0)\! - \!E (0)+\eta_0\!-\!\epsilon_\Lambda (e_\Lambda (0) +1+\eta_0 + b) \right]\Vert y \Vert^2,$$
where $b$ is chosen as in Proposition \ref{prop:form-comparison}. As $e_\Lambda (0) = E_\Lambda (0) \to E (0)$, the term multiplied by $\epsilon_\Lambda$ can be made arbitrarily small by increasing $\Lambda$. Choose $\Lambda$ big enough so that 
$$\epsilon_\Lambda (e_\Lambda (0) +1+\eta_0 + b) \le \eta_0 /4, \qquad \qquad |E_\Lambda(0) - E(0)| \le \eta_0 /4,$$
which shows $A_P [y] \ge \eta_1 \Vert y \Vert^2$ with $\eta_1 = \eta_0 /2$. It remains to verify  
$$(A_P+2)[z]\le M^2\norm{z}^2$$
for $z \in \sC^0 _{\Lambda,P}$. This follows from another application of  \eqref{eq:two-sided}. Indeed, estimate
$$(A_P+2)[z] = H(P)[z] + (2-E_*) \Vert z \Vert^2 \le (1+\epsilon_\Lambda) \big( H_{\Lambda}(P)[z] + b \Vert z \Vert^2 \big) + (2-E_*-b) \Vert z \Vert^2.$$
Choosing now $z = J^{0}_{\Lambda,P}f$ for some $f \in L^2(B_r)$ yields, using \eqref{eq:cutoff-channel-energy} and that $J^0 _{\Lambda,P}$ is isometric,
$$ H_{\Lambda}(P)[z] \le (C_r + 1+b) \Vert z \Vert^2,$$
where the constant $C_r>0$ is uniform in $\Lambda$. Combining yields the claim.
\end{proof}

We finally compare the two orthogonal complements. Recall that
$A_P\ge-1$, so $A_P+2\ge1$.

\begin{lemma}\label{lem:change-complement}
Assume the conclusions of Lemma~\ref{lem:physical-energy-cutoff-spaces}
and put $d=\norm{\Pi_P-\Pi^0_{\Lambda,P}}<1$.
For every $\xi\in\sC_P^\perp\cap\cQ$
$\norm{\Pi^0_{\Lambda,P}\xi}\le d\norm{\xi}$, and
there is $d_0=d_0(\eta_1,M)>0$ such that
\begin{equation}\nonumber
 A_P[\xi]\ge\frac{\eta_1}{2}\norm{\xi}^2
 \qquad(\xi\in\sC_P^\perp\cap\cQ)
\end{equation}
for all $d\le d_0$. The number $d_0$ is independent of $\Lambda$ and $P$.
\end{lemma}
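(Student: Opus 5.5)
For $\xi\in\sC_P^\perp\cap\cQ$ I would split $\xi$ along the cutoff space, $\xi=z+y$ with $z=\Pi^0_{\Lambda,P}\xi\in\sC^0_{\Lambda,P}$ and $y=Q^0_{\Lambda,P}\xi$. Both pieces lie in $\cQ$ by Lemma~\ref{lem:physical-energy-cutoff-spaces}.

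\textbf{Step 1: $z$ is small.} Since $\Pi_P\xi=0$, we have $z=(\Pi^0_{\Lambda,P}-\Pi_P)\xi$. Hence $\norm{z}\le d\norm{\xi}$, which is the first claim. Also $\norm{y}^2=\norm{\xi}^2-\norm{z}^2\ge(1-d^2)\norm{\xi}^2$.

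\textbf{Step 2: expand the form.} Because $A_P$ is not reduced by $\sC^0_{\Lambda,P}$, I would work with the nonnegative form $B=A_P+2\ge1$ and write
\[
B[\xi]=B[y]+B[z]+2\operatorname{Re}B(y,z).
\]
The diagonal terms are handled by Lemma~\ref{lem:physical-energy-cutoff-spaces}. It gives $B[y]\ge(\eta_1+2)\norm{y}^2$ and $0\le B[z]\le M^2\norm{z}^2\le M^2d^2\norm{\xi}^2$. For the cross term, Cauchy--Schwarz for the positive form $B$ gives
\[
|B(y,z)|\le B[y]^{1/2}B[z]^{1/2}\le B[y]^{1/2}Md\norm{\xi}.
\]
So, with $t=B[y]^{1/2}$,
\[
B[\xi]\ge t^2-2Md\norm{\xi}\,t .
\]

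\textbf{Step 3: absorb the cross term.} This is the only delicate point, because $B[y]$ is not bounded above and the cross term must be absorbed into it. Using $2Md\norm{\xi}t\le \epsilon t^2+\epsilon^{-1}M^2d^2\norm{\xi}^2$ with $\epsilon$ small, I get
\[
B[\xi]\ge(1-\epsilon)(\eta_1+2)(1-d^2)\norm{\xi}^2-\epsilon^{-1}M^2d^2\norm{\xi}^2 .
\]
Subtracting $2\norm{\xi}^2$ gives
\[
A_P[\xi]\ge\bigl[(1-\epsilon)(\eta_1+2)(1-d^2)-2-\epsilon^{-1}M^2d^2\bigr]\norm{\xi}^2 .
\]
Now choose $\epsilon=\eta_1/(4(\eta_1+2))$. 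Then $(1-\epsilon)(\eta_1+2)-2=3\eta_1/4$. Next choose $d_0$ so small that $(\eta_1+2)d_0^2+\epsilon^{-1}M^2d_0^2\le\eta_1/4$. For $d\le d_0$ this yields $A_P[\xi]\ge\frac{\eta_1}{2}\norm{\xi}^2$.

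Since $\epsilon$ and $d_0$ depend only on $\eta_1$ and $M$, and these are independent of $\Lambda$ and $P$ by Lemma~\ref{lem:physical-energy-cutoff-spaces}, $d_0$ is independent of $\Lambda$ and $P$ as claimed.
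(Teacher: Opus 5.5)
Your proof is correct and follows the paper's argument. It uses the same splitting $\xi=y+z$ along $\sC^0_{\Lambda,P}$, the bound $\norm{z}\le d\norm{\xi}$ from $\Pi_P\xi=0$, and the two-sided control of $T=A_P+2$ from Lemma~\ref{lem:physical-energy-cutoff-spaces}. The only cosmetic difference is that the paper applies the reverse triangle inequality to $\norm{T^{1/2}(y+z)}$ directly, whereas you expand the form and absorb the cross term via Cauchy--Schwarz and Young; this amounts to the same estimate and gives an explicit $d_0$.
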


\begin{proof}
    For convenience we abbreviate $T=A_P + 2 \ge 1$. Take $\xi \in \sC^\perp _P \cap \cQ$. By the previous Lemma, $y = Q^0 _{\Lambda,P} \xi$ and $z = \Pi^0_{\Lambda,P}\xi$ belong to $\cQ$. Clearly $\Pi_P \xi = 0$. This yields the claimed bound on $z$, as
    $$z = \Pi^0 _{\Lambda,P} \xi = (\Pi^0 _{\Lambda,P}-\Pi_P) \xi.$$
    Since $y \perp z$ we therefore find the simple (no-mixing) bound 
    $\Vert y \Vert^2 \ge (1-d^2)\Vert \xi\Vert^2.$
    Thanks to Lemma \ref{lem:physical-energy-cutoff-spaces} we moreover find the bounds
    $$\Vert T^{1/2} y\Vert \ge \sqrt{\eta_1+2}\Vert y \Vert, \qquad \Vert T^{1/2}z \Vert \le M \Vert z \Vert.$$
    Using the reverse triangle inequality yields
    $$T[\xi]^{1/2}\ge \Vert T^{1/2}y\Vert - \Vert T^{1/2}z \Vert \ge \left( \sqrt{\eta_1 + 2} \sqrt{1-d^2}-Md \right)\Vert \xi \Vert.$$
    Choose $\Lambda$ and $|P|$ large enough that 
    $$\sqrt{\eta_1 + 2} \sqrt{1-d^2}-Md \ge \sqrt{2+\eta_1/2}.$$
    Note that this choice only depends on the uniform constants $\eta_1$ and $M$. Squaring both sides and after subtracting $2 \Vert \xi \Vert^2$ on both sides yields the claim.
\end{proof}

\begin{proof}[Proof of Proposition \ref{prop:complement-gap}]
    Take $\eta_1,M$ and $d_0$ as in Lemma \ref{lem:physical-energy-cutoff-spaces} and Lemma \ref{lem:change-complement}. Moreover, let $C_*$ be the constant used in \eqref{eq:cutoff-projection-comparison}. All constants just named were shown to be independent of any sufficiently large cutoff $\Lambda$. Because $\delta_\Lambda \to 0$, we first fix $\Lambda \ge \Lambda_1$ large enough so that all previous conclusions apply, and $C_* \delta_\Lambda \le d_0 /2$. Using this cutoff, choose $P_0 \ge P_0 (\Lambda)$ such that Proposition \ref{prop:projection} applies and $C_* |P|^{-1}\le d_0 /2$. Then \eqref{eq:cutoff-projection-comparison} yields
    $$\Vert \Pi_P - \Pi^{0}_{\Lambda,P}\Vert \le d_0,$$
    in which case Lemma \ref{lem:change-complement} yields
    $$A_P[\xi]\ge \frac{\eta_1} 2 \Vert \xi \Vert^2$$
    for any $\xi \in \sC^\perp _P \cap \cQ$ and $|P|\ge P_0$. This yields the claim with $\eta = \eta_1 / 2 = \eta_0 /4 > 0$.
\end{proof}

\section{Putting everything together}
\label{sec:finale}
\begin{proof}[Proof of Theorem \ref{thm:main}]
    Take $\alpha >0$ and recall the abbreviation $A_P = H(P) - E_*$, where $E_* = E(0) +1$. Take $C>0$ and $\eta >0$ as the constants found in Proposition \ref{prop:channel-estimates} and Proposition \ref{prop:complement-gap}. Moreover, $|P|\ge P_0$ as in Proposition \ref{prop:channel-estimates}. Recall also that 
    $$\beta_P ^2 = \int_{B_r} \de q \, \frac{|g(P-q)|^2}{E(q)-E(0)} = \frac{\alpha}{2\pi^2}\int_{B_r} \de q \, \frac{1}{(E(q)-E(0)) |P-q|^2}  \to 0$$
    as $|P|\to \infty$. We can therefore choose $P_0 \le P_\alpha < +\infty$ such that the conclusions of Propositions  \ref{prop:projection}, \ref{prop:channel-estimates} and \ref{prop:complement-gap} hold true for all $|P|\ge P_\alpha$. In particular,
    \begin{equation}
        \label{eq:final_constants_bound}
        \left(C+ \frac{2C^2}{\eta }\right)\beta_P^2\le \frac 1 2 .
    \end{equation}
    Let us now take $\xi \in \cQ$. Thanks to Proposition \ref{prop:projection} we may find a function $f_\xi \in L^2(B_r)$ that recovers $\xi$ when applying $U_P$ to it and a remainder lying in the orthogonal complement of $\sC_P$. More precisely, we have the decomposition
    $$\xi = U_P f_\xi + y_\xi$$
    with $f_\xi = U_P ^* \xi$ and $y_\xi = Q_P \xi \in \sC_P ^\perp$. Most notably, $U_P ^* \xi \in L^2(B_r)$ as well as $y_\xi \in \sC_P ^\perp \cap \cQ$. Proposition \ref{prop:channel-estimates} and Proposition \ref{prop:complement-gap} now let us estimate the energy on $\sC_P$ as well as cross-terms. Indeed, 
    $$
    A_P[\xi] \ge (1-C\beta_P ^2) \langle f_\xi,M_h f_\xi \rangle - 2 C \beta_P \langle f_\xi,M_h f_\xi \rangle^{1/2} \Vert y_\xi \Vert + \eta \Vert y_\xi \Vert^2,
    $$
    which is just \eqref{eq:channel_bound_sum} followed by Proposition \ref{prop:complement-gap} applied to $A_P[y_\xi]$. Young's inequality applied to the second term moreover yields 
    $$2 C \beta_P \langle f_\xi,M_h f_\xi \rangle^{1/2} \Vert y_\xi \Vert \le \frac{2C^2}{\eta}\beta_P ^2 \langle f_\xi,M_hf_\xi \rangle+\frac \eta 2 \Vert y_\xi \Vert^2.$$
    Combining the previous two bounds gives
    \begin{equation}
        \label{eq:final_calc}
        A_P [\xi] \ge \left( 1- \beta_P ^2 \big(C + 2 \frac{C^2}{\eta}\big) \right) \langle f_\xi,M_h f_\xi\rangle + \frac \eta 2 \Vert y_\xi \Vert^2 \ge \frac 1 2 \left( \langle f_\xi,M_h f_\xi \rangle + \eta \Vert y_\xi \Vert^2 \right) \ge 0, 
    \end{equation}
    where the second inequality follows by equation \eqref{eq:final_constants_bound}. By Rayleigh-Ritz, $E(P) \ge E_*$, which implies by \ref{I:hvz} that $E(P) = E(0)+1$ for all $P$ so that $|P|\ge P_\alpha$, and hence \eqref{eq:main_goal} is verified. To show that no ground state can sit at the edge, assume there exists $\xi \in \Ker(H(P) - E(0)-1)$. Then $\xi \in \cQ$ and $A_P [\xi] = 0$. By \eqref{eq:final_calc} this implies $\langle f_\xi,M_h f_\xi \rangle = 0$. But $E(q) > E(0)$ on $B_r$ almost everywhere, which means that $f_\xi=\delta_0$. But $\delta_0 \notin L^2 (B_r)$, which shows the claim.
\end{proof}

\textbf{Acknowledgements:}
We thank Steffen Polzer, Morris Brooks, Volker Betz and Jonas Lampart for several interesting and enlightening discussions on the large momentum problem. In particular, we thank Benjamin Hinrichs for helpful discussions and valuable feedback on the manuscript. The author acknowledges support by the German Research Foundation (DFG) within the scientific network ‘A(E)MP – Appearance of the Effective Mass in Polaron Models’ (Grant No. 569490025). TS also gratefully acknowledges support by DFG grant No 535662048. Finally, we thank the Center of Mathematical Sciences and Applications (Harvard University) for its support and for providing a great research atmosphere.

\textbf{AI disclosure:}
ChatGPT was used to assist with literature searches, improve the presentation and verify correctness. All mathematical arguments in this manuscript were written by the author, who takes full responsibility for their correctness.

\appendix

\section{Basic properties of the Fr\"ohlich Hamiltonian}
\label{app:basic_froehlich}
In this appendix we want to collect basic properties of the Hamiltonians we work with throughout. We find it sometimes hard to get a good overview of the results, most notably because notation differs quite heavily between different papers. We therefore provide all statements we require in our notation and cite the original source; in case small changes are required, we explain briefly how to obtain the results from the literature.

\begin{proposition}\label{lem:LY}
There is a numerical constant $C$ such that, for every $P\in\R^3$, every $0<\Lambda<S<\infty$, and every $\xi\in\cQ$,
\begin{equation}\label{eq:LY}
 |\Phi(g_S-g_\Lambda)[\xi]|
 \le C\sqrt\alpha\,(\Lambda^{-1}-S^{-1})^{1/2}
       \norm{(K(P)+1)^{1/2}\xi}^2.
\end{equation}
The constant is independent of $P,\Lambda,S$ and $\alpha$.
\end{proposition}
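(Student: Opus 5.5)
The plan is to prove \eqref{eq:LY} by the Lieb--Yamazaki commutator method, carried out directly in the fiber picture. Write $v=g_S-g_\Lambda=g\one_{\{\Lambda<|k|\le S\}}$; this function is square integrable, so $\Phi(v)$ is an honest field operator on $\cQ$. The idea is to express $v$ as a commutator with the relative momentum $P-\Pf$, which converts the ultraviolet tail into a gain of one power of $|k|$. Concretely, for each component $j=1,2,3$ set
\[
 w_j(k)=\frac{k_j}{|k|^2}v(k).
\]
From $[\Pf,a^*(w)]=a^*(kw)$ and $[\Pf,a(w)]=-a(kw)$ one gets
\[
 \sum_{j}\Bigl[(P-\Pf)_j,\;a(w_j)-a^*(w_j)\Bigr]=a(v)+a^*(v)=\Phi(v),
\]
up to the sign conventions, which I will fix by checking on $\cF_{\mathrm{fin}}$. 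Since $w_j\in\mathfrak h$, both $a(w_j)$ and $a^*(w_j)$ are $\N^{1/2}$-bounded, so this identity holds as forms on $\cQ$.

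Next I expand the commutator as a form. For $\xi\in\cQ$ this gives
\[
 \Phi(v)[\xi]=2\operatorname{Re}\sum_j\bigl\langle (P-\Pf)_j\xi,\;\bigl(a(w_j)-a^*(w_j)\bigr)\xi\bigr\rangle .
\]
Applying Cauchy--Schwarz to each term and summing over $j$,
\[
 |\Phi(v)[\xi]|\le 2\norm{(P-\Pf)\xi}\Bigl(\sum_j\norm{(a(w_j)-a^*(w_j))\xi}^2\Bigr)^{1/2}.
\]
The standard bounds give $\norm{a^\#(w_j)\xi}\le\norm{w_j}\norm{(\N+1)^{1/2}\xi}$, and $\sum_j\norm{w_j}^2=\int|v|^2|k|^{-2}\,\de k$. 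Then $\norm{(P-\Pf)\xi}^2\le 2K(P)[\xi]$ and $\norm{(\N+1)^{1/2}\xi}^2\le (K(P)+1)[\xi]$, which leaves the numerical integral
\[
 \int_{\Lambda<|k|\le S}\frac{\alpha}{2\pi^2|k|^4}\,\de k=\frac{2\alpha}{\pi}\bigl(\Lambda^{-1}-S^{-1}\bigr).
\]
This produces the factor $\sqrt\alpha(\Lambda^{-1}-S^{-1})^{1/2}$ with a universal constant. The bound is uniform in $P$ because $P$ enters only through $K(P)$.

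The main obstacle I expect is a technical one: justifying the commutator identity on all of $\cQ$ rather than only on a core. I will first prove the identity for $\xi\in\cF_{\mathrm{fin}}\cap\cQ$ with compactly supported components, where every manipulation is explicit. Both sides of the final inequality are continuous in the form norm of $K(P)+1$. The left side is continuous because $\Phi(v)$ is $\N^{1/2}$-form bounded; the right side is continuous by definition of that norm. So the bound extends by density, since such vectors form a core for $K(P)^{1/2}$.
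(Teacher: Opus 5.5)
Your proposal is correct and is essentially the paper's argument. The paper only cites the Lieb--Yamazaki commutator bound of Griesemer--W\"unsch (their Lemma~2.1) with the particle momentum replaced by $P-P_{\mathrm f}$, together with the same integral $\int_{\Lambda<|k|<S}|g(k)|^2|k|^{-2}\,\mathrm{d}k=\frac{2\alpha}{\pi}(\Lambda^{-1}-S^{-1})$. Your write-up supplies the omitted details: the identity $\sum_j[(P-P_{\mathrm f})_j,a(w_j)-a^*(w_j)]=\Phi(v)$ holds with exactly the signs you wrote, the Cauchy--Schwarz step gives the universal constant $8/\sqrt{\pi}$, and the density argument extends the bound to all of $\mathcal Q$.
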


\begin{proof}
    This is the fiber version of \cite[Lemma 2.1]{GW}. The similar argument provided applies with the particle momentum replaced by $P-P_f$; only numerical constants change. The cutoff dependence follows from
    $$\int_{\Lambda < |k|<S} \de k \frac{|g(k)|^2}{|k|^2} = \frac{2\alpha}{\pi}(\Lambda^{-1}-S^{-1}),$$
    showing the claim.
\end{proof}

\begin{proposition}\label{prop:form-comparison}
There are constants $c_0>0$, $C_0<\infty$, and $b>C_0+1$, depending only on $\alpha$, such that
\begin{equation}\label{eq:coercive}
 H_\Lambda(P)\ge c_0K(P)-C_0
\end{equation}
for every $P$ and every sufficiently large $\Lambda$. The same estimate holds for $H(P)$. Furthermore, there are numbers $\eps_\Lambda\to0$, with $0\le\eps_\Lambda\le C_\alpha \Lambda^{-1/2}$, such that
\begin{equation}\label{eq:formcompare}
 |(H(P)-H_\Lambda(P))[\xi]|
 \le\eps_\Lambda\bigl(H_\Lambda(P)[\xi]+b\norm\xi^2\bigr)
\end{equation}
for every $\xi\in\cQ$, uniformly in $P$, and hence
\begin{equation}\label{eq:two-sided}
 (1-\eps_\Lambda)(H_\Lambda(P)+b)
 \le H(P)+b
 \le(1+\eps_\Lambda)(H_\Lambda(P)+b).
\end{equation}
In particular, the cutoff fiber Hamiltonians converge in norm-resolvent sense uniformly in the fiber parameter $P$ to the physical fiber Hamiltonian and
\begin{equation}\label{eq:uniform-energy-convergence}
 \sup_{P\in\R^3}|E_\Lambda(P)-E(P)|\longrightarrow0,
\end{equation}
where the convergence rate is $O_\alpha(\Lambda^{-1/2})$.
\end{proposition}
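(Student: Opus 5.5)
The starting point is the Lieb--Yamazaki type bound of Proposition~\ref{lem:LY}. Form domination of $\Phi(g-g_\Lambda)$ by $K(P)$ then gives all claims by standard KLMN-type manipulations. All constants will be tracked uniformly in $P$, which Proposition~\ref{lem:LY} already provides.

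\emph{Coercivity \eqref{eq:coercive}.} Split $\Phi(g_\Lambda)=\Phi(g_{\Lambda_0})+\Phi(g_\Lambda-g_{\Lambda_0})$ for a fixed $\Lambda_0$. Choose $\Lambda_0$ so large that $C\sqrt\alpha\,\Lambda_0^{-1/2}\le\frac12$. Proposition~\ref{lem:LY} with $S=\Lambda$ (and, for $H(P)$, with $S\to\infty$ in the limiting form) then gives
\[
|\Phi(g_\Lambda-g_{\Lambda_0})[\xi]|\le\tfrac12\bigl(K(P)[\xi]+\|\xi\|^2\bigr).
\]
The regular part is controlled by the elementary Segal field estimate of Section~\ref{subsec:fock-space-preliminaries}:
\[
|\Phi(g_{\Lambda_0})[\xi]|\le\tfrac14\|\cN^{1/2}\xi\|^2+4\|g_{\Lambda_0}\|^2\|\xi\|^2\le\tfrac14K(P)[\xi]+4\|g_{\Lambda_0}\|^2\|\xi\|^2.
\]
Adding the two bounds gives $H_\Lambda(P)\ge\frac14K(P)-C_0$, with $C_0$ depending only on $\alpha$ and uniform in $\Lambda\ge\Lambda_0$ and in $P$. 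The same bound holds for the full form $H(P)$, which is defined as the limit of these forms.

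\emph{Relative form bound \eqref{eq:formcompare}.} Apply Proposition~\ref{lem:LY} with $S\to\infty$:
\[
|(H(P)-H_\Lambda(P))[\xi]|\le C\sqrt\alpha\,\Lambda^{-1/2}\,(K(P)+1)[\xi].
\]
By \eqref{eq:coercive}, $(K(P)+1)[\xi]\le c_0^{-1}(H_\Lambda(P)+C_0+c_0)[\xi]$. Any $b\ge C_0+c_0$ with $b>C_0+1$ therefore gives \eqref{eq:formcompare} with $\eps_\Lambda=C\sqrt\alpha\,c_0^{-1}\Lambda^{-1/2}$. Rearranging $\pm(H-H_\Lambda)[\xi]\le\eps_\Lambda(H_\Lambda+b)[\xi]$ yields \eqref{eq:two-sided} immediately.

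\emph{Consequences.} Taking infima in \eqref{eq:two-sided} gives
\[
(1-\eps_\Lambda)(E_\Lambda(P)+b)\le E(P)+b\le(1+\eps_\Lambda)(E_\Lambda(P)+b).
\]
Since $E_\Lambda(P)\le E_\Lambda(P)+b$ is bounded above uniformly in $P$, via $E_\Lambda(P)\le E_\Lambda(0)+1\le C$, and bounded below by $-C_0$, we get $|E_\Lambda(P)-E(P)|\le\eps_\Lambda\sup_P|E_\Lambda(P)+b|=O(\Lambda^{-1/2})$. This is \eqref{eq:uniform-energy-convergence}. The upper bound $E_\Lambda(P)\le E_\Lambda(0)+1$ is the standard trial-state bound with an added phonon at a far momentum.

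For norm-resolvent convergence, set $T=H(P)+b$ and $T_\Lambda=H_\Lambda(P)+b$, both $\ge1$. The form inequality \eqref{eq:formcompare} says $\|T_\Lambda^{-1/2}(T-T_\Lambda)T_\Lambda^{-1/2}\|\le\eps_\Lambda$, read as forms. Then
\[
T^{-1}-T_\Lambda^{-1}=-T_\Lambda^{-1/2}\bigl[T_\Lambda^{1/2}T^{-1}T_\Lambda^{1/2}\bigr]\,\bigl[T_\Lambda^{-1/2}(T-T_\Lambda)T_\Lambda^{-1/2}\bigr]\,T_\Lambda^{-1/2}.
\]
By \eqref{eq:two-sided}, $\|T_\Lambda^{1/2}T^{-1}T_\Lambda^{1/2}\|\le(1-\eps_\Lambda)^{-1}$. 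Hence $\|T^{-1}-T_\Lambda^{-1}\|\le\eps_\Lambda/(1-\eps_\Lambda)$, uniformly in $P$.

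\emph{Main obstacle.} The delicate point is that $H(P)$ is defined only as a limit of forms, with the common form domain $\cQ$. So Proposition~\ref{lem:LY} must first be used to show that the forms $H_S(P)[\xi]$ are Cauchy on $\cQ$, uniformly on bounded sets of $(K(P)+1)[\xi]$. That makes the limit form closed and bounded below on $\cQ$, and the $S\to\infty$ version of the estimates is then legitimate. Along the way the uniformity in $P$ has to be checked: it holds because $K(P)$ enters only through $\frac12(P-\Pf)^2$ in the Lieb--Yamazaki commutator argument.
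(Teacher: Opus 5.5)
Your proposal is correct and follows the same route as the paper. It uses a Lieb--Yamazaki bound (Proposition~\ref{lem:LY}) plus the elementary Segal-field estimate for \eqref{eq:coercive}, the $S\to\infty$ form of \eqref{eq:LY} combined with \eqref{eq:coercive} for \eqref{eq:formcompare}, rearrangement into \eqref{eq:two-sided}, and infima for the energies. The differences are only in detail: you carry out by hand the coercivity and form-closure argument that the paper imports from the literature, and you bound $E_\Lambda(P)+b$ uniformly via the cutoff trial-state estimate $E_\Lambda(P)\le E_\Lambda(0)+1\le 1$ rather than via $E(P)\le E(0)+1$ from \ref{I:hvz}. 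You also prove the uniform norm-resolvent convergence at $z=-b$ directly from the sandwiched resolvent identity, which the paper instead handles in Proposition~\ref{prop:uniform_resolvent_conv} via operator monotonicity of inversion.
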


\begin{proof}
    The form construction and the lower bound \eqref{eq:coercive} can be derived by applying \cite[Theorem 2.2 and Corollary 2.3]{GW} in each fiber and using \eqref{eq:LY}. Taking $S\to\infty$ in \eqref{eq:LY} yields
    $$|(H_\Lambda(P) - H(P))[\xi]| \le \frac{C_\alpha}{\Lambda^{1/2}}\Vert (K(P)+1)^{1/2}\xi \Vert^2$$
    for any $\xi \in \cQ$. By \eqref{eq:coercive}, $K(P)+1 \le C_1 (H_\Lambda(P)+b)$ uniformly in $P$ and large $\Lambda$, which shows \eqref{eq:formcompare}. To derive \eqref{eq:two-sided}, use $\eqref{eq:formcompare}$, which says that
    $$-\epsilon_\Lambda(H_\Lambda(P)+b) \le H(P)+b - (H_\Lambda(P)+b) \le \epsilon_\Lambda (H_\Lambda(P)+b).$$
    Re-arranging then yields \eqref{eq:two-sided}. Finally, taking infima in \eqref{eq:two-sided} gives the spectral inequalities
    $$(1-\epsilon_\Lambda)(E_\Lambda(P)+b) \le E(P)+b \le (1+\epsilon_\Lambda)(E_\Lambda(P)+b).$$
    Subtracting $E_\Lambda(P)+b$ in the last display yields $|E_\Lambda(P)-E(P)|\ \le \epsilon_\Lambda (E_\Lambda(P)+b)$. But the first inequality in the last display yields
    $$(1-\epsilon_\Lambda)(E_\Lambda(P)+b) \le E(P)+b \le E(0)+b+1.$$
    Thus $|E_\Lambda(P)-E(P)|\ \le \epsilon_\Lambda (E(0)+b+1)/(1-\epsilon_\Lambda),$ showing the claim.
\end{proof}

\begin{proposition}
    \label{prop:uniform_resolvent_conv}
    Let $K \subset \bbR^3$ be compact. Then, for any $z \in \bigcap_{ P \in K} \rho(H(P))$ we have
    $$ \sup\limits_{P \in K} \Vert ( H_\Lambda(P) - z)^{-1} - (H(P) - z)^{-1} \Vert \to 0.$$
\end{proposition}
\begin{proof}
    First choose $b$ as found in  Proposition \ref{prop:form-comparison}. The case of general $z \in \bigcap_{ P \in K} \rho(H(P))$ follows by the resolvent identity once it is verified that $\sup_{P \in K} \Vert (H(P) - z)^{-1}\Vert < \infty$. In order to see this, recall that the family $\{H(P): P \in K\}$ is norm resolvent continuous, which yields the uniform bound. Next, \eqref{eq:two-sided} implies
    $$ \frac{1}{(1+\epsilon_\Lambda)}  ( H_\Lambda(P)+b )^{-1} \le ( H(P)+b )^{-1} \le \frac{1}{(1-\epsilon_\Lambda)} ( H_\Lambda(P)+b )^{-1}.$$
    Since the spectrum of $H(P) + b$ is uniformly bounded from below by some $a >0$ for all $P \in K$, we conclude that
    $$ -\epsilon_\Lambda I \le ( H(P)+b )^{-1} - ( H_\Lambda(P)+b )^{-1} \le \epsilon_\Lambda I.$$
    Because the error is independent of $P$, uniform convergence follows, which was exactly the claim.
\end{proof}

Define
\[
 \mathfrak C=\bigoplus_{n=0}
                C^\infty_{c,\mathrm{sym}}((\mathbb R^3)^n),
\]
where the zeroth summand is $\mathbb C$. Thus $\mathfrak C$ consists of vectors with finitely many nonzero components, each smooth, compactly supported, and symmetric.

\begin{lemma}\label{lem:finite_cutoff_core}
For every $P\in\mathbb R^3$ and finite $\Lambda>0$, $H_\Lambda(P)$ is self-adjoint on $\mathcal D(K(P))$. The graph norms of $H_\Lambda(P)$ and $K(P)$ are equivalent, and $\mathfrak C$ is an operator core for both operators.
\end{lemma}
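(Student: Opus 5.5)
The plan is to treat $\Phi(g_\Lambda)$ as a perturbation of the free fiber operator $K(P)=\frac12(P-\Pf)^2+\N$. First, $K(P)$ is self-adjoint on its natural domain $\cD(K(P))=\cD(|\Pf|^2)\cap\cD(\N)$. Indeed, $\Pf$ and $\N$ are commuting second-quantized multiplication operators, so $K(P)$ acts on each $n$-phonon sector as multiplication by $\frac12(P-\sum_j k_j)^2+n\ge0$, which is a real measurable function.

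The key step is relative boundedness of $\Phi(g_\Lambda)$ with respect to $K(P)$, with relative bound zero. Since $g_\Lambda\in\mathfrak h$, the standard estimates $\|a(g_\Lambda)\xi\|\le\|g_\Lambda\|\|\N^{1/2}\xi\|$ and $\|a^*(g_\Lambda)\xi\|\le\|g_\Lambda\|\|(\N+1)^{1/2}\xi\|$ from Section~\ref{subsec:fock-space-preliminaries} apply. Combined with $\N\le K(P)$, they give
\[
\|\Phi(g_\Lambda)\xi\|\le 2\|g_\Lambda\|\,\|(\N+1)^{1/2}\xi\|\le \eps\|K(P)\xi\|+C_\eps\|g_\Lambda\|^2\|\xi\|,
\]
for every $\eps>0$ and $\xi\in\cD(K(P))$. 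The last inequality uses $\|(\N+1)^{1/2}\xi\|^2=\ip{\xi}{(\N+1)\xi}\le\|(\N+1)\xi\|\|\xi\|$ and Young's inequality. Kato--Rellich then shows that $H_\Lambda(P)$ is self-adjoint on $\cD(K(P))$. The same relative bound applied in both directions yields equivalence of the graph norms: $\|K\xi\|\le\|H_\Lambda\xi\|+\|\Phi\xi\|$, and absorbing $\eps\|K\xi\|$ into the left-hand side gives the reverse inequality.

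It remains to show that $\mathfrak C$ is a core. By equivalence of the graph norms, it suffices to show this for $K(P)$. On each sector $\cF^{(n)}$, $K(P)$ is multiplication by the continuous, polynomially bounded function $w_n(\mathbf k)=\frac12(P-\sum k_j)^2+n$. Smooth compactly supported symmetric functions are dense in $L^2_{\mathrm{sym}}$ with respect to the weighted norm $\|(1+w_n)\cdot\|$. To see this, first truncate in momentum, which gives convergence by dominated convergence. Then mollify with a symmetric mollifier; since $w_n$ is bounded on compacts, $L^2$ convergence of the mollified functions implies convergence in the weighted norm. Symmetry is preserved because both the cutoff and the mollifier can be chosen permutation invariant. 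Finally, truncating to finitely many sectors converges in the graph norm because $\sum_n\|(1+w_n)\xi^{(n)}\|^2<\infty$.

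The main obstacle, mild as it is, is the density argument on each sector in the weighted norm while keeping symmetry. Everything else is a routine application of Kato--Rellich with the $\N^{1/2}$ bounds on $a(g_\Lambda)$ and $a^*(g_\Lambda)$.
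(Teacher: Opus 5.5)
Your proof is correct and follows the paper's route: $g_\Lambda\in\mathfrak h$ makes $\Phi(g_\Lambda)$ infinitesimally $K(P)$-bounded, Kato--Rellich then gives self-adjointness on $\mathcal D(K(P))$, and the same relative bound gives equivalence of the graph norms. The one difference is in the core statement. The paper cites \cite{Moller}, whose core is contained in $\mathfrak C$, whereas you prove directly that $\mathfrak C$ is a core for $K(P)$ (truncation, symmetric mollification, finitely many sectors) and transfer this via the graph-norm equivalence, which is a valid self-contained substitute.
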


\begin{proof}
    As $g_\Lambda$ is an $L^2$ function, $\Phi(g_\Lambda)$ is infinitesimally operator bounded by $K(P)$. The self-adjointness and core assertions then follow by Kato-Rellich, as for example shown in \cite{Moller}; the core used there is contained in $\mathfrak C \subseteq \cD(K(P))$. The same operator-bound gives equivalences of graph norms.
\end{proof}

 \begin{lemma}\label{lem:finite_cutoff_commutator}
For every $\Xi\in\mathfrak C$,
\begin{equation}\label{eq:finite_cutoff_commutator}
 a(k)H_\Lambda(P)\Xi
       =\bigl(H_\Lambda(P-k)+1\bigr)a(k)\Xi+g_\Lambda(k)\Xi
\end{equation}
as an identity in $L^2_{\mathrm{loc}}(\mathbb R^3_k;\mathcal F)$, and hence for almost every $k$. In particular, $a(k)\Xi\in\mathfrak C$ for every $k$ when the smooth component representatives are used, and $H_\Lambda(P)\Xi\in\mathcal D(\mathcal N^{1/2})$.
\end{lemma}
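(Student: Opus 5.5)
The identity \eqref{eq:finite_cutoff_commutator} is a finite-dimensional, componentwise computation, and I would check it separately for the free part $K(P)$ and for the field $\Phi(g_\Lambda)$. Fix $\Xi\in\mathfrak C$. Only finitely many components are nonzero, each is smooth, compactly supported and symmetric, and $g_\Lambda\in\mathfrak h$. Hence $H_\Lambda(P)\Xi$ is given by the explicit componentwise formulas, and every expression below is an honest function of $(k,k_1,\ldots,k_m)$. The resulting identity will hold pointwise for $k\neq 0$, and therefore in $L^2_{\mathrm{loc}}(\R^3_k;\F)$ once both sides are shown to be locally square integrable.

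\textbf{Step 1: the free part.} On the $(m+1)$-sector, \eqref{eq:annihilation_components} gives
\[
\bigl(a(k)K(P)\Xi\bigr)^{(m)}(k_1,\ldots,k_m)=\sqrt{m+1}\Bigl[\tfrac12\bigl(P-k-\textstyle\sum_j k_j\bigr)^2+m+1\Bigr]\Xi^{(m+1)}(k,k_1,\ldots,k_m).
\]
The right-hand side is exactly $\bigl((K(P-k)+1)a(k)\Xi\bigr)^{(m)}$. So the kinetic term picks up the shift $P\mapsto P-k$, and the number operator contributes the $+1$.

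\textbf{Step 2: the field.} For the annihilation part, the symmetry of $\Xi^{(m+2)}$ gives $a(k)a(g_\Lambda)\Xi=a(g_\Lambda)a(k)\Xi$ directly. For the creation part, I would expand $(a^*(g_\Lambda)\Xi)^{(m+1)}(k,k_1,\ldots,k_m)$ using the formula for $a^*$ and multiply by $\sqrt{m+1}$. The term in which $g_\Lambda$ is evaluated at $k$ yields $g_\Lambda(k)\Xi^{(m)}$. The remaining $m$ terms reassemble into $(a^*(g_\Lambda)a(k)\Xi)^{(m)}$, where one checks that the prefactors $\sqrt{m+1}/\sqrt{m+1}$ and $\sqrt m/\sqrt m$ match. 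Adding Steps 1 and 2 gives \eqref{eq:finite_cutoff_commutator} pointwise.

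\textbf{Step 3: regularity claims.} For each fixed $k$, the function $a(k)\Xi$ has finitely many components, each smooth, compactly supported and symmetric, so $a(k)\Xi\in\mathfrak C$. Local square integrability of the right-hand side in $k$ follows from three facts: $a(k)\Xi$ is bounded in $\F$ uniformly on compacts, $H_\Lambda(P-k)$ acts on $\mathfrak C$ with coefficients polynomial in $k$, and $g_\Lambda\in L^2$. Finally, $H_\Lambda(P)\Xi\in\cD(\N^{1/2})$ needs no separate argument. Since $H_\Lambda(P)$ changes the particle number by at most one, $H_\Lambda(P)\Xi$ has finitely many nonzero components, and each lies in $L^2$.

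\textbf{Main obstacle.} The only delicate point is the bookkeeping of combinatorial prefactors in the creation term of Step 2. If I get it wrong, the relation fails by a factor depending on $m$. I would therefore verify it first on the sectors $m=0$ and $m=1$ before writing out the general case.
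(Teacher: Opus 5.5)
Your proposal is correct and follows essentially the same route as the paper. Both arguments verify $a(k)K(P)\Xi=(K(P-k)+1)a(k)\Xi$ and $a(k)\Phi(g_\Lambda)\Xi=\Phi(g_\Lambda)a(k)\Xi+g_\Lambda(k)\Xi$ componentwise on $\mathfrak C$ and add them, with the regularity claims following from finite particle number and smooth compact support; your prefactor bookkeeping in the creation term is right, and the paper additionally notes the identity holds in all of $L^2(\mathbb R^3;\mathcal F)$, since $a(k)\Xi$ vanishes for $k$ outside a compact set and $g_\Lambda\in L^2$.
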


\begin{proof}
    The component formula for $a(k)$ and standard commutation relations give for $\xi \in \mathfrak C$
    $$a(k)K(P) \xi = (K(P-k) +1)a(k) \xi, \qquad a(k) \Phi(g_\Lambda) \xi = \Phi(g_\Lambda) a(k) \xi + g_\Lambda (k) \xi.$$
    Adding both equations up yields the formula \ref{eq:finite_cutoff_commutator}. Smoothness and compact support yield $a(k) \xi \in \mathfrak C$. Moreover, $H_\Lambda(P)\xi$ has only finitely many non-zero components, so the vector is contained in $\cD(\cN^{1/2})$. Finite particle number, compact support and $g_\Lambda \in L^2$ show that the claimed identity holds even in $L^2(\bbR^3;\cF)$.
\end{proof}

\begin{lemma}
\label{lem:finite_cutoff_commutator_gs}
    Let $0<\Lambda<\infty$ and let $\psi_{\Lambda,q}$ be a normalized
    ground state of $H_\Lambda(q)$. Suppose that $d_{\Lambda,q}=E_\Lambda(0)+1-E_\Lambda(q)>0$.
    Then
    \[
        a(k)\psi_{\Lambda,q}
        =
        -g_\Lambda(k)
        \bigl(H_\Lambda(q-k)+1-E_\Lambda(q)\bigr)^{-1}
        \psi_{\Lambda,q}
    \]
    as an identity in $L^2(\mathbb R^3;\mathcal F)$. In particular, $\|a(k)\psi_{\Lambda,q}\|
        \le \frac{|g_\Lambda(k)|}{d_{\Lambda,q}}$
    for almost every $k\in\mathbb R^3$.
\end{lemma}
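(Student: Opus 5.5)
The plan is to derive the identity from the finite-cutoff commutation relation of Lemma~\ref{lem:finite_cutoff_commutator}, tested against $\psi_{\Lambda,q}$. Then we solve for $a(k)\psi_{\Lambda,q}$ using the invertibility of $H_\Lambda(q-k)+1-E_\Lambda(q)$.

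\emph{Invertibility.} By \ref{I:gross}, $H_\Lambda(q-k)\ge E_\Lambda(q-k)\ge E_\Lambda(0)$. Hence
\[
H_\Lambda(q-k)+1-E_\Lambda(q)\ge d_{\Lambda,q}>0
\]
for every $k$. The inverse is therefore a bounded operator with norm at most $d_{\Lambda,q}^{-1}$, and it depends measurably on $k$. Measurability holds because $k\mapsto H_\Lambda(q-k)$ is norm-resolvent continuous: it is a polynomial perturbation of $K$ on the common domain $\mathcal D(K(0))$.

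\emph{Weak identity.} Fix $\Xi\in\mathfrak C$ and a test function $\varphi\in C_c^\infty(\R^3)$, and consider
\[
\int \de k\,\overline{\varphi(k)}\,\langle a(k)\Xi,\ (H_\Lambda(q)-E_\Lambda(q))\psi_{\Lambda,q}\rangle=0 .
\]
Move $H_\Lambda(q)$ onto $\Xi$ by self-adjointness; this is legitimate since $\Xi\in\mathcal D(H_\Lambda(q))$. More precisely, write the pairing as $\langle a^*(\varphi)\Xi, (H_\Lambda(q)-E_\Lambda(q))\psi_{\Lambda,q}\rangle$ and note that $a^*(\varphi)\Xi\in\mathfrak C$. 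Commuting $a^*(\varphi)$ past $H_\Lambda(q)$ via the adjoint form of \eqref{eq:finite_cutoff_commutator} gives
\[
\int \de k\,\overline{\varphi(k)}\Big[\big\langle (H_\Lambda(q-k)+1-E_\Lambda(q))\,\Xi_k,\ a(k)\psi_{\Lambda,q}\big\rangle+g_\Lambda(k)\langle \Xi_k,\psi_{\Lambda,q}\rangle\Big]=0 ,
\]
where $\Xi_k$ denotes the appropriately shifted test vector. Here one uses $\psi_{\Lambda,q}\in\mathcal D(K(q))\subset\mathcal D(\cN)$, so that $a(k)\psi_{\Lambda,q}\in L^2(\R^3;\cF)$. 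Cleaner still, I would work directly with the vector-valued function $k\mapsto a(k)\psi_{\Lambda,q}$. For a.e.\ $k$ one shows $a(k)\psi_{\Lambda,q}\in\mathcal Q$ in an integrated sense, and then
\[
(H_\Lambda(q-k)+1-E_\Lambda(q))\,a(k)\psi_{\Lambda,q}=-g_\Lambda(k)\psi_{\Lambda,q}
\]
holds weakly, tested against $\mathfrak C$, for a.e.\ $k$. Since $\mathfrak C$ is a core for $H_\Lambda(q-k)$ (Lemma~\ref{lem:finite_cutoff_core}), the weak identity upgrades to membership in the domain together with the equation. Applying the bounded inverse then yields the claimed formula in $L^2(\R^3;\cF)$.

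\emph{Main obstacle.} The delicate step is the regularity needed to pass from the core identity to the ground state, i.e.\ showing that $a(k)\psi_{\Lambda,q}$ lies in the form domain of $H_\Lambda(q-k)$ for a.e.\ $k$, or circumventing this. To circumvent it, I would approximate $\psi_{\Lambda,q}$ in the graph norm of $K(q)$ by $\Xi_n\in\mathfrak C$ (possible by Lemma~\ref{lem:finite_cutoff_core}), and set $\rho_n=(H_\Lambda(q)-E_\Lambda(q))\Xi_n\to0$. Applying \eqref{eq:finite_cutoff_commutator} to $\Xi_n$ and inverting gives
\[
a(k)\Xi_n=(H_\Lambda(q-k)+1-E_\Lambda(q))^{-1}\big(a(k)\rho_n-g_\Lambda(k)\Xi_n\big).
\]
We have $a(\cdot)\Xi_n\to a(\cdot)\psi_{\Lambda,q}$ in $L^2(\R^3;\cF)$, since graph convergence in $K$ controls $\cN^{1/2}$. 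The delicate term is $a(\cdot)\rho_n$: it converges to $0$ only in a weaker sense, because $\rho_n\to0$ in $\cF$ but not necessarily in $\mathcal D(\cN^{1/2})$. I would handle it by pairing with a fixed test vector. The bounded inverse can be moved to the test side, and $\int\overline{\varphi}\,a(k)\rho_n\,\de k=a(\varphi)\rho_n$ is tested against $(\ldots)^{-1}$-images that lie in $\mathcal D(\cN^{1/2})$. This gives distributional convergence, which suffices to identify the limit.

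The norm bound $\|a(k)\psi_{\Lambda,q}\|\le|g_\Lambda(k)|/d_{\Lambda,q}$ then follows immediately from the inverse bound.
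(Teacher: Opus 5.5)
Your proof is correct, but it takes a different route from the paper. The paper gives no derivation of its own. It quotes M{\o}ller's pull-through formula \cite[Proposition~B.5]{Moller} at the spectral parameter $z=E_\Lambda(q)-1$. This point lies strictly below $\spec H_\Lambda(q)$, so the relevant resolvent is bounded and fixes $\psi_{\Lambda,q}$, and the lemma follows by rearranging and inverting with $d_{\Lambda,q}>0$. That is shorter and leaves all domain questions to the cited source, though the reader must match M{\o}ller's setting and conventions. Your route is self-contained: it uses only Lemma~\ref{lem:finite_cutoff_core}, the core identity \eqref{eq:finite_cutoff_commutator} and \ref{I:gross}. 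The price is that you must control the residual $R(k)a(k)\rho_n$, where $R(k)=(H_\Lambda(q-k)+1-E_\Lambda(q))^{-1}$.

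Your approximation argument does close, but state that step precisely. Since $R(k)$ depends on $k$, you are not testing $a(\varphi)\rho_n$. Rather, $\int\de k\,\overline{\varphi(k)}\ip{\chi}{R(k)a(k)\rho_n}=\ip{\Theta}{\rho_n}$ with $\Theta=\int\de k\,\varphi(k)\,a^*(k)R(k)\chi$. This vector satisfies $\norm{\Theta}^2\le\int\de k\,|\varphi(k)|^2\norm{(\cN+1)^{1/2}R(k)\chi}^2<\infty$. Finiteness holds because $\cN\le2(H_\Lambda(q-k)+2\norm{g_\Lambda}^2)$ gives $\sup_k\norm{\cN^{1/2}R(k)}<\infty$. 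Hence the pairing tends to $0$ because $\rho_n\to0$ in $\cF$. The other two terms converge in $L^2(\R^3;\cF)$ and the vectors $\varphi\otimes\chi$ are total, so the identity follows.

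Your first ``weak identity'' sketch can simply be dropped. It misstates the pairing: it should be with $a^*(\varphi)\Xi$, and no shifted $\Xi_k$ is needed. It would also require a countable graph-dense subset of $\mathfrak C$ to obtain a null set in $k$ independent of $\Xi$.
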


\begin{proof}
    This is \cite[Proposition~B.5]{Moller} with $z=E_\Lambda(q)-1$.
\end{proof}

\end{document}